\newacro{ses}[SES]{smallest enclosing sphere}
\newtheorem{theorem}{Theorem}
\newtheorem{lemma}[theorem]{Lemma}
\newtheorem{definition}{Definition}
\def\Fsync/{$\mathcal{F}$\textsc{sync}}
\def\Gathering/{\textsc{Gathering}}
\def\chainForm/{\textsc{Chain-Formation}}
\def\patternForm/{\textsc{Pattern Formation}}
\def\Point/{\textsc{Point}}
\def\UniformCircle/{\textsc{Uniform Circle}}
\def\gtc/{\textsc{Go-To-The-Center}}
\def\gtcShort/{\textsc{GTC}}
\def\mobs/{\textsc{Move-on-Bisector}}
\def\gtcThreeD/{\textsc{3d-Go-To-The-Center}}
\def\gtcThreeDShort/{\textsc{3d-GTC}}
\def\gtcThreeDCont/{\textsc{Continuous-3d-Go-To-The-Center}}
\def\gtcThreeDContShort/{\textsc{Cont-3d-GTC}}
\def\tangentialNormal/{tangential-normal}
\def\moveOnAngleMinimizer/{\textsc{Move-on-Angle-Minimizer}}
\def\look/{\texttt{Look}}
\def\compute/{\texttt{Compute}}
\def\move/{\texttt{Move}}
\def\LCMlong/{\textsc{Look-Compute-Move}}
\def\LCM/{LCM}
\def\chainForm/{\textsc{Chain-Formation}}
\def\Gathering/{\textsc{Gathering}}
\def\maxForm/{\textsc{Max-Chain-Formation}}
\def\Oblot/{\ensuremath{\mathcal{OBLOT}}}
\def\Luminous/{\ensuremath{\mathcal{LUMINOUS}}}
\newcommand{\fsync}{\textsc{$\mathcal{F}$sync}}
\newcommand{\ssync}{\textsc{$\mathcal{S}$sync}}
\newcommand{\async}{\textsc{$\mathcal{A}$sync}}
\newcommand{\viewingRadiusConstant}{\ensuremath{4}}
\newcommand{\runState}{run}
\newcommand{\runInit}{Init-\-Sta\-te}
\newcommand{\jointRunInit}{\text{Joint Init-State}}
\newcommand{\hop}{\text{hop}}
\newcommand{\jointHop}{\text{joint hop}}
\newcommand{\Hop}{\text{Hop}}
\newcommand{\JointHop}{\text{Joint Hop}}
\newcommand{\shorten}{\text{shorten}}
\newcommand{\jointShorten}{\text{joint shorten}}
\newcommand{\Shorten}{\text{Shorten}}
\newcommand{\JointShorten}{\text{Joint Shorten}}
\newcommand{\merge}{\text{merge}}
\newcommand{\jointMerge}{\text{joint merge}}
\newcommand{\Merge}{\text{Merge}}
\newcommand{\JointMerge}{\text{Joint Merge}}
\newcommand{\schlaefli}{\text{star-\-op\-er\-a\-tion}}
\newcommand{\bisector}{bisector-op\-er\-a\-tion}
\newcommand{\Schlaefli}{\text{Star-\-Op\-er\-a\-tion}}
\newcommand{\Bisector}{Bisector-Op\-er\-a\-tion}
\newcommand{\direction}[2]{\ensuremath{r({#1}, {#2 +1})}}
\newcommand{\shortenAngle}{\ensuremath{\frac{7}{8}\pi}}
\newcommand{\bisectorConstant}{\ensuremath{\frac{1}{5}}}
\newcommand{\shortenConstant}{\ensuremath{0.019}}
\DeclareMathOperator{\sign}{sgn}
\def\@parfont{\bfseries}
\title{Gathering a Euclidean Closed Chain of Robots in Linear Time
\thanks{This paper is a full version of the brief announcement presented at SSS 2020.
}}
\author[]{Jannik Castenow}
\author[]{Jonas Harbig}
\author[]{Daniel Jung}
\author[]{Till Knollmann}
\author[]{Friedhelm Meyer auf der Heide}
\affil[]{Heinz Nixdorf Institute and Department of Computer Science\\
	Paderborn University,
 \{janniksu, jharbig, jungd, tillk, fmadh\}@mail.upb.de}
\date{}
\begin{document}

\maketitle

\begin{abstract}
	This work focuses on the following question related to the \Gathering/ problem of $n$ autonomous, mobile robots in the Euclidean plane:
	Is it possible to solve \Gathering/ of robots that do not agree on any axis of their coordinate systems (disoriented robots) and see other robots only up to a constant distance (limited visibility)
	in $o(n^2)$ fully synchronous rounds (the \fsync{} scheduler)?
	The best known algorithm that solves \Gathering/ of disoriented robots with limited visibility  in the \Oblot/ model (oblivious robots) needs $\Theta\left(n^2\right)$ rounds \cite{DBLP:conf/spaa/DegenerKLHPW11}.
	The lower bound for this algorithm even holds in a simplified closed chain model, where each robot has exactly two neighbors and the chain connections form a cycle.
   The only existing algorithms achieving a linear number of rounds for disoriented robots assume robots that are located on a two dimensional grid \cite{gridChain} and \cite{gridOptimal}.
	Both algorithms make use of locally visible lights to communicate state information (the \Luminous/ model).

	In this work, we show for the closed chain model, that $n$ disoriented robots with limited visibility in the Euclidean plane can be gathered in $\Theta\left(n\right)$ rounds assuming the \Luminous/ model.
	The lights are used to initiate and perform so-called runs along the chain.
	For the start of such runs, locally unique robots need to be determined.
	In contrast to the grid \cite{gridChain}, this is not possible in every configuration in the Euclidean plane.
	Based on the theory of isogonal polygons by Branko Gr\"unbaum, we identify the class of  isogonal configurations in which -- due to a high symmetry -- no such locally unique robots can be identified.
	Our solution combines two algorithms: The first one gathers isogonal configurations; it works without any lights.
	The second one works for non-isogonal configurations; it identifies locally unique robots to start runs, using a constant number of lights.
	Interleaving these algorithms solves the \Gathering/ problem in $\mathcal{O}\left(n\right)$ rounds.
\end{abstract}

%%%%%%%%%%%%%%%%%%%%%%%%%%%%%%%%%%%%%%%%%%%%%%%%%%%%%%%%%%%%%%%%%%%%%%%%%%%%%%%%%%%%%%%%%%%%%%%%%%%
%%%%%%%%%%%%%%%%%%%%%%%%%%%%%%%%%%%%%%%%%%%%%%%%%%%%%%%%%%%%%%%%%%%%%%%%%%%%%%%%%%%%%%%%%%%%%%%%%%%

\section{Introduction}
The \Gathering/ problem is one of the most studied and fundamental problems in the area of distributed computing by mobile robots.
\Gathering/ requires a set of initially scattered point-shaped robots to meet at the same (not predefined) position.
This problem has been studied under several different robot and time models all having in common that the capabilities of the individual robots are very restricted.
The central questions among all these models are: Which capabilities of robots are needed to solve the \Gathering/ problem and how do these capabilities influence the runtime?
While the question about solvability is quite well understood nowadays, much less is known concerning the question how the capabilities influence the runtime.
%All known algorithms assuming disoriented robots (no agreement on the coordinate systems) with limited visibility require $\Theta(n^2)$ rounds in the \Oblot/ model assuming the fully synchronous \fsync{} scheduler (robots operate in fully synchronized rounds).
The best known algorithm -- \textsc{Go-To-The-Center} (see \cite{localgathering} and \cite{DBLP:conf/spaa/DegenerKLHPW11} for the runtime analysis) -- for $n$ disoriented robots (no agreement on the coordinate systems) in the Euclidean plane with limited visibility in the \Oblot/ model, assuming the \fsync{} scheduler (robots operate in fully synchronized Look-Compute-Move cycles), requires $\Theta(n^2)$ rounds.
%This model has a set of  fundamental features together with some properties (such as the grade of synchronization or the orientation) that are not predefined and offer a range of freedom in the design of concrete models.
The fundamental features of the \Oblot/ model are that the robots are \emph{autonomous} (they are not controlled by a central instance), \emph{identical} and \emph{anonymous} (all robots are externally identical and do not have unique identifiers), \emph{homogeneous} (all robots execute the same algorithm), \emph{silent} (robots do not communicate directly) and  \emph{oblivious} (the robots do not have any memory of the past).

The best lower bound for \Gathering/ $n$ disoriented robots with limited visibility in the \Oblot/ model, assuming the \fsync{} scheduler, is the trivial  $\Omega(n)$ bound.
Thus, more concretely the above mentioned question can be formulated as follows:
Is it possible to gather $n$ disoriented robots with limited visibility in the Euclidean plane in $o(n^2)$ rounds, and which capabilities do the robots need?

The $\Omega(n^2)$ lower bound for \textsc{Go-to-the-Center} examines an initial configuration where the robots form a cycle with neighboring robots having a constant distance, the \emph{viewing  radius}.
It is shown that  \textsc{Go-to-the-Center} takes $\Omega(n^2)$ rounds until the robots start to see more robots than their initial neighbors.
Thus, the lower bound holds even in a simpler closed chain model, where the robots form an arbitrarily winding closed chain, and each robot sees exactly its  two direct neighbors.

Our main result is that this quadratic lower bound can be beaten for the closed chain model, if we extend the \Oblot/ model by allowing each robot a constant number of visible lights, which can be seen by the neighboring robots, i.e. if we allow \Luminous/ robots, compare \cite{luminous}.
For this model we present an algorithm with linear runtime.
In the algorithm, the lights are used to initiate and perform so-called runs along the chain.
For the start of such runs, locally unique robots need to be determined.
In contrast to the grid \cite{gridChain}, this is not possible in every configuration in the Euclidean plane.
Based on the theory of isogonal polygons by Branko Grünbaum \cite{grunbaum1994metamorphoses}, we identify the class of  isogonal configurations in which -- due to a high symmetry -- no such locally unique robots can be identified.
Our solution combines two algorithms: The first one gathers isogonal configurations; it works without any lights.
The second one works for non-isogonal configurations; it identifies locally unique robots to start runs, using a constant number of lights.
Interleaving these algorithms solves the \Gathering/ problem in $\mathcal{O}\left(n\right)$ rounds.

\paragraph{Related Work}

A lot of research is devoted to \Gathering/ in less synchronized settings (\async{} and \ssync{}) mostly combined with an unbounded viewing radius.
Due to space constraints, we omit the discussion here, for more details see e.g.\ \cite{localgathering,DBLP:journals/siamcomp/CieliebakFPS12,DBLP:conf/sirocco/CieliebakP02,DBLP:journals/siamcomp/CohenP05,DBLP:journals/tcs/FlocchiniPSW05,DBLP:journals/tcs/Prencipe07,DBLP:journals/siamcomp/SuzukiY99}.
Instead, we focus on results about the synchronous setting (\fsync{}), where algorithms as well as runtime bounds are known.
For a comprehensive overview over models, algorithms and analyses, we refer the reader to the recent survey \cite{DBLP:series/lncs/Flocchini19}.
In the \Oblot/ model, there is the \textsc{Go-To-The-Center} algorithm\ \cite{localgathering} that solves \Gathering/ of disoriented robots with local visibility in $\Theta\left(n^2\right)$ rounds assuming  the \fsync{} scheduler \cite{DBLP:conf/spaa/DegenerKLHPW11}.
The same runtime can be achieved for robots located on a two dimensional grid \cite{gridNoLights}.
It is conjectured that both algorithms are asymptotically optimal and thus $\Omega \left(n^2\right)$ is also a lower bound for \emph{any} algorithm that solves \Gathering/ in this model.
Interestingly, a lower bound of $\Omega(D_G^2)$ has been shown for any \emph{conservative} algorithm, i.e.\ an algorithm that only increments the edge set of the visibility graph.
Here, $D_G \in \Theta \left(\sqrt{\log n}\right)$ denotes the diameter of the initial visibility graph
\cite{DBLP:journals/tcs/IzumiKPT18}.
On the first sight, being conservative seems to be a significant restriction.
However, all known algorithms solving \Gathering/ with limited visibility are indeed conservative.
It is open whether this lower bound can be extended to diameters of larger size.

Faster runtimes could so far only be achieved by assuming agreement on one or two axes of the local coordinate systems or considering the \Luminous/ model.
In \cite{DBLP:conf/sss/PoudelS17}, an universally optimal algorithm with runtime $\Theta\left(D_E\right)$ for robots in the Euclidean plane assuming one-axis agreement in the \Oblot/ model is introduced.
$D_E$ denotes the Euclidean diameter (the largest distance between any pair of robots) of the initial configuration.
Beyond the one-axis agreement, their algorithm crucially depends on the distinction between the viewing range of a robot and its \emph{connectivity range}:
Robots only consider other robots within their connectivity range as their neighbors but can see farther beyond.
Notably, this algorithms also works under the \async{} scheduler.

Assuming disoriented robots, the algorithms that achieve a runtime of $o(n^2)$ are developed under the \Luminous/ model and assume robots that are located on a two dimensional grid:
There exist two algorithms having an asymptotically optimal runtime of $\mathcal{O}\left(n\right)$; one algorithm for \emph{closed chains} \cite{gridChain} and another one for arbitrary (connected) swarms \cite{gridOptimal}.
Following the notion of \cite{gridChain}, we consider a closed chain of robots in this work.
In a closed chain, the robots form a winding, possibly self-intersecting, chain where the distance between two neighbors is upper bounded by the connectivity range and the robots can see a constant distance along the chain in each direction, denoted as the viewing range.
The main difference between a closed chain and arbitrary (connected) swarms is that in the closed chain, a robot only observes a constant number of its direct neighbors while in arbitrary swarms all robots in the viewing range of a robot are considered.
Interestingly, the lower bound of the \textsc{Go-To-The-Center} algorithm \cite{DBLP:conf/spaa/DegenerKLHPW11}, holds also for the closed chain model.

\paragraph{Our Contribution}
In this work, we give the first asymptotically optimal algorithm that solves \Gathering/ of disoriented robots in the Euclidean plane.
More precisely, we show that a closed chain of disoriented robots with limited visibility located in the Euclidean plane can be gathered in $\mathcal{O} \left(n\right)$ rounds assuming the \Luminous/  model with a constant number of lights and the \fsync{} scheduler.
This is asymptotically optimal, since if the initial configuration forms a straight line at least $\Omega(n)$ rounds are required.

\begin{theorem}
	For any initially connected closed chain of disoriented robots in the Euclidean plane with a viewing range of $4$ and a connectivity range of $1$, \textnormal{\Gathering/} can be solved in $\mathcal{O}\left(n\right)$ rounds assuming the \textnormal{\fsync{} } scheduler and a constant number of visible lights.
	The number of rounds is asymptotically optimal.
\end{theorem}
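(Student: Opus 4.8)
The plan is to classify every closed-chain configuration as \emph{isogonal} or \emph{non-isogonal} and to handle the two cases by separate subroutines that are then interleaved. Throughout, I label the robots $r_0,\dots,r_{n-1}$ cyclically, so $r_i$ sees $r_{i-\viewingRadiusConstant},\dots,r_{i+\viewingRadiusConstant}$ (indices mod $n$), consecutive distances are at most the connectivity range $1$, and in one \LCMlong/ cycle a robot may move anywhere in the convex hull of its visible neighbours while keeping the chain connected. As a progress measure I use the pair $(m,L)$, where $m\le n$ counts the robots not yet co-located with a neighbour and $L=\sum_i\lVert r_{i+1}-r_i\rVert\le n$ is the chain length; once the diameter drops below the viewing range, one more round completes \Gathering/. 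It then suffices to show that, in either regime, within every block of $\mathcal{O}(m)$ rounds either \Gathering/ is reached or $m$ falls to $(1-c)\,m$ for a fixed $c>0$; summing the resulting geometric series over successive blocks yields the claimed $\mathcal{O}(n)$ bound.

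\paragraph{Non-isogonal configurations.}
Here I would first establish, via Gr\"unbaum's classification of isogonal polygons, that a closed chain is non-isogonal \emph{iff} some robot has, inside its $\mathcal{O}(1)$-hop view, a local pattern of side lengths and turning angles not shared by its visible neighbours --- an isogonal chain being precisely the symmetric obstruction to the existence of such a \emph{locally unique} robot. At each locally unique robot a \emph{run} starts: a token, encoded in $\mathcal{O}(1)$ light colours, that advances one hop per round along the chain, and as it passes a robot that robot performs a \mobs/-style shortening move. I would then show (i) a run terminates after $\mathcal{O}(n)$ rounds; (ii) over one run a constant fraction of the robots become co-located with a neighbour, so $m$ shrinks to $(1-c)\,m$; and (iii) tokens started concurrently at distinct locally unique robots do not interfere destructively --- two meeting tokens annihilate, and the stretch of chain between them has by then been contracted, so connectivity survives. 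The lights need only record a token's presence and direction together with a bounded phase counter, so a constant number of colours suffices.

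\paragraph{Isogonal configurations and interleaving.}
When the chain is isogonal no locally unique robot exists, so instead I would have all robots synchronously run a light-free contraction rule tailored to isogonal polygons, exploiting that (by Gr\"unbaum) such a polygon is pinned down by a bounded number of parameters --- essentially alternating side lengths and a rotation about a common centre. I would design the rule so that, while the configuration stays isogonal, it is driven toward degenerate isogonal shapes in which a constant fraction of neighbouring robot pairs coincide, so that again $m$ drops to $(1-c)\,m$ within $\mathcal{O}(m)$ rounds. The delicate point here is speed: naive ``move to the midpoint of your neighbours'' dynamics contracts a symmetric cycle only by a $1-\Theta(1/n^2)$ factor per round, so the rule must be genuinely more aggressive, e.g.\ collapsing the short edges of the isogonal motif. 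Finally, since a run can render a non-isogonal chain isogonal and a contraction step can break isogonality, the two subroutines are interleaved in blocks of constant length, locally distinguished through the lights; in either regime the progress measure still drops at the stated rate, so interleaving costs only a constant factor.

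\paragraph{Main obstacle and lower bound.}
I expect the main obstacle to be twofold: the combinatorial core --- proving the equivalence ``non-isogonal $\iff$ a locally, certifiably unique robot exists'', which is exactly where the theory of isogonal polygons enters --- and the interface analysis, showing that partially executed runs, their half-set lights, and an intervening isogonal phase compose without stalling. For optimality, take a closed chain of $n$ robots placed essentially on a straight segment folded back on itself; it has diameter $\Theta(n)$, and since every robot moves $\mathcal{O}(1)$ per round and only relative to its $\mathcal{O}(1)$-hop neighbourhood, the diameter can shrink by at most $\mathcal{O}(1)$ per round, so $\Omega(n)$ rounds are necessary, matching the upper bound.
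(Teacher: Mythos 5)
Your high-level decomposition matches the paper's (isogonal vs.\ non-isogonal configurations, light-encoded runs started at locally unique robots, a light-free contraction for the isogonal case, interleaving), but two of your central claims have genuine gaps. First, your progress measure: you assert that over one run a constant fraction of the robots become co-located with a neighbour, so that $m$ drops geometrically. This is not true and is not how the argument can go. A single run travels along the chain and, in the worst case, terminates with exactly \emph{one} shorten or \emph{one} merge affecting $\mathcal{O}(1)$ robots; on a chain that is locally almost straight, a run only swaps vectors (the paper's \hop{}) until it finds a sufficiently sharp angle. The paper instead uses an amortized counting argument: each \shorten{} with an angle below the threshold $\shortenAngle$ and both vectors of length at least $\frac{1}{2}$ decreases the chain length $L$ by an absolute constant, the remaining \shorten{}s decrease the number of short vectors, merges decrease $n$, so only $\mathcal{O}(n)$ runs are ever needed; combined with a witness argument that a new run is charged every $\mathcal{O}(1)$ rounds, this gives $\mathcal{O}(n)$ rounds without any geometric decay of $m$. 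You would also need the run to \emph{stop} (never revisit its start), which requires the angle threshold and a monotonicity argument in a global coordinate frame --- none of which appears in your plan.

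Second, your isogonal subroutine is not workable as described. You correctly observe that midpoint dynamics contracts a symmetric cycle only by a $1-\Theta(1/n^2)$ factor, but your fix --- ``collapsing the short edges of the isogonal motif'' --- does nothing for a regular polygon $\{n/1\}$ with all edges equal, which is exactly the worst case. The paper's mechanism is different: each robot jumps along its interior angle bisector to the point at distance exactly $1$ from both neighbours (capped at $\bisectorConstant$ per round), i.e., it keeps the chain taut at full edge length while contracting toward the common centre; the radius analysis then shows $\Delta r(t)$ reaches $1$ after $\mathcal{O}(n)$ rounds and doubles every $\sqrt{n\,\Delta r(t)}$ rounds thereafter, and a separate one-round \schlaefli{} first converts two-alternating-edge-length isogonal configurations into regular stars. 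Your lower-bound argument for the $\Omega(n)$ optimality is fine. I would also flag that the interface between the two regimes (a robot moving symmetrically while its neighbour does not, which can \emph{increase} $L$) needs an explicit rule --- the paper's exceptional generation of \runInit{}s bounds this to $n$ occurrences --- and your block-interleaving sketch does not address it.
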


The visible lights help to exploit asymmetries in the chain to identify locally unique robots that generate \emph{runs}.
%These runs are used to realize a local sequential movement of robots.
One of the major challenges is the handling of highly symmetric configurations.
While it is possible to identify locally unique robots in every connected configuration on the grid (as it is done in the algorithm for closed chains on the grid \cite{gridChain}), this is \emph{impossible} in the Euclidean plane.
We identify the class of \emph{isogonal configurations} based on the theory of isogonal polygons by Grünbaum \cite{grunbaum1994metamorphoses} and show that no locally unique robots can be determined in these configurations, while this is possible in every other configuration.
We believe that this characterization is of independent interest because highly symmetric configurations often cause a large runtime.
For instance, the lower bound of the \textsc{Go-To-The-Center} algorithm holds for an isogonal configuration \cite{DBLP:conf/spaa/DegenerKLHPW11}.

Our approach combines two algorithms into one: An algorithm inspired by \cite{gridChain,DBLP:journals/tcs/KutylowskiH09} that gathers non-isogonal configurations in linear time using visible lights and another algorithm for isogonal configurations without using any lights.
Note that there might be cases in which both algorithms are executed in parallel due to the limited visibility of the robots.
An additional rule ensures that both algorithms can be interleaved without hindering each other.

\section{Model \& Notation}

We consider $n$ robots $r_0, \dots, r_{n-1}$ connected in a closed chain.
Every robot $r_i$ has two direct neighbors: $r_{i-1}$ and $r_{i+1}$.\footnote{Throughout this work, all operations on indices have to be understood modulo $n$.}
The connectivity range is assumed to be $1$, i.e.\ two direct neighbors are allowed to have a distance of at most $1$.
The robots are disoriented, i.e.; they do not agree on any axis of their local coordinate systems and the latter can be arbitrarily rotated.
This also means that there is no common understanding of left and right.
However, the robots agree on unit distance and are able to measure distances precisely.
Except of their direct neighbors, robots have a constant viewing radius along the chain.
Each robot can see its \viewingRadiusConstant{} predecessors and successors along the chain.
We assume the \Luminous/ model: the robots have a constant number of locally visible states (lights) that can be perceived by all robots in their neighborhood.
Consider the robot \( r_{i} \) in the round \( t \).
Let $p_i(t) = \left(
x_i(t) , y_i(t) \right)^{T}$ be the position of $r_i$ in round \( t \) in a global coordinate system (not known to the robots) and $d(p_i(t),p_j(t)) = \|p_i(t) - p_j(t)\|$ the Euclidean distance between the robots $r_i$ and $r_j$ in round $t$.
Furthermore, let $u_i(t) = p_{i}(t) - p_{i-1}(t)$ be the vector pointing from robot $r_{i-1}$ to $r_i$ in round $t$.
The length of the chain is defined as $L(t) := \sum_{i=0}^{n-1} \|u_i(t)\|$.
$\alpha_{i}(t) \in [0,\pi]$ denotes the angle between $u_{i}(t)$ and $u_{i+1}(t)$.
$\sign_i\left(\alpha_{i}(t)\right) \in \{-1,0,1\}$ denotes the orientation of the angle $\alpha_{i}(t)$ from $r_i$'s point of view (this can differ from robot to robot) and $\sign(\alpha_{i}(t))$ denotes the orientation in a global coordinate system.
$N_i(t) = \{r_{i-3}, \dots, r_{i+3}\}$ is the neighborhood of $r_i$ in round \( t \).
Throughout the execution of the algorithm it can happen that two robots merge and continue to behave as a single robot.
$r_{i+1}$ always represents the first robot with an index larger $i$ that has not yet merged with $r_i$.
$r_{i-1}$ is defined analogously.

\section{Algorithm}

Our approach consists of two algorithms -- one for asymmetric configurations and one for highly symmetric (isogonal) configurations.
Since robots cannot perceive the entire chain, it might happen that both algorithms are executed in parallel, i.e.; some robots might move according to the highly symmetric algorithm while others follow the asymmetric algorithm.
In case of asymmetric configurations, the visible states are used to sequentialize the movements of robots such that no sequence of three neighboring robots moves in the same round.
The impossibility of ensuring this in highly symmetric configurations raises the need for the additional algorithm.

\subsection{High Level Description}
The main concept of the asymmetric algorithm is the notion of a \emph{\runState}.
A \runState{} is a visible state (realized with lights) that is passed along the chain in a fixed direction associated to it.
Robots with a \runState{} perform a movement operation while robots without do not.
The movement is sequentialized in a way that in round $t$ the robot $r_i$ executes a move operation (and neither $r_{i-1}$ nor $r_{i+1}$), the robot $r_{i+1}$ in round $t+1$ and so on.
The movement of a \runState{} along the chain can be seen in \cref{figure:how-runs-are-passed}.

\begin{figure}[H]
	\centering
	\begin{minipage}[c]{0.5\textwidth}
		\includegraphics[page=8, width=\textwidth]{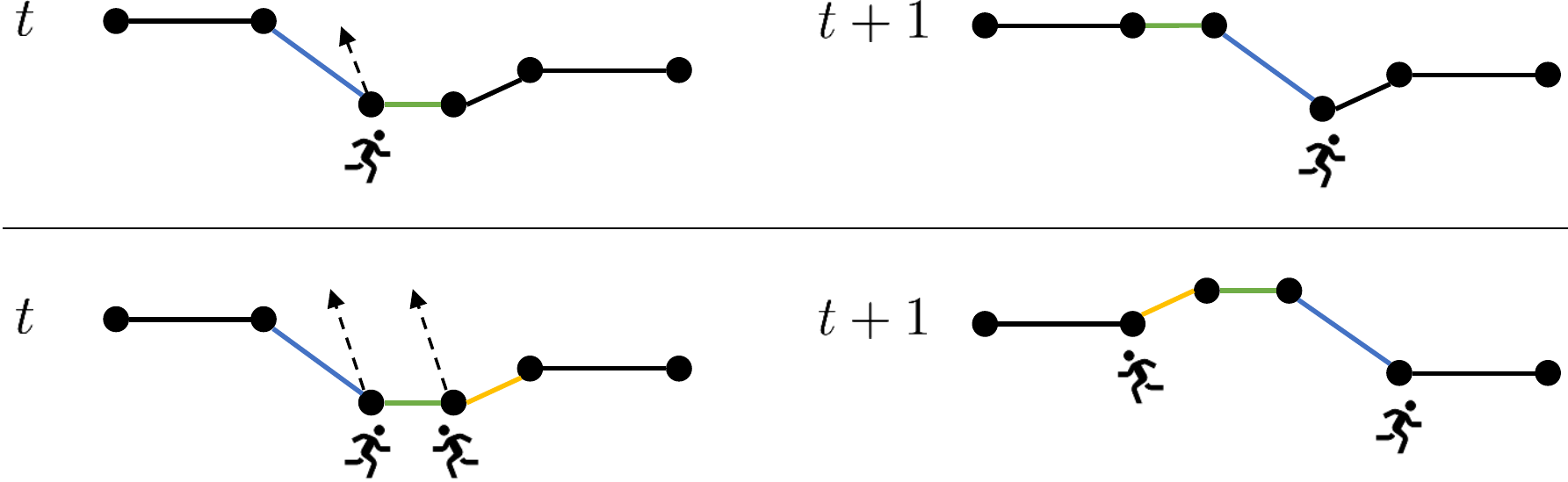}
	\end{minipage}
	\hfill
	\begin{minipage}[c]{0.45\textwidth}
		\captionof{figure}{A \runState{} located at \( r_{i} \) in round \( t \) is passed in its direction along the chain, i.e.; it is located at \( r_{i + x} \) in round \( t + x \).}
		\label{figure:how-runs-are-passed}
	\end{minipage}
\end{figure}

This way, any moving robot does not have to consider movements of its neighbors since it knows that the neighbors do not change their positions.
The only thing a robot has to ensure is that the distance to its neighbors stays less or equal to $1$.
%A robot with a \runState{} executes a move operation in round $t$ and based on the operation the \runState{} is either stopped or directed to the next robot in the direction of the run.
To preserve the connectivity of the chain only two run patterns are allowed: a robot $r_i$ has a \runState{} and either none of its direct neighbors (an \emph{isolated run}) has a run or exactly one of its direct neighbors has a run such that the runs are heading in each other's direction (a \emph{joint run-pair}).
This essentially means that there are no sequences of \runState{}s of length at least $3$.

For robots with a \runState{}, there are three kinds of movement operations, the \emph{\merge{}}, the \emph{\shorten{}} and the \emph{\hop{}}.
The purpose of the \textbf{\merge{}} is to reduce the number of robots in the chain.
It is executed by a robot $r_i$  if its neighbors have a distance of at most $1$.
In this case, $r_i$ is not necessary for the connectivity of the chain and can safely be removed.
Removing \( r_{i} \) means, that it jumps onto the position of its next neighbor in the direction of the run, the robots merge their neighborhoods and both continue to behave as a single robot in future rounds.
The execution of a \merge{} stops a \runState{}.
Moreover, some additional care has to be taken here: removing robots from the chain decreases the distance of nearby runs.
In the worst-case it could happen that another run pattern besides the isolated run and the joint run-pair is executed.
To avoid such a situation, a \merge{} stops all \runState{}s that might be present in the neighborhoods of the two merging robots.
The goal of a \textbf{\shorten{}} is to reduce the length of the chain.
Intuitively, if the angle between vectors of $r_i$ pointing to its neighbors is not too large, it can reduce the length of the chain by jumping to the midpoint between its neighbors in many cases.
This is denoted as a \shorten{}.
The execution of a \shorten{} also stops a run.
In case no progress (in terms of reducing the number of robots or the length of the chain) can be made locally, a \textbf{\hop{}} is executed.
The purpose of a \hop{} is to exchange two neighboring vectors in the chain.
By this, each \runState{} is associated with a run vector that is swapped along the chain until it finds a position at which progress can be made.
For each of the three operations, there is also a \emph{joint} one (\jointHop, \jointShorten{} and \jointMerge{}) which is a similar operation executed by a joint run-pair.

The main question now is \emph{where} \runState{}s should be started.
For this, we identify robots that are -- regarding their local neighborhood -- geometrically unique.
These robots are assigned an \runInit{} allowing them to regularly generate new \runState{}s.
During the generation of new \runState{}s it is ensured that a certain distance to other \runState{}s is kept.
In isogonal configurations, however, it is not possible to identify locally unique robots.
To overcome this, we introduce an additional algorithm for these configurations.
Isogonal configurations have in common that all robots lie on the boundary of a common circle.
We exploit this fact by letting the robots move towards the center of the surrounding circle in every round until they finally gather in its center.
Additional care has to be taken in case both algorithms interfere with each other.
This can happen if some parts of the chain are isogonal while other parts are asymmetric.
Since a robot can only decide how to move based on its local view, the robots behave according to different algorithms in this case.
We show how to handle such a case and ensure that the two algorithms do not hinder each other later.

\subsection{Additional Notation}

%Next, we introduce some additional notation based on \runState{}s and \runInit{}s.
For a robot $r_i$, $\mathit{init}(r_i,t) = \mathit{true}$ if $r_i$ has an \runInit{} and $\mathit{run}(r_i,t) = \mathit{true}$ if $r_i$ has a \runState{} in round $t$.
Additionally, $\mathit{init}(N_i(t)) = \{r_j \in N_i(t) |\, \mathit{init}(r_j,t) = \mathit{true}\}$ and $\mathit{run}(N_i(t)) = \{r_j \in N_i(t) |\, \mathit{run}(r_j,t) = \mathit{true}\}$.
Let $\kappa$ denote an arbitrary run.
$r(\kappa,t)$ denotes the robot that has run $\kappa$ in round $t$ and \direction{\kappa}{t} denotes the robot that will have run $\kappa$ in round $t+1$ (the direction of $\kappa$).
In addition, $v_{\kappa}$ denotes the run-vector of $\kappa$ and $p_\kappa(t) = p_{r(\kappa,t)}(t)$.

\subsection{Asymmetric Algorithm} \label{section:asymmetric}

The asymmetric algorithm consists of two parts: the generation of new \runState{}s and the movement depending on such a \runState{}.
We start with explaining the movement depending on \runState{}s and explain afterwards how the \runState{}s are generated in the chain.
To preserve the connectivity of the chain, it is ensured that at most two directly neighboring robots execute a move operation in the same round.
This is done by allowing the existence of only two patterns of \runState{}s at neighboring robots: either $r_i$ and neither $r_{i-1}$ nor $r_{i+1}$ has a \runState{} (isolated run) or $r_i$ and $r_{i+1}$ have \runState{}s heading in each other's direction while \( r_{i-1} \) and \( r_{i+2} \) do not have runs (joint run-pair).
All other patterns, especially sequences of length at least $3$ of neighboring robots having \runState{}s are prohibited by the algorithm.
%The following definition formalizes the two patterns that exist and allow the participating robots to move.

\begin{definition}
	A run $\kappa$ is called an \emph{isolated} run in round $t$ if $r(\kappa,t) = r_i$ and $\mathit{run}(r_{i-1},t)$ $= \mathit{run}(r_{i+1},t) = \mathit{false}$.
	Two runs $\kappa_1$ and $\kappa_2$ with $r(\kappa_1,t) = r_i$ and $r(\kappa_2,t) = r_{i+1}$ are called a \emph{joint run-pair} in round $t$ in case \direction{\kappa_1}{t} = $r_{i+1}$, $\direction{\kappa_2}{t} = r_i$ and $\mathit{run}(r_{i-1},t) = \mathit{run}(r_{i+2},t) = \mathit{false}$.
\end{definition}

In the following, we describe the concrete movements for isolated \runState{}s and joint run-pairs.
The formal definitions of the operations are given afterwards.
Assume that the number of robots in the chain is at least $6$ and consider an isolated \runState{} $\kappa$ in round $t$ with $r(\kappa,t) =r_i$ and $ \direction{\kappa}{t} = r_{i+1}$.
Then, $r_i$ moves as follows: (the cases are checked with decreasing priority).
\medskip

\begin{tabular}[H]{rll}
	1. & If $d(p_{i-1}(t), p_{i+1}(t)) \leq 1$, \hspace*{0.1cm} & $r_i$ executes a \merge{}.                                   \\
	2. & If $d(p_{i}(t), p_{i+2}(t)) \leq 1$,                   & $r_i$ does not move and passes the \runState{} to $r_{i+1}$. \\
	3. & If $\alpha_i(t) \leq \shortenAngle{}$,                 & $r_i$ executes a \shorten{}.                                 \\
	4. & Otherwise,                                             & $r_i$ executes a \hop{}.
\end{tabular}
\noindent\medskip

Now consider joint run-pair at robots $r_i$ and $r_{i+1}$.
The robots $r_i$ and $r_{i+1}$ move as follows (the cases are checked with decreasing priority):
\medskip

\begin{tabular}[H]{rll}
	1. & If $d(p_{i-1}(t), p_{i+2}(t)) < 2$,                                                                        & both execute a \jointMerge{}.         \\
	2. & If $\alpha_{i}(t) \leq \shortenAngle{}$ \text{and} $\alpha_{i+1}(t) \leq \shortenAngle{}$, \hspace*{0.1cm} & both execute a \jointShorten{}        \\
	3. & If $\alpha_{i}(t) \leq \shortenAngle{}$,                                                                   & only $r_i$ executes a \shorten{}.     \\
	4. & If $\alpha_{i+1}(t) \leq \shortenAngle{}$,                                                                 & only $r_{i+1}$ executes a \shorten{}. \\
	5. & If $\angle (u_i(t), -u_{i+2}(t)) \leq \shortenAngle$.                                                      & both execute a \jointShorten{}.       \\
	6. & Otherwise,                                                                                                 & both execute a \jointHop{}.
\end{tabular}
\noindent\medskip

If the number of robots in the chain is at most $5$ (robots can detect this since they can see \viewingRadiusConstant chain neighbors in each direction), the robots move a distance of $1$ towards the center of the smallest enclosing circle of their neighborhood.
This ensures \Gathering/ after at most $5$ more rounds.
The concrete movement operations are defined as follows:

\paragraph{\Hop{} and \JointHop{}}
Consider an isolated run $\kappa$ with $r(\kappa,t) = r_i$ and the direction $\direction{\kappa}{t} = r_{i+1}$.
Assume that $r_i$ executes a \emph{\hop{}}.
The new position \( p_i(t+1) \) is \( p_{i+1}(t) - u_i(t) \).
The \runState{} continues in its direction, more precisely, in round $t+1$, $r\left(\kappa,t+1\right) = r_{i+1}$ and $r(\kappa,t+2) = r_{i+2}$.
A \emph{\jointHop{}} is a similar operation executed by two neighboring robots $r_i$ and $r_{i+1}$ with a joint run-pair.
Assume that $r(\kappa_1,t) = r_i$, $\direction{\kappa_1}{t} = r_{i+1}$, $r(\kappa_2,t) = r_{i+1}$ and $\direction{\kappa_2}{t} = r_i$.
The new positions are $p_i(t+1) = p_{i-1}(t) + u_{i+2}(t)$ and $p_{i+1}(t+1) = p_{i+2}(t) - u_i(t)$.
Both \runState{}s continue in their direction while skipping the next robot: in round $t+1$, $r\left(\kappa_1,t+1\right) = r_{i+2}$, $r(\kappa_1,t+2) = r_{i+3}$, $r\left(\kappa_2,t+1\right) = r_{i-1}$ and $r(\kappa_2,t+2) = r_{i-2}$.
See \Cref{figure:hopandjointhop} for a visualization of a \hop{} and a \jointHop{}.

\begin{figure}[htb]
	\centering
	\includegraphics[width=0.8\textwidth, page=1]{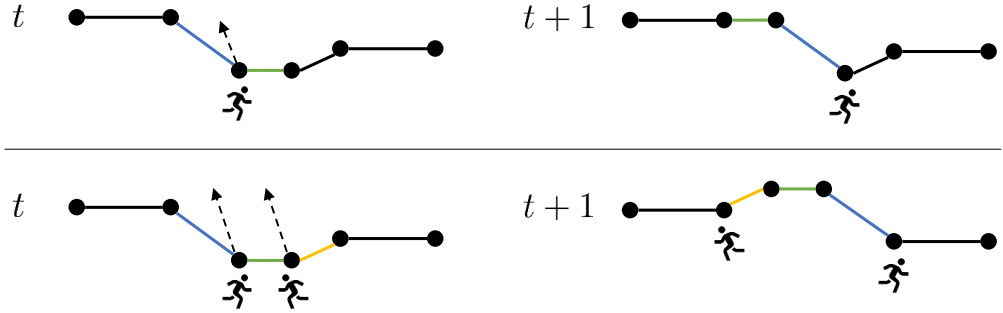}
	\caption{Visualization of a \hop{} (above) and a \jointHop{} (below).}
	\label{figure:hopandjointhop}
\end{figure}

\paragraph{\Shorten{} and \JointShorten{}}
In the \emph{\shorten{}}, a robot $r_i$ with an isolated run moves to the midpoint between its neighbors: $p_i(t+1) = \frac{1}{2} \cdot  p_{i-1}(t) + \frac{1}{2} \cdot p_{i+1}(t)$.
The run stops.
In a \emph{\jointShorten{}} executed by two robots $r_i$ and $r_{i+1}$ with a joint run-pair,
the vector $v = p_{i+2}(t) - p_{i-1}(t)$ is subdivided into three parts of equal length.
The new positions are $p_{i}(t+1) = p_{i-1}(t)+ \frac{1}{3} \cdot  v$ and $p_{i+2}(t+1) = p_{i+2}(t) - \frac{1}{3} \cdot v$.
Both runs are stopped after executing a \jointShorten{}.
See \Cref{figure:shortenandjointshorten} for a visualization of both operations.

\begin{figure}[htb]
	\centering
	\includegraphics[width=0.7\textwidth, page=3]{figures/cropped.pdf}
	\caption{Visualization of a \shorten{} (above) and a \jointShorten{} (below).}
	\label{figure:shortenandjointshorten}
\end{figure}

\paragraph{\Merge{} and \JointMerge{}}
Consider an isolated run $\kappa$ with $r(\kappa,t) = r_i$ and $\direction{\kappa}{t}$ $= r_{i+1}$.
If $r_i$ executes a \emph{\merge{}}, it moves to $p_{i+1}(t)$.
The robots $r_i$ and $r_{i+1}$ merge such that  their neighborhoods are merged and they continue to behave as a single robot.
In a \emph{\jointMerge{}}, the robots $r_i$ and $r_{i+1}$ both move to $\frac{1}{2}p_{i-1}(t) + \frac{1}{2}p_{i+1}(t)$ and merge there.
Afterwards, they have identical neighborhoods and behave as a single robot.
All runs that participate in a \merge{} or a \jointMerge{} are immediately stopped.
Beyond that, all \runState{}s in the neighborhood of $r_i$ and $r_{i+1}$ are immediately stopped and all robots in $N_i(t)$ do not start any further \runState{}s within the next $4$ rounds (the robots are \emph{blocked}).
\Cref{figure:mergejointmerge} visualizes both operations.
Special care has to be taken of \runInit{}s.
Suppose that a robot $r_i$ executes a \merge{} into the direction of $r_{i+1}$ while having an \runInit{}.
The \runInit{} is handled as follows: In case $\mathit{init}(r_{i+2}) = \mathit{false}$ and $r_{i+2}$ does not execute a \merge{} in the same round and $\mathit{init}(r_{i+3}) = \mathit{true}$, the \runInit{} of $r_i$ is passed to $r_{i+1}$.
Otherwise the state is removed.

\begin{figure}[htb]
	\centering
	\includegraphics[width=0.677\textwidth, page=2]{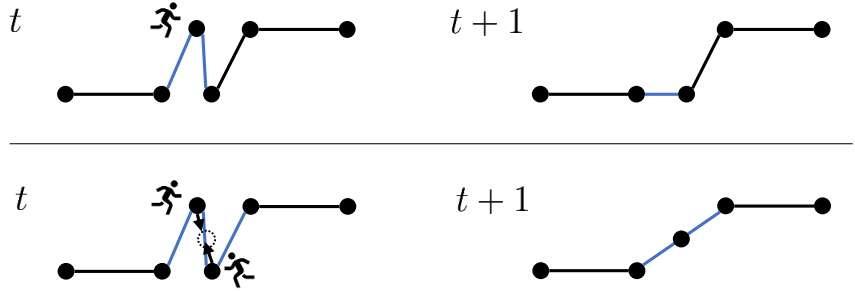}
	\caption{Visualization of a \merge{} (above) and a \jointMerge{} (below).}
	\label{figure:mergejointmerge}
\end{figure}

\paragraph{Where to start runs?}
New \runState{}s are created by robots with \runInit{}s.
To generate new \runInit{}s, we aim at discovering structures in the chain that are asymmetric.
When such a structure is observed by the surrounding robots, the robot closest to the structure is assigned a so-called \runInit{}.
Such a robot can thus remember that it was at a point of asymmetry to generate runs in the future.
To keep the distance between \runState{}s (important for maintaining the connectivity) our rules ensure that at most two neighboring robots have an \runInit{}.
Sequences of length at least $3$ of robots having an \runInit{} are prohibited.
Intuitively, there are three sources of asymmetry in the chain: sizes of angles, orientations of angles and lengths of vectors.
To detect an asymmetry, we introduce \emph{patterns} depending on the size of angles, the orientation of angles and the vector lengths.
To avoid too many fulfilled patterns, the next class of patterns is only checked if a full symmetry regarding the previous pattern is identified.
More precisely, a robot only checks orientation patterns in case all angles $\alpha_i(t)$ in its neighborhood are identical.
Similarly, a robot only checks vector length patterns if all angles in its neighborhood have the same size and the same orientation.
Whenever a pattern holds true, the robot observing the pattern assigns itself an \runInit{} if there is no other robot already assigned an \runInit{} in its neighborhood.
If it happens that two direct neighbors are assigned an \runInit{}, they fulfilled the same type of pattern and form a \jointRunInit{} together.
%We will later show that due to our patterns there are no more than two direct neighbors storing an \runInit{}.
%However, there are cases in which we cannot assign any robot an \runInit{} due to symmetry.

\paragraph{Angle Patterns}

A robot \( r_{i} \) is assigned an \runInit{} if either \( \alpha_{i-1}(t) > \alpha_{i}(t) \leq \alpha_{i+1}(t) \) or  \( \alpha_{i-1}(t) \geq \alpha_{i}(t) < \alpha_{i+1}(t) \).
Intuitively, the robot is a point of asymmetry if its angle is a local minimum.
\cref{figure:angle-patterns} shows an example.
% configuration where \( r_{i} \) receives an \runInit{} based on this pattern.

\begin{figure}[htb]
	\centering
	\begin{minipage}[c]{0.4\textwidth}
		\includegraphics[page=4, width=\textwidth]{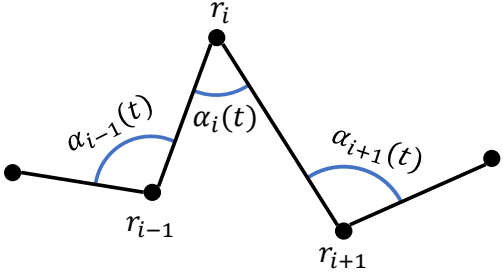}
	\end{minipage}
	\hfill
	\begin{minipage}[c]{0.55\textwidth}
		\captionof{figure}{A configuration in which \( r_{i} \) fulfills the first \emph{Angle Pattern}, i.e.; \( \alpha_{i}(t) \)is a local minimum.}
		\label{figure:angle-patterns}
	\end{minipage}
\end{figure}

\paragraph{Orientation Patterns}

A robot $r_i$ gets an \runInit{} if one of the following patterns is fulfilled.
\cref{figure:orientation-pattern-1,figure:orientation-pattern-2} depict the two patterns.

\begin{enumerate}[topsep=0pt,itemsep=-1ex,partopsep=1ex,parsep=1ex]
	\item The robot is between three angles that have a different orientation than $\alpha_{i}(t)$, i.e.\ $\sign_i(\alpha_{i-1}(t)) = \sign_i(\alpha_{i+1}(t)) = \sign_i(\alpha_{i+2}(t))  \neq  \sign_i(\alpha_{i}(t)) $ or $\sign_i(\alpha_{i-2}(t)) = \sign_i(\alpha_{i-1}(t)) = \sign_i(\alpha_{i+1}(t))  \neq  \sign_i(\alpha_{i}(t)) $

	\item The robot is at the border of a sequence of at least two angles with the same orientation next to a sequence of at least three angles with the same orientation, i.e.\ $\sign_i(\alpha_{i-1}(t)) = \sign_i(\alpha_{i}(t)) \neq \sign_i(\alpha_{i+1}(t)) = \sign_i(\alpha_{i+2}(t)) = \sign_i(\alpha_{i+3}(t))$ or $\sign_i(\alpha_{i+1}(t))$ $= \sign_i(\alpha_{i}(t))$ $\neq \sign_i(\alpha_{i-1}(t))$ $= \sign_i(\alpha_{i-2}(t)) = \sign_i(\alpha_{i-3}(t))$
	%     \cref{figure:orientation-pattern-1} depicts a configuration in which this pattern is fulfilled
\end{enumerate}

\begin{figure}[htb]
	\centering
	\begin{minipage}[t]{0.48\textwidth}
		\centering
		\includegraphics[page=5, width=0.8\textwidth]{figures/cropped.pdf}
		\captionof{figure}{A configuration in which \( r_{i} \) fulfills the first \emph{Orientation Pattern}.}
		\label{figure:orientation-pattern-1}
	\end{minipage}
	\hfill
	\begin{minipage}[t]{0.48\textwidth}
		\centering
		\includegraphics[page=6, width=0.85\textwidth]{figures/cropped.pdf}
		\captionof{figure}{A configuration in which \( r_{i} \) fulfills the third \emph{Orientation Pattern}.}
		\label{figure:orientation-pattern-2}
	\end{minipage}
\end{figure}
\paragraph{Vector Length Patterns}

If both Angle and Orientation Patterns fail, we consider vector lengths.
%Note that a robot cannot uniquely identify $u_i(t)$ because the chain is not oriented.
%Thus, it checks the patterns for both possible vectors.
A robot \( r_{i} \) is assigned an \runInit{} if one of the following patterns is fulfilled, see \cref{figure:vector-pattern} for an example.
For better readability, we omit the time parameter $t$, i.e.\ we write $u_i$ instead of $u_i(t)$.
%For better readability, we omit the time parameter $t$, e.g.\ we write $p_i$ instead of $p_i(t)$.
In the patterns, the term \emph{locally minimal} occurs.
%When writing  \(d(p_{i},p_{i+1})\) is locally minimal, we precisely mean that all other vectors that can be seen by $r_i$ are either larger or have the same length.
$\|u_i\|$ is locally minimal means that all other vectors that can be seen by $r_i$ are either larger or have the same length.

\begin{enumerate}[topsep=0pt,itemsep=-3ex,partopsep=1ex,parsep=1ex]
	\item The robot is located at a locally minimal vector next to two succeeding larger vectors, i.e.\  \(\|u_i\|\) is locally minimal \textbf{and} \(\|u_{i-1}\| > \|u_i\| <\|u_{i+1}\|\) \textbf{and}
	\(\|u_i\| < \|u_{i+2}\|\) \textbf{or} \(\|u_{i+1}\|\) is locally minimal \textbf{and} \(\|u_{i}\| > \|u_{i+1}\| < \|u_{i+2}\|\) \textbf{and}
	\(\|u_{i+1}\| < \|u_{i+3}\|\). \\
	\item The robot is at the boundary of a sequence of at least two locally minimal vectors, i.e.\
	\(\|u_{i-1}\| = \|u_i\| < \|u_{i+1}\|\) \textbf{or} \(\|u_i\| > \|u_{i+1}\| = \     \|u_{i+2}\|\) \\
\end{enumerate}

%\begin{tabular}[H]{ll l l l}
%	1. & \(\|u_i\|\) &is locally minimal &\textbf{and} \(\|u_{i-1}\| > \|u_i\| <\|u_{i+1}\|\) &\textbf{and}
%	\(\|u_i\| < \|u_{i+2}\|\) \\
%	2. & \(\|u_{i+1}\|\) &is locally minimal &\textbf{and} \(\|u_{i}\| > \|u_{i+1}\| < \|u_{i+2}\|\) &\textbf{and}
%	\(\|u_{i+1}\| < \|u_{i+3}\|\) \\
%	3. & & &\textbf{and} \(\|u_{i-1}\| = \|u_i\| < \|u_{i+1}\|\) &\textbf{or} \(\|u_i\| > \|u_{i+1}\| = \     \|u_{i+2}\|\) \\
%%	4. & Otherwise,                                             & $r_i$ executes a \hop{}.
%\end{tabular}

%\begin{enumerate}
%\setcounter{enumi}{2}
%	\item
%\end{enumerate}

\begin{figure}[htb]
	\centering
	\begin{minipage}[c]{0.4\textwidth}
		\includegraphics[page=7, width=\textwidth]{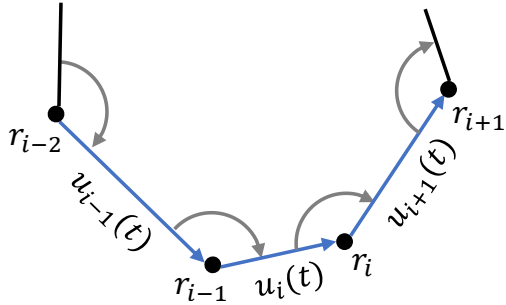}
	\end{minipage}
	\hfill
	\begin{minipage}[c]{0.55\textwidth}
		\caption{A configuration in which \( r_{i} \) (and also \(r_{i-1}\)) fulfills the first \emph{Vector Length Pattern}, i.e.; the length of \( u_{i}(t) \) is a local minimum.}
		\label{figure:vector-pattern}
	\end{minipage}
\end{figure}

\paragraph{How to start runs?}

Robots with \runInit{}s or \jointRunInit{}s try every $7$  rounds (counted with lights) to start new \runState{}s.
A robot $r_i$ with $\mathit{init}(r_i) = \mathit{true}$ only starts a new \runState{} in case it is not blocked and $\mathit{run}(N_i(t)) = \emptyset$ to ensure sufficient distance between \runState{}s.
Consider a robot $r_i$ with an \runInit{} and $\mathit{run}(N_i(t)) = \emptyset$.
$r_i$ only starts new \runState{}s provided $d(p_{i-1}(t), p_{i+1}(t)) > 1$.
Otherwise, it directly executes a \merge{}.
Given $d(p_{i-1}(t), p_{i+1}(t)) > 1$, $r_i$ generates two new \runState{}s at its direct neighbors with opposite directions as follows: $r_i$ executes a \shorten{} and generates two new \runState{}s
$\kappa_1$ and $\kappa_2$ with $r(\kappa_1,t+1) = r_{i+1}$, $r(\kappa_1,t+2) = r_{i+2}$ and, similarly, $r(\kappa_2,t+1) = r_{i-1}$, $r(\kappa_2,t+2) = r_{i-2}$.
Two robots $r_i$ and $r_{i+1}$ with a \jointRunInit{} proceed similarly:
given $d(p_{i-1}(t),$ $p_{i+2}(t)) \leq 2$, they directly execute a \jointMerge{}.
Otherwise $r_i$ and $r_{i+1}$ execute a \jointShorten{} and induce two new \runState{}s at their neighbors with opposite direction.
More formally, the \runState{}s $\kappa_1$ and $\kappa_2$ are generated with $r(\kappa_1,t+1) = r_{i+2}$, $r(\kappa_1,t+2) = r_{i+3}$, $r(\kappa_2,t+1) = r_{i-1}$ and $r(\kappa_2,t+2) = r_{i-2}$.
%Observe that this procedure ensures that every vector assigned to a \runState{} has a length of at least $\frac{1}{2}$.

\subsection{Symmetric Algorithm}

As a consequence of the patterns in \Cref{section:asymmetric}, there is a set of configurations where no \runInit{} can be generated.
Intuitively, such configurations have no local criterion that identifies some robot as different from its neighbors, i.e.; they are \emph{symmetric}.
We start by defining precisely the class of configurations in which no \runState{} can be generated by our protocol.
Afterwards, we show how we can still gather such configurations.

\paragraph{Which configurations are left?}
There are some configurations in which every robot locally has the same view.
These configurations can be classified as the \emph{isogonal configurations}.
Intuitively, a configuration is isogonal if all angles have the same size and orientation and either all vectors have the same length or there are two alternating vector lengths.
In every other configuration, the patterns described in \Cref{section:asymmetric} detect the asymmetry.
The description of isogonal configurations is easier when considering polygons.
For some round \( t \), the set of all vectors $u_i(t)$ describes a polygon denoted as the \emph{configuration polygon} of round $t$.
A configuration is then called an \emph{isogonal configuration} in case its configuration polygon is an isogonal polygon.

\begin{definition}[\cite{grunbaum1994metamorphoses}]
	A polygon $P$ is \emph{isogonal} if and only if for each pair of vertices there is a symmetry of $P$ that maps the first onto the second.
\end{definition}

Grünbaum~\cite{grunbaum1994metamorphoses} classified the set of isogonal polygons.
Interestingly, this set of polygons consists of the set of \emph{regular star polygons} and polygons that can be obtained from them by a small translation of the vertices.

\begin{definition}[\cite{grunbaum1994metamorphoses}]\label{definition:schlaefli2}
	The \emph{regular star polygon} $\{n/d\}$ ($\{n/d\}$ is also denoted as the \emph{Schläfli symbol}) describes the following polygon with $n$ vertices:
	Consider a circle $C$ and fix an arbitrary radius $R$ of $C$.
	Place $n$ points $A_1, \dots, A_n$ such that $A_j$ is placed on $C$ and forms an angle of $2\pi \frac{d}{n} \cdot j$ with $R$ and connect $A_j$ to $A_{j+1} \textnormal{ mod } n$ by a segment.
\end{definition}

\begin{definition}
	A configuration is called a \emph{regular star configuration}, in case the configuration polygon is a regular star polygon.
\end{definition}

\begin{lemma}[\cite{grunbaum1994metamorphoses}]
	For $n$ odd, every isogonal polygon is a regular star polygon.
	For $n$ even, isogonal polygons that are not regular star polygons can be constructed as follows:
	Take any regular star polygon $\{n/d\}$ based on the circle \( C \) of radius \( R \).
	Now, choose a parameter $0 < t < \frac{n}{2}$ and locate the vertex $A_j$ such that its angle to $R$ is $\frac{2\pi}{n} \cdot \left(j \cdot d + (-1)^j \cdot t\right)$.  Choosing $t = \frac{n}{2}$ yields the polygon $\{n/d\}$ again.
	Larger values for $t$ obtain the same polygons as in the interval $[0, \frac{n}{2}]$.
\end{lemma}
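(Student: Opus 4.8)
The plan is to establish this classification result by following Grünbaum's development, but since the statement is explicitly attributed to \cite{grunbaum1994metamorphoses}, the goal here is to give a self-contained geometric argument rather than invoke a black box. First I would unpack the definition of isogonality: if $P$ has vertices $v_0,\dots,v_{n-1}$ and for every pair $(v_i,v_j)$ there is a symmetry of $P$ carrying $v_i$ to $v_j$, then the symmetry group acts transitively on the vertex set. A finite group of isometries of the plane has a fixed point (its centroid), so all vertices lie on a common circle $C$ centered at that point; fix its radius $R$ and parametrize $v_j$ by its angular coordinate $\theta_j$. Transitivity of the isometry group, together with the fact that plane isometries are rotations and reflections, forces the multiset of \emph{gaps} $\{\theta_{j+1}-\theta_j\}$ (indices mod $n$, angles in $[0,2\pi)$) to be highly constrained: a rotation by $\theta_k - \theta_0$ must permute the vertices cyclically, which already shows the gap sequence is periodic; a reflection (if present) shows it is palindromic.

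Next I would split on the parity of $n$. For $n$ odd: I would argue that the rotation subgroup alone must act transitively on vertices (a reflection composed with the transitive rotation action would give a rotation fixing a vertex, hence the identity, but an odd regular polygon's reflections fix exactly one vertex — more carefully, one shows the stabilizer of a vertex is trivial when $n$ is odd because any nontrivial stabilizing isometry would be a reflection through $v_j$ and the center, which for odd $n$ cannot also be a symmetry of the remaining even number of vertices unless they pair up symmetrically, and then iterating forces all gaps equal). Once the rotation group acts simply transitively, it is cyclic of order $n$ generated by rotation through $2\pi d/n$ for some $d$ coprime to $n$, and reading off the vertex positions gives exactly the Schläfli symbol $\{n/d\}$ as in \cref{definition:schlaefli2}. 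For $n$ even: the stabilizer of a vertex may be nontrivial (a reflection), so the rotation subgroup acting on vertices can have orbits of size $n/2$ — this is precisely the case producing two interleaved "sub-orbits." I would show the two alternating vertices $v_0,v_2,v_4,\dots$ and $v_1,v_3,v_5,\dots$ each form a rotation orbit under a common cyclic rotation of order $n/2$ (rotation by $2\pi d/n$), and the relative offset between the two orbits is a single free angular parameter; writing this offset as $\frac{2\pi}{n}\cdot t$ and centering it so that $t=\tfrac n2$ recovers the regular star polygon yields the stated formula $\frac{2\pi}{n}(jd+(-1)^j t)$. The "larger values of $t$ repeat" claim follows from the periodicity of the construction in $t$ modulo $n$ together with the reflection symmetry $t\mapsto -t$.

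The main obstacle I expect is the even case: I need to rule out more exotic orbit structures (e.g. a dihedral symmetry group whose rotation part has order strictly less than $n/2$, which a priori could leave many free parameters) and show that isogonality — transitivity on \emph{all} pairs, not just generating the group — pins the configuration down to the single parameter $t$. The key leverage is that the gap sequence, being both periodic (under the transitive rotation/translation action on indices) and palindromic (if any reflection is present, and at least one must be to move a vertex into a position a rotation cannot reach when the rotation orbit is proper), must be of the form $g_1, g_2, g_1, g_2, \dots$ with only two distinct gap values. That reduces the entire configuration, up to rotation, to the pair $(g_1,g_2)$ with $g_1+g_2 = 4\pi d/n$ fixed, hence one free parameter, which I would then match to $t$ by direct computation. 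The odd case I expect to be comparatively routine once transitivity of the rotation subgroup is established.

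I would also note, for completeness in the write-up, that each such polygon \emph{is} isogonal — the converse direction — which is immediate: the regular star polygon $\{n/d\}$ has the full cyclic (in fact dihedral) symmetry of the circle's vertex set, and the even-$n$ deformation retains a dihedral group of order $n$ (generated by rotation through $4\pi d/n$ and a reflection swapping the two sub-orbits) that still acts transitively on all $n$ vertices. This confirms the lemma describes exactly the isogonal polygons and nothing more.
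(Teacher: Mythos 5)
The paper does not prove this lemma at all: it is stated with the attribution to \cite{grunbaum1994metamorphoses} and used as a black box, so there is no in-paper argument to compare yours against. Your proposal therefore does strictly more than the paper, namely reconstruct Grünbaum's classification from first principles, and the skeleton is sound: transitivity of the finite symmetry group gives a fixed center and hence a common circumcircle; a nontrivial rotation about that center fixes no vertex, so the rotation subgroup acts freely and all of its vertex-orbits have the same size; and orbit--stabilizer applied to the full group (a vertex stabilizer is contained in $\{\mathrm{id},\sigma\}$ where $\sigma$ is the reflection through the vertex and the center, hence has order at most $2$) forces the rotation subgroup to have order exactly $n$ or exactly $n/2$. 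Note that this last observation already disposes of the ``exotic orbit structures'' you flag as your main obstacle -- a dihedral group whose rotation part has order below $n/2$ cannot act transitively on $n$ vertices with stabilizers of order at most $2$ -- so the even case is less delicate than you fear, and the odd case follows because $n/2$ is not an integer. Your converse direction (each constructed polygon really is isogonal via the dihedral group generated by the rotation through $4\pi d/n$ and a reflection swapping the two sub-orbits) is also correct and worth keeping.

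Two points still need tightening before this is a proof rather than a plan. First, you repeatedly treat the symmetry carrying $v_j$ to $v_k$ as the rotation by $\theta_k-\theta_j$; a priori it could be a reflection, and in the odd case what you actually need is that the orientation-preserving symmetry sending $v_0$ to its chain-successor $v_1$ exists, sends every $v_j$ to $v_{j+1}$ (because symmetries preserve the edge set and orientation), and therefore generates a cyclic group of order $n$ acting simply transitively -- from which its angle is $2\pi d/n$ with $\gcd(d,n)=1$ and the Schläfli form follows. As written, ``cyclic of order $n$ generated by rotation through $2\pi d/n$ for some $d$ coprime to $n$'' conflates the abstract generator of the rotation group with the specific element realizing the chain step; only the latter produces the exponent $d$ in the vertex angles. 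Second, in the even case you still owe the computation identifying the two alternating gaps $g_1,g_2$ with $g_1+g_2=4\pi d/n$ with the parameter $t$ via $g_{1,2}=\tfrac{2\pi}{n}(d\mp 2t)$, and the argument that the even/odd split of the rotation orbits agrees with the parity of the chain index (again because the reflection must preserve edges). Both are routine, so the proposal is essentially a correct, fillable outline of a result the paper only cites.
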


%Using the previous classification, we characterize isogonal configurations as follows.
%In isogonal polygons, all inner angles have the same size and orientation and either all edges have the same length or there are exactly two alternating edge lengths.
Examples of such configurations can be seen in \cref{figure:isogonal-non-star} and \cref{figure:isogonal-regular-star}.

\begin{figure}[htb]
	\begin{minipage}[c]{0.47\textwidth}
		\centering
		\includegraphics[page=9, width=0.6\textwidth]{figures/cropped.pdf}
		\captionof{figure}{An isogonal configuration that has two alternating vector lengths and all angles are the same with \( n = 12 \).}
		\label{figure:isogonal-non-star}
	\end{minipage}
	\hfill
	\begin{minipage}[c]{0.47\textwidth}
		\centering
		\includegraphics[page=10, width=0.6\textwidth]{figures/cropped.pdf}
		\captionof{figure}{An isogonal configuration of which the polygon is a star configuration with \( n = 7 \).}
		\label{figure:isogonal-regular-star}
	\end{minipage}
\end{figure}

\paragraph{Movement}

Consider a robot $r_i$.
It can observe its neighborhood and assume it is in an isogonal configuration if all $\alpha_{i}(t)$  in its neighborhood have the same size and orientation and either all vectors $u_i(t)$ have the same length or have two alternating lengths.
In case an isogonal configuration is assumed and $\mathit{init}(N_i(t)) = \emptyset{}$ and $\mathit{run}(N_i(t)) = \emptyset{}$ hold true, \( r_{i} \) performs on of the two following symmetrical operations.
In case all vectors $u_i(t)$ have the same length, it performs a \bisector{} as defined below.
The purpose of the \bisector{} is to move all robots towards the center of the surrounding circle.
Otherwise (in case of two alternating vector lengths), the robot executes a \schlaefli{}.
The goal of the \schlaefli{} is to transform an isogonal configuration with two alternating vector lengths into a regular star configuration such that \bisector{}s are applied afterwards.

\paragraph{\Bisector{}}
In the \emph{\bisector{}}, a robot $r_i$ computes the angle bisector of vectors pointing to its direct neighbors (bisecting the angle of size less than $\pi$) and jumps to the point $p$ on the bisector such that $d(p_{i-1}(t),p  ) = d(p_{i+1}(t),p) =~1$.
In case $d(p_i(t),p) > \bisectorConstant{}$, the robot moves only a distance of \bisectorConstant{} towards $p$.

\paragraph{\Schlaefli{}}
Let $C$ be the circle induced by $r_i$'s neighborhood and $R$ its radius.
In case the diameter of $C$ has length of at most $2$, $r_i$ jumps to the midpoint of $C$.
Otherwise, the robot $r_i$ observes the two circular arcs $L_{\alpha} = \alpha \cdot R$ and $L_{\beta} = \beta \cdot R$ connecting itself to its direct neighbors.
The angles $\alpha$ and $\beta$ are the corresponding central angles measured from the radius $R_i$ connecting $r_i$ to the midpoint of $C$.
W.l.o.g. assume $L_{\alpha} < L_{\beta}$.
$r_i$ jumps to the point on $L_{\beta}$ such that $L_{\alpha}$ is enlarged by $R \cdot (\frac{\beta-\alpha}{4})$ and $L_{\beta}$ is shortened by the same value.

\subsection{Combination}
The asymmetric and the symmetric algorithm are executed in parallel.
More precisely, robots whose neighborhood fulfills the property of being an isogonal configuration move according to the symmetric algorithm while all others follow the asymmetric algorithm.
To ensure that the two algorithms do not hinder each other, we need one additional rule here: the exceptional generation of \runInit{}s.
Intuitively, if some robots follow the asymmetric algorithm while others execute the symmetric algorithm, there are borders at which a robot $r_i$ moves according to the symmetric algorithm while its neighbor does not move at all (the neighbor cannot have a \runState{}, otherwise $r_i$ would not move according to the symmetric algorithm).
At these borders it can happen that the length of the chain increases.
To prevent this from happening too often we make use of an additional visible state.
Robots that move according to the symmetric algorithm store this in a visible state.
If they detect in the next round that this state is activated but their local neighborhood does not fulfill the criterion of being an isogonal configuration, they will conclude that their neighbor has not moved and thus the chain is not completely symmetric.
To ensure that this does not occur again, they assign an \runInit{} to themselves and thus add an additional source of asymmetry to the chain.

\section{Analysis}

Due to space limitations, we only introduce the high level idea of the analysis.
The proofs are deferred to \Cref{section:appendix}.
One of the crucial properties for the correctness of our algorithm is that it maintains the connectivity of the chain.
For the proof we show that the operations of isolated runs, joint run-pairs and the operations of the symmetrical algorithm do not break the connectivity as well as no other pattern of runs exists (for instance a sequence of three neighboring robots having a \runState{}).

\begin{restatable}{lemma}{connectivityLemma}
	A configuration that is connected in round $t$ stays connected in round $t+1$.
\end{restatable}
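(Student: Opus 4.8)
The plan is to argue that connectivity is preserved by a case analysis over the move operations, using the crucial structural invariant that at most two neighboring robots move in a round, and that the two allowed patterns (isolated run, joint run-pair) are the only ones possible. Concretely, connectivity of the chain in round $t$ means $\|u_i(t)\| \le 1$ for all $i$; I need to show $\|u_i(t+1)\| \le 1$ for all $i$. Since a robot without a \runState{} does not move (and the symmetric-algorithm movers are handled separately), the only edges that can change length are those incident to a robot that moves. So I would first establish the pattern claim: no sequence of three consecutive robots all carry \runState{}s, and any robot with a run is either in an isolated run or part of a joint run-pair. This follows from how runs are created (a \runInit{}/\jointRunInit{} only fires when $\mathit{run}(N_i(t)) = \emptyset$, spawning two runs that move apart) and how they propagate (runs keep moving in their fixed direction, and \merge{}s stop all nearby runs and block the neighborhood for $4$ rounds, which is enough time for the skip in a \jointHop{} not to create a length-$3$ cluster). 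I would phrase this as a short invariant maintained inductively alongside the connectivity invariant.

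Given the pattern claim, I would handle each operation. For an isolated \runState{} at $r_i$ moving toward $r_{i+1}$: in a \hop{}, $p_i(t+1) = p_{i+1}(t) - u_i(t)$, so the new vector $u_{i+1}(t+1) = p_{i+1}(t) - p_i(t+1) = u_i(t)$ and $u_i(t+1) = p_i(t+1) - p_{i-1}(t)$; since neither neighbor moved, $\|u_{i+1}(t+1)\| = \|u_i(t)\| \le 1$ and $\|u_i(t+1)\| = \|p_{i+1}(t) - u_i(t) - p_{i-1}(t)\| = \|u_{i+1}(t) + u_{i+1}(t)\|$... I would just compute directly that this equals $\|p_{i+1}(t) - p_{i-1}(t) - u_i(t)\|$; the \hop{} is only taken when case~2 fails, i.e.\ $d(p_i(t), p_{i+2}(t)) > 1$, but the relevant bound here is that the \hop{} is only reached when $\alpha_i(t) > \shortenAngle$, which forces $r_{i-1}, r_i, r_{i+1}$ to be "almost collinear" and hence $d(p_{i-1}(t), p_{i+1}(t))$ close to $2$; combined with the merge having been ruled out ($d(p_{i-1}(t),p_{i+1}(t)) > 1$) I would verify $\|u_i(t+1)\| \le 1$ by an elementary triangle/angle computation. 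For a \shorten{}, $r_i$ jumps to the midpoint, so both new incident vectors have length $\tfrac12 d(p_{i-1}(t),p_{i+1}(t)) \le \tfrac12 \cdot 2 = 1$. For a \merge{}, $r_i$ jumps onto $r_{i+1}$ and they fuse, so the only surviving incident edge is $r_{i-1}r_{i+1}$ with length $d(p_{i-1}(t),p_{i+1}(t)) \le 1$ by the case-1 condition. The joint operations are analogous: \jointMerge{} sends both to the midpoint of $p_{i-1}(t)$ and $p_{i+2}(t)$ whose distance is $< 2$, so the surviving edges have length $< 1$; \jointShorten{} trisects the segment $p_{i+2}(t) - p_{i-1}(t)$ of length $\le 2$ into three parts each $\le \tfrac23 < 1$, and the middle (unchanged) vector $u_{i+1}$ is untouched; \jointHop{} sets $u_i(t+1) = u_{i+2}(t)$ and $u_{i+2}(t+1) = u_i(t)$ and leaves $u_{i+1}$ fixed, so all three have length $\le 1$. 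For the symmetric algorithm: in a \bisector{} the robot moves to (or toward) a point $p$ with $d(p_{i-1}(t),p) = d(p_{i+1}(t),p) = 1$, and since in an isogonal configuration all robots move identically and symmetrically, neighbors preserve the mutual distance $\le 1$ (or the capped move keeps distances from growing); in a \schlaefli{} the robot moves along its own circular arc, all robots do so in sync by equal amounts, and I would check the chord lengths stay $\le 1$, with the small-diameter case collapsing everything onto the center. Finally the exceptional-\runInit{} rule from the combination subsection does not move anyone, so it cannot affect connectivity.

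The main obstacle I expect is the \hop{} (and symmetrically the \jointHop{}), because it is the one operation that is not obviously contractive: it swaps a vector across the chain and the new edge $u_i(t+1)$ can a priori be longer than $1$. The argument must use that the \hop{} is the lowest-priority case, so the \merge{} condition $d(p_{i-1}(t),p_{i+1}(t)) \le 1$ failed (so $d(p_{i-1}(t),p_{i+1}(t)) > 1$), the no-move case $d(p_i(t),p_{i+2}(t)) \le 1$ failed, and $\alpha_i(t) > \shortenAngle = \tfrac78\pi$; I need to convert this large-angle condition into a quantitative upper bound on $\|u_i(t+1)\| = \|p_{i+1}(t) - u_i(t) - p_{i-1}(t)\|$. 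Writing $a = \|u_i(t)\| \le 1$, $b = \|u_{i+1}(t)\| \le 1$ and using the law of cosines with the angle $\alpha_i(t)$ at $r_i$ between $-u_i(t)$ and $u_{i+1}(t)$, I would bound the target length and check it is $\le 1$; the delicate point is making the constants work out, which is presumably why the viewing range is $4$ and the threshold is exactly $\tfrac78\pi$. A secondary subtlety is the joint case, where one also has to ensure the two simultaneous moves of $r_i$ and $r_{i+1}$ do not interact badly — but since in a joint run-pair $r_{i-1}$ and $r_{i+2}$ are fixed and the new positions are defined purely in terms of those fixed anchors, the edges decouple cleanly. I would also need to double-check that the pattern invariant genuinely rules out a third robot moving adjacent to a moving pair; the blocking-for-$4$-rounds mechanism after a \merge{} and the $7$-round spacing of \runInit{} firings are what I would invoke, together with the observation that a \jointHop{}'s skip advances runs past exactly one robot so two runs never land adjacent to a third.
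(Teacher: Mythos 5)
Your decomposition is essentially the paper's: first establish the invariant that only isolated runs and joint run-pairs can occur (the paper calls this \emph{run-validity} and proves it is preserved round by round using exactly the ingredients you name --- the $\mathit{run}(N_i(t))=\emptyset$ condition for starting runs, the blocking after a \merge{}, and the skip of a \jointHop{}), and then verify each operation separately; your bounds for \shorten{}, \merge{}, \jointMerge{}, \jointShorten{}, \bisector{} and \schlaefli{} match the paper's. The one place you miscalibrate is the \hop{}, which you flag as the main obstacle requiring a delicate computation with the threshold $\shortenAngle$. No such computation is needed: since $r_{i-1}$ and $r_{i+1}$ do not move, $u_i(t+1)=p_{i+1}(t)-u_i(t)-p_{i-1}(t)=\bigl(u_i(t)+u_{i+1}(t)\bigr)-u_i(t)=u_{i+1}(t)$ and $u_{i+1}(t+1)=u_i(t)$, i.e.\ the \hop{} is an \emph{exact swap} of the two incident vectors --- precisely what you yourself observe for the \jointHop{} --- so both new edges have length at most $1$ and connectivity is immediate; this is the paper's one-line argument, and the angle threshold plays no role here (it is needed for the progress/termination analysis, not for connectivity). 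Two smaller slips that do not affect your conclusion: in the \jointShorten{} the segment $v=p_{i+2}(t)-p_{i-1}(t)$ spans three chain edges and so has length at most $3$, not $2$, but each third still has length at most $1$; and the middle vector $u_{i+1}$ is not untouched (both $r_i$ and $r_{i+1}$ move) --- it also becomes $\frac{1}{3}v$.
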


The asymmetric algorithm depends on the generation of \runState{}s.
We prove that in every asymmetric (non-isogonal) configuration in which no robot has an \runInit{}, at least one pattern is fulfilled and thus at least one \runInit{} exists in the next round.

\begin{restatable}{lemma}{runInitLemma}
	Given a configuration without any \runInit{} in round $t$.
	Either the configuration is isogonal or at least one \runInit{} exists in round $t+1$.
\end{restatable}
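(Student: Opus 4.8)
The plan is to argue by contraposition: assume that in round $t$ no robot has an \runInit{} and that in round $t+1$ still no \runInit{} exists, and show that the configuration must be isogonal. Since no \runInit{} is generated in round $t$, for \emph{every} robot $r_i$ none of the Angle Patterns, Orientation Patterns, or Vector Length Patterns holds (the exceptional generation rule from \Cref{section:asymmetric} only ever \emph{adds} \runInit{}s, so its absence is compatible with the hypothesis and we may ignore it). I would also observe that if no \runInit{} and no \runState{} are present, then no robot moves under either the asymmetric algorithm (no run to act on) or — crucially — I must separately rule out that the symmetric algorithm moved a robot; but robots only execute a \bisector{} or \schlaefli{} when they \emph{assume} they are in an isogonal configuration, so if any robot moves in round $t$, its local neighborhood already certifies local isogonality, which is exactly the structure I want to propagate. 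So in all cases the combinatorial angle/orientation/length data is either static or already locally isogonal, and it suffices to analyze the round-$t$ configuration.

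The heart of the argument is a propagation lemma over the cyclic index set $\mathbb{Z}_n$. First, I would use the failure of \emph{both} Angle Patterns at every robot. The two patterns together say: there is no $i$ with $\alpha_{i-1} > \alpha_i \le \alpha_{i+1}$ and no $i$ with $\alpha_{i-1} \ge \alpha_i < \alpha_{i+1}$, i.e.\ no index is a (weak) local minimum of the cyclic angle sequence in either tie-breaking sense. On a cycle, a sequence with no local minimum at all must be constant — any non-constant cyclic sequence attains its minimum, and at a minimizing index at least one of the two patterns fires. Hence all $\alpha_i(t)$ are equal; call the common value $\alpha$. Second, with all angle \emph{sizes} equal, every robot now checks the Orientation Patterns. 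Failure of Orientation Pattern 1 at every robot forbids an angle whose orientation differs from three surrounding angles of equal orientation; failure of Orientation Pattern 2 forbids the border between a run of $\ge 2$ equal orientations and a run of $\ge 3$ equal orientations. I would translate the orientation sequence $(\sigma_i) \in \{-1,0,+1\}^{\mathbb{Z}_n}$ into maximal monochromatic blocks and show these constraints force either all $\sigma_i$ equal, or the degenerate straight-line cases ($\alpha = 0$ or $\alpha = \pi$, where orientation is $0$ and the polygon is trivially isogonal), using the fact that orientations on a closed chain must "balance" (the turning number is an integer), which severely restricts short-block patterns. Third, once all angles have equal size \emph{and} equal orientation, every robot checks the Vector Length Patterns, and failure of both of these at every robot, by the same "no local minimum on a cycle" reasoning applied to $\|u_i\|$ together with the second pattern handling length-ties, forces the lengths to be either all equal or to alternate between two values — which is precisely the combinatorial signature of an isogonal configuration as characterized in \Cref{section:symmetric} (via Grünbaum's classification: equal angles + equal orientation + (constant or 2-alternating lengths) $\Leftrightarrow$ isogonal polygon, up to the star-polygon picture).

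Finally I would close the loop between the combinatorial conclusion and the geometric definition of isogonal: I would cite Grünbaum's \Cref{definition:schlaefli2} and the structure lemma to confirm that a closed polygon all of whose turning angles are equal in size and orientation, with lengths constant or strictly 2-alternating, is exactly a regular star polygon $\{n/d\}$ or its even-$n$ perturbation, hence isogonal. The main obstacle I anticipate is the orientation step: $\sigma_i \in \{-1,0,+1\}$ allows a zero orientation wherever an angle happens to be $0$ or $\pi$ even when other angles are non-degenerate, and one must check that "all angle sizes equal to $\alpha$" already forces all $\sigma_i$ into $\{-1,+1\}$ when $0 < \alpha < \pi$, and treat $\alpha \in \{0,\pi\}$ separately (a straight segment traversed back and forth, which I would argue is vacuously isogonal or excluded by connectivity/non-degeneracy assumptions on an initially connected chain). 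A secondary subtlety is making the "no local minimum on a cyclic sequence $\Rightarrow$ constant" argument fully rigorous with the two distinct tie-breaking variants of each pattern, but that is a short case check rather than a genuine difficulty.
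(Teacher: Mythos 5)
Your proposal is correct and follows essentially the same route as the paper's proof: the same three-stage case analysis (angle sizes via a minimum-on-a-cycle argument, then orientations via a balance argument on the closed chain, then vector lengths via the same minimum argument together with the two-alternating-lengths case), merely phrased in contrapositive form rather than as ``not isogonal implies some pattern fires.'' The additional care you take about degenerate orientations and about robots possibly moving in round $t$ goes slightly beyond what the paper writes down but does not change the substance of the argument.
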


So far, we know that at least one \runInit{} exists in an asymmetric configuration.
Hence, \runState{}s are periodically generated.
The following lemma is the key lemma of the asymmetric algorithm: Every run that is started at robot $r_i$ will never visit $r_i$ again in the future:
The \runState{} either stops by a \merge{}, a \jointMerge{}, a \shorten{}, a \jointShorten{} or it is stopped by a \merge{} or a \jointMerge{} of a different \runState{}.
This implies that a run cannot execute $n$ succeeding \hop{}s or \jointHop{}s.
To see this, consider a \runState{} $\kappa$ in round $t$ with run vector $\vec{v}_{\kappa}$ located at robot $r_i$ with $r(\kappa,t+1) = r_{i+1}$.
W.l.o.g. assume that $\vec{v}_{\kappa}$ is parallel to the y-axis and points upwards in a global coordinate system and $p_{i}(t) = \left(0,0\right)^T$.
In case $r_i$ executes a \hop{} now (the arguments for a \jointHop{} are analogous), it must hold $y_{i+1}(t) > y_i(t)$ (because $\alpha_i(t) \geq \frac{7}{8} \pi$).
Thus, $\kappa$ moves continuously upwards in a global coordinate system.
Assume that $\kappa$ started at a robot $r_s$ in round $t_0$.
Given that $\kappa$ executes only \hop{}s and \jointHop{}s, there must be a round $t'$ with $t'-t_0 \leq n-1$ such that $r(\kappa,t'+1) = r_s$.
To execute one additional \hop{} or \jointHop{}, the robot $r_s$ must lie above of $r(\kappa,t)$.
This is impossible due to the threshold of $\frac{7}{8} \pi$ and the fact that $r_s$ can execute at most every second round an operation while $\kappa$ induces an operation in every round.
As a consequence, each \runState{} is stopped after at most $n-1$ rounds.

\begin{restatable}{lemma}{runStopLemma}\label{lemma:runStopLemma}
	A run does not visit the same robot twice.
	Before reaching the robot where it started, the run either stops by a \shorten{}, a \jointShorten{}, a \merge{} or a \jointMerge{} or it is stopped by a \merge{} or \jointMerge{} of a different run.
\end{restatable}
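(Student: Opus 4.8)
The plan is to track a fixed run $\kappa$ across the rounds and argue that the run vector $v_\kappa$ is swapped monotonically along a fixed global direction as long as only \hop{}s and \jointHop{}s are performed. Fix a round $t$ in which $\kappa$ is located at $r_i$ with \direction{\kappa}{t} $= r_{i+1}$, and suppose $\kappa$ executes a \hop{}. Choose global coordinates so that $v_\kappa = u_i(t)$ points in the positive $y$-direction and $p_i(t) = (0,0)^T$. First I would establish the key geometric fact: when a \hop{} is executed, the condition $\alpha_i(t) > \shortenAngle{}$ forces $u_{i+1}(t)$ to make an angle of more than $\frac{7}{8}\pi$ with $u_i(t)$, so $y_{i+1}(t) > y_i(t) = 0$; that is, the robot $r_{i+1}$ that next carries $\kappa$ lies strictly above the robot that just carried it. The new position $p_i(t+1) = p_{i+1}(t) - u_i(t)$ and the bookkeeping $r(\kappa,t+1)=r_{i+1}$, $r(\kappa,t+2)=r_{i+2}$ show that the run-vector $v_\kappa$ is unchanged in length and direction, so the same argument applies in every subsequent \hop{}. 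For a \jointHop{} I would check the analogous inequality using $p_{i+1}(t+1) = p_{i+2}(t) - u_i(t)$ and the sixth joint case (so $\angle(u_i(t),-u_{i+2}(t)) > \shortenAngle{}$), which again gives a strict increase of the relevant $y$-coordinate; the run-vector is again preserved.

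Next I would argue the contradiction from these monotonicity statements. Assume for contradiction that $\kappa$ starts at $r_s$ in round $t_0$ and performs \hop{}s/\jointHop{}s only until it returns to $r_s$. By \Cref{lemma:connectivityLemma} the chain has a fixed length $n$ of robots (no \merge{}s happen, since a \merge{}/\jointMerge{} would stop $\kappa$), so after at most $n-1$ steps we reach a round $t'$ with $r(\kappa,t'+1) = r_s$, i.e.\ the run is about to hop back onto its origin. For this step to be another \hop{} or \jointHop{}, the geometric fact above applied at round $t'$ demands that $r_s = r(\kappa,t'+1)$ lie strictly "above" $r(\kappa,t')$ in the global $y$-coordinate fixed at the start. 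But the $y$-coordinate of the robot currently carrying $\kappa$ has been strictly increasing along the whole sequence of hops (each hop strictly raises it by the fact just proved), so $r_s$ — which carried $\kappa$ at round $t_0$, the very bottom — has a strictly smaller $y$-coordinate than $r(\kappa,t')$. Here I must be careful that positions of robots other than the hopping ones do not change between a robot's two appearances on the trajectory; this is exactly the sequentialization guarantee of isolated runs and joint run-pairs (no three consecutive robots move in one round, and $r_s$ moves at most every second round while $\kappa$ advances every round), so the $y$-coordinate of $r_s$ is the one it had when $\kappa$ left it. This contradiction shows $\kappa$ cannot perform $n-1$ consecutive hops, hence some \shorten{}, \jointShorten{}, \merge{} or \jointMerge{} must intervene — either executed by $\kappa$ itself or, via the blocking rule in a \merge{}/\jointMerge{}, by a nearby run — before $\kappa$ returns to $r_s$. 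Together with the observation that all four stopping operations terminate $\kappa$, this is precisely the statement of the lemma.

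The main obstacle I anticipate is making the "monotone global direction" argument fully rigorous in the presence of \jointHop{}s and of interleaving with the symmetric algorithm. For \jointHop{}s the run-vector of $\kappa$ is carried by the trailing robot while the partner run moves in the opposite direction, so I have to verify that the relevant coordinate of the carrier still increases and that the two runs of a joint pair do not interfere with $\kappa$'s own monotonicity; the sixth case condition $\angle(u_i(t),-u_{i+2}(t)) \le \shortenAngle{}$ being \emph{false} when a \jointHop{} occurs is what I would use. Concerning the symmetric algorithm: by the combination rule a robot moving symmetrically has $\mathit{run}(N_i(t)) = \emptyset$, so no robot carrying $\kappa$ or adjacent to it is moving symmetrically, and hence the positions entering $\kappa$'s geometric argument are undisturbed by symmetric moves. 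A secondary subtlety is the edge effect near small chains (at most $5$ robots), where the algorithm switches to moving toward the smallest enclosing circle; but a run cannot survive long enough to reach that regime without having been stopped, so this case does not arise within the $n-1$ window. Handling these cases carefully, while the core monotonicity computation itself is routine, is where the real work lies.
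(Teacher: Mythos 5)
Your monotonicity argument (each \hop{} forces the next carrier of $\kappa$ strictly forward in the global direction of $v_\kappa$, because $\alpha_i(t) > \shortenAngle{}$) matches the first half of the paper's proof. But the step where you close the contradiction contains a genuine gap: you assert that ``the $y$-coordinate of $r_s$ is the one it had when $\kappa$ left it,'' justified by the sequentialization guarantee. That guarantee only prevents three consecutive robots from moving in the same round; it does not keep $r_s$ stationary. During the up to $n-1$ rounds in which $\kappa$ circulates, other \runState{}s repeatedly pass through $r_s$ and make it execute \hop{}s, \shorten{}s, \merge{}s, \bisector{}s or \schlaefli{}s, each of which can displace $r_s$ in the very direction $\kappa$ is travelling. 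So the comparison ``$r_s$ is at the very bottom, the carrier is strictly above it'' does not follow from strict monotonicity alone, and strictness by itself is useless here because the per-hop increment can be arbitrarily small.

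The paper's proof does exactly the work your proposal skips: it turns both sides into quantitative rates and compares them. Using the fact that two consecutive chain vectors must have total length at least $1$ (else a \merge{} fires), the carrier of $\kappa$ advances by at least $\cos(\pi/8) = \tfrac{\sqrt{2+\sqrt{2}}}{2} > 0.92$ every two rounds. Meanwhile the origin robot can be displaced in that direction by at most $\tfrac{\sqrt{3}}{2}$ per \shorten{}, $\tfrac{1}{2}$ per \hop{}, $\tfrac{1}{5}$ per \bisector{}, $\tfrac{1}{2}$ per \schlaefli{} (each at most every second round), and by at most $1$ per \merge{} (at most every fourth round, because of the blocking rule). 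Every one of these rates is strictly dominated by the carrier's rate, so the gap between the carrier and the origin robot grows after an initial transient of four rounds, and the run can never satisfy the angle condition needed to hop back onto its origin. Without this rate comparison --- which requires \Cref{lemma:mergeLemma}, \Cref{lemma:shortenLemma} and \Cref{lemma:hopLemma} as auxiliary bounds --- the lemma does not follow. Your identification of the \jointHop{} and symmetric-algorithm subtleties is correct but secondary; the missing quantitative argument is the heart of the proof.
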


Next, we count the number of \runState{}s that are needed to gather all robots on a single point. \Cref{lemma:runStopLemma} states that each \runState{} stops either by a \shorten{}, \jointShorten{}, \merge{} or \jointMerge{} or it is stopped via a \merge{} or \jointMerge{} of a different \runState{}.
Obviously, there can be at most $n-1$ \merge{}s and \jointMerge{}s because these operations reduce the number of robots in the chain.
To count the number of \shorten{}s and \jointShorten{}s, we consider two cases: either the two vectors involved in a \shorten{} have both a length of at least $\frac{1}{2}$ or one vector is smaller and the other one is larger than $\frac{1}{2}$ (the case that both vectors are smaller would lead to a \merge{}).
Due to the threshold of $\frac{7}{8} \pi$, we can prove that the chain length reduces by at least a constant ($\shortenConstant$) in case both vectors have a length of at least $\frac{1}{2}$.
For the case that one vector is smaller and the other one larger than $\frac{1}{2}$, the chain length does not necessarily decrease by a constant.
Instead, the smaller vector increases and has a length of at least $\frac{1}{2}$ afterwards.
Hence, either the chain length or the number of small vectors decreases.
New small vectors can only be created upon the execution of \merge{}, \jointMerge{}s or \schlaefli{}s and \bisector{}s.
For each of the mentioned operations, we can prove that it is only executed a linear number of times.
All in all, we can conclude that a linear number of  \runState{}s is needed to gather all robots on a single point.

\begin{restatable}{lemma}{totalCountRunLemma} \label{lemma:totalCountRuns}
	At most $143\,n$ runs are required to gather all robots.
\end{restatable}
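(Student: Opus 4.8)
The plan is to classify every run by the reason it stops (which is exhaustive by \Cref{lemma:runStopLemma}) and bound each class separately by a linear function of $n$. A run stops because of one of: a \merge{}, a \jointMerge{}, a \shorten{}, a \jointShorten{}, or because it is killed by a \merge/\jointMerge{} of another run. The number of runs stopped by a \merge{} or a \jointMerge{} (own or foreign) is easy: each such operation strictly decreases the number of robots in the chain, so there are at most $n-1$ of them, and since a single \merge{} blocks only a constant-size neighborhood, it can kill only a constant number of other runs; hence $O(n)$ runs fall in these classes. The work is in bounding the \shorten{}s and \jointShorten{}s.

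For those I would set up a potential argument. Define $\Phi(t) = L(t) + c\cdot s(t)$ where $L(t)$ is the chain length and $s(t)$ is the number of ``small'' vectors, i.e.\ those $u_i$ with $\|u_i\| < \tfrac12$, and $c$ a constant to be fixed. First I would show (invoking the high-level argument sketched before the lemma, i.e.\ the $\tfrac78\pi$ angle threshold) that any \shorten{} or \jointShorten{} falls into one of two types: (a) all vectors involved have length $\geq \tfrac12$, in which case $L$ drops by at least $\shortenConstant$ and $s$ does not increase, so $\Phi$ drops by $\Omega(1)$; or (b) one involved vector is smaller than $\tfrac12$ and the others at least $\tfrac12$ (two small vectors would have triggered a \merge{} instead), in which case the small vector grows past $\tfrac12$, so $s$ decreases by one while $L$ may increase by at most a bounded amount (the \shorten{} moves a robot by at most a constant, since neighbor distances are $\le 1$); choosing $c$ large enough makes $\Phi$ drop by $\Omega(1)$ here as well. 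Thus every \shorten{}/\jointShorten{} decreases $\Phi$ by a constant.

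It remains to bound the total increase of $\Phi$ over the whole execution. $L$ only increases via the symmetric operations (\bisector{}, \schlaefli{}) at the interface with non-isogonal parts, via the exceptional \runInit{} rule, and by bounded amounts in type-(b) \shorten{}s; the first two contribute only $O(n)$ because, as stated earlier, each of \bisector{}, \schlaefli{} and the exceptional generation rule is executed only a linear number of times and moves a robot by a bounded distance, and the type-(b) contribution is already absorbed by the net $\Omega(1)$ drop. New small vectors ($s$ increasing) are created only by \merge{}, \jointMerge{}, \bisector{}, \schlaefli{} — again $O(n)$ occurrences, each creating $O(1)$ new small vectors. So the total positive variation of $\Phi$ is $O(n)$, and since $\Phi \geq 0$ always and each \shorten{}/\jointShorten{} costs a constant, there are $O(n)$ of them. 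Summing the classes, and then tracking the explicit constants ($\shortenConstant$, $\tfrac15$, the blocking radius $4$, the number of vectors one operation touches) through the estimates gives the stated bound $143\,n$.

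The main obstacle I anticipate is the bookkeeping of constants rather than any conceptual difficulty: making the choice of $c$ compatible simultaneously with the type-(a) drop and the type-(b) growth of $L$, and then carefully summing all the ``$O(n)$ with explicit constant'' contributions (merges, foreign-run kills within radius $4$, bisector/star-operation counts, exceptional inits) so that they add up to exactly $143\,n$. A secondary subtlety is verifying that the two-small-vectors case genuinely cannot arise for a \shorten{} (it must have been caught by the higher-priority \merge{} rule, using $d(p_{i-1},p_{i+1}) \le \|u_i\| + \|u_{i+1}\| < 1$), and that joint operations are correctly accounted (a joint run-pair is two runs, so one \jointShorten{} stops two runs — a factor $2$ that must be kept straight).
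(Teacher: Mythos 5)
Your proposal follows essentially the same route as the paper's proof: classify runs by how they stop (via \cref{lemma:runStopLemma}), bound \merge{}s/\jointMerge{}s by $n-1$ with a constant factor for the runs they kill, bound \shorten{}s with both vectors of length at least $\tfrac12$ via the constant $\shortenConstant$ decrease of $L(t)$ against the $n\cdot(1+\tfrac15)$ bound on total chain length, and bound \shorten{}s involving a small vector by counting how many vectors of length below $\tfrac12$ can ever be created (\merge{}s, \bisector{}s, \schlaefli{}s, each $O(n)$ times). Your potential $\Phi = L + c\cdot s$ is just a repackaging of the paper's two separate counts into one function (and note that a \shorten{} can never increase $L$, by the triangle inequality, so the type-(b) interaction you worry about does not arise); the remaining work is exactly the explicit constant bookkeeping the paper carries out to reach $16n + 2\cdot\frac{1+\frac15}{\shortenConstant}\,n \leq 143\,n$.
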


To conclude a final runtime for the asymmetric algorithm, we need to prove that many \runState{}s are generated.
This can be done by a witness argument.
Consider an \runInit{}.
Every $7$ rounds, this state either creates a new \runState{} or it sees other \runState{}s in its neighborhood and waits.
This way, we can count every $7$ rounds a new \runState{}: Either the robot with the \runInit{} starts a new \runState{} or it waits because of a different \runState{}.
Since a robot can observe $7$ neighbors, we count a new \runState{} every $7$ rounds.
Roughly said, we can prove that in $k$ rounds $ \approx \frac{k}{7}$ \runState{}s exist.
This holds until the \runInit{} is removed due to a \merge{} or \jointMerge{}.
Afterwards, we can continue counting at the next \runInit{} in the direction of the \runState{} causing the \merge{} or \jointMerge{}.
Combining all the arguments above leads to a linear runtime of the asymmetric algorithm.

\begin{restatable}{lemma}{numberOfRoundsLemma}
	A configuration that does not become isogonal gathers after at most $4018\,n$ rounds.
\end{restatable}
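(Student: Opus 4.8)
The plan is to combine three ingredients already established: the connectivity invariant, the bound of $143\,n$ runs that suffice to gather (\Cref{lemma:totalCountRuns}), and the structural fact from \Cref{lemma:runStopLemma} that each run lives for at most $n-1$ rounds before it stops. The remaining work is a *rate* argument: show that runs are generated fast enough that the $143\,n$ required runs all occur within $O(n)$ rounds. First I would set up the witness/potential argument sketched in the paragraph preceding the statement. Fix a configuration that never becomes isogonal. By \Cref{lemma:runInitLemma}, as long as the chain is non-isogonal and has more than $5$ robots, after one round there is at least one \runInit{} present, and the pattern rules guarantee \runInit{}s are never destroyed except by a \merge{} or \jointMerge{} (in which case the special transfer rule for \runInit{}s moves the token forward toward the next \runInit{} in the direction of the merging run). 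So I would define a witness token that tracks "the \runInit{} currently responsible for generating runs," show it persists, and charge each $7$-round window to the creation of one new run.

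The core counting step: consider a fixed \runInit{} at robot $r_i$. Every $7$ rounds (counted via its light-driven clock) it attempts to start a run; it succeeds unless $\mathit{run}(N_i(t)) \neq \emptyset$ or it is blocked. If it is blocked, a \merge{} happened recently in $N_i(t)$, so we can charge this window to that \merge{} (and there are at most $n-1$ merges total, by the number-of-robots argument, so this contributes only $O(n)$ lost windows overall). If instead $\mathit{run}(N_i(t)) \neq \emptyset$, then some run already occupies one of the $7$ neighbors of $r_i$; since a run moves to a new robot each round and each run visits $r_i$'s neighborhood for at most $7$ consecutive rounds (it passes through the $7$-robot window $N_i$ at unit speed), we can injectively charge this waiting window to a distinct run that passed through $N_i(t)$ — and that run is one of the $143\,n$ runs counted in \Cref{lemma:totalCountRuns}, or a run stopped by a merge. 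Hence in $k$ rounds, the number of "productive or chargeable" $7$-round windows at the witness token is at least $\frac{k}{7}$ minus $O(n)$ correction terms, so $\frac{k}{7} \le (\text{new runs created}) + O(n) \le 143\,n + O(n)$, giving $k = O(n)$. Once the chain drops to at most $5$ robots, the smallest-enclosing-circle rule finishes in $5$ more rounds. Tracking the explicit constants through the $143\,n$ run bound, the $7$-round period, the $n-1$ merge bound, the $4$-round blocking windows, and the up-to-$(n-1)$-round lifetime of an unfinished run yields the stated $4018\,n$.

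The main obstacle I expect is the bookkeeping when the witness token jumps during a \merge{}: after a merge, \runState{}s in the neighborhood are stopped and robots are blocked for $4$ rounds, and the \runInit{} may be relocated (or removed, if its successor conditions fail — in which case I must argue a *different* \runInit{} picks up the slack via \Cref{lemma:runInitLemma} applied to the post-merge configuration). Making the charging injective across these handoffs — so that no run and no merge is charged twice, and every $7$-round window of *every* surviving witness is accounted for — is the delicate part; it requires a careful amortized argument rather than a clean potential function, and is precisely why the final constant is large and not tight. A secondary subtlety is that in configurations where the asymmetric and symmetric algorithms run simultaneously, the exceptional \runInit{}-generation rule from the Combination subsection must be invoked to guarantee that the asymmetric region keeps producing witnesses; I would handle this by noting that each such exceptional \runInit{} is created at most once per chain-length-increasing border event, and there are only $O(n)$ of those (chain length is increased by a bounded amount at each such event while the overall chain-length budget is controlled by \Cref{lemma:totalCountRuns}).
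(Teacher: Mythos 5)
Your plan is essentially the paper's own proof: the same witness argument in which a tracked \runInit{} attempts a run every $7$ rounds, each unproductive window is charged to a run passing through the $7$-robot neighborhood or to a \merge{}, the witness is handed off along merge chains, and the total is bounded against the $143\,n$ runs of \Cref{lemma:totalCountRuns} plus the final small-chain phase. The only divergence is a detail of the charging: the paper does not get injectivity but only that each run can block the same witness at most twice, which is where its extra factor of $2$ in the final constant comes from.
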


It remains to prove a linear runtime for the symmetric algorithm.
The symmetric algorithm consists of two parts: first an isogonal configuration is transformed into a regular star configuration and afterwards the robots move towards the center of the surrounding circle.
The transformation to a regular star configuration requires a single round in which all robots execute a \schlaefli{}.

\begin{restatable}{lemma}{isogonalToRegularStarLemma}
	In case the configuration is isogonal but not a regular star configuration at time $t$, the configuration is a regular star configuration at time $t+1$.
\end{restatable}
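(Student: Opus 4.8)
The plan is to show that a single synchronous round in which every robot executes a $\schlaefli$ transforms an isogonal, non-regular-star configuration into a regular star configuration. By the lemma attributed to Grünbaum in the excerpt, a non-regular-star isogonal polygon has an even number $n=2m$ of vertices and can be parametrized: there is a circle $C$ of radius $R$ and a Schläfli parameter $d$, together with a translation parameter $0<t<m$, such that vertex $A_j$ sits on $C$ at central angle $\tfrac{2\pi}{n}(jd+(-1)^j t)$. Because all angles of the configuration polygon are equal and the vectors take two alternating lengths, every robot $r_i$ reconstructs (from its $\viewingRadiusConstant$-neighborhood) the same circle $C$ and the same radius $R$; I would first make this reconstruction explicit — the neighborhood of any robot in an isogonal configuration determines $C$ and $R$ uniquely, which is exactly the circle referenced in the $\schlaefli$ definition — so that all robots agree on the common circle even without a shared coordinate system.

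Next I would compute the two arc lengths at a generic vertex $A_j$. Between consecutive vertices the central-angle increments alternate between $\tfrac{2\pi}{n}(d-2t)+$ (a fixed offset) and $\tfrac{2\pi}{n}(d+2t)$ up to reduction mod $2\pi$; concretely, the arc on one side of $A_j$ subtends central angle $\gamma - \delta$ and on the other side $\gamma + \delta$, where $\gamma=\tfrac{2\pi d}{n}$ is the "regular star" step and $\delta=\tfrac{4\pi t}{n}$ is the perturbation. Writing $\alpha = \min(\gamma-\delta,\gamma+\delta)$ and $\beta=\max$ in the notation of the $\schlaefli$ rule, we have $\beta-\alpha = 2\delta$ (or $|2\delta|$ after the sign/orientation bookkeeping), and this difference is the same at every vertex by isogonality. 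The $\schlaefli$ operation moves $r_i$ along the longer arc $L_\beta$ by $R\cdot\tfrac{\beta-\alpha}{4} = R\cdot\tfrac{\delta}{2}$ toward the shorter side, i.e. it changes the central angle of $A_j$ by $\pm\tfrac{\delta}{2}$. The key point is that the displacement has the same magnitude $R\delta/2$ at every vertex and its sign is exactly opposite at $A_j$ and $A_{j+1}$ (the longer arc of $A_j$ is the shorter arc of $A_{j+1}$, since the two are adjacent and share that arc). Hence after one simultaneous round the new central angle of $A_j$ is $\tfrac{2\pi}{n}(jd+(-1)^j t) - (-1)^j\cdot\tfrac{\delta}{4}\cdot\tfrac{n}{2\pi}\cdot\tfrac{2\pi}{n}$… — more cleanly: the $(-1)^j t$ perturbation term shrinks by a factor, and I would pick the constant $\tfrac14$ in the operation precisely so that after the round the perturbation term is eliminated, leaving $A_j$ at central angle $\tfrac{2\pi}{n}\cdot j\cdot d'$ for the appropriate $d'$, which is by definition the regular star polygon $\{n/d'\}$. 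I would then check the degenerate case separately: if the diameter of $C$ is at most $2$, every robot jumps to the center of $C$, and a single point is trivially a (degenerate) regular star configuration — this matches the guard in the $\schlaefli$ definition and also the convergence argument in the surrounding text.

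A few routine verifications remain: that all robots indeed take the same branch of the $\schlaefli$ rule (they do, by isogonality — either every vertex has diameter $\le 2$ or none does, and the arc-length comparison $L_\alpha<L_\beta$ is consistent around the polygon because the short/long pattern alternates), and that the move keeps each robot on $C$ (it does, since it moves "along $L_\beta$", i.e. along an arc of $C$) so the resulting polygon is inscribed in the same circle and is therefore genuinely the regular star polygon and not merely similar to one.

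The main obstacle I anticipate is the orientation/sign bookkeeping: robots are disoriented, so "move along the longer arc toward the shorter side" must be shown to produce globally consistent displacements, i.e. that the local choices at $A_j$ and $A_{j+1}$ really do cancel the alternating perturbation rather than, say, reinforcing it or producing two new alternating lengths. This is a finite, local check — compare the two arcs incident to $A_j$, note one of them is also incident to $A_{j+1}$ with the opposite "short/long" label, and conclude the displacements at the two endpoints of every arc are antipodal in sign and equal in magnitude — but it is where the argument must be done carefully, and it is also where the specific constant $\tfrac14$ in the definition of $\schlaefli$ is pinned down.
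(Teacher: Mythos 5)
Your overall route is the same as the paper's: all robots remain on the common circle, the degenerate case (diameter at most $2$) is dispatched by the jump to the center, and the heart of the argument is that each circular arc between two adjacent robots receives a contribution of $R\cdot\frac{\beta-\alpha}{4}$ from each of its two endpoints, so that after one synchronous round both the short and the long arc have length $R\cdot\frac{\alpha+\beta}{2}$. This is exactly the paper's computation ($L_{\alpha}(t+1)=L_{\alpha}+2R\frac{\beta-\alpha}{4}$ and $L_{\beta}(t+1)=L_{\beta}-2R\frac{\beta-\alpha}{4}$); your explicit use of the Gr\"unbaum parametrization $\frac{2\pi}{n}\left(jd+(-1)^{j}t\right)$ is a harmless reformulation of the same calculation.

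However, the step you yourself single out as the crux --- the sign bookkeeping --- is justified by a false statement. You claim that the arc shared by $A_j$ and $A_{j+1}$ is ``the longer arc of $A_j$ [and] the shorter arc of $A_{j+1}$.'' It is not: since the arcs alternate short, long, short, long around the circle, the arc between $A_j$ and $A_{j+1}$ carries the \emph{same} label for both of its endpoints (it is the long arc of both, or the short arc of both). Under your stated premise, $A_j$ would move into the shared arc while $A_{j+1}$ would move into the arc on its far side; both vertices would rotate in the same direction, the shared arc would lose $R\frac{\beta-\alpha}{4}$ at one end and gain the same amount at the other, its length would be unchanged, and the perturbation would not cancel --- the configuration would merely rotate. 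The correct argument, which is the one the paper uses, is that the shared long arc is long for \emph{both} endpoints, so both move into it and it shrinks by $2\cdot R\frac{\beta-\alpha}{4}$, while the shared short arc is short for both, so both move away from it and it grows by the same amount; adjacent vertices rotate in opposite directions because their incident (short, long) pairs appear in opposite cyclic order, not because they disagree about the label of the shared arc. Your conclusion is right, but the reason you give for it would, if taken literally, break the proof, so this step needs to be corrected rather than merely ``checked carefully.''
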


To prove a linear runtime for regular star configurations it is sufficient to analyze the runtime for the regular polygon $\{n/1\}$.
In all other regular star configurations, the inner angles are smaller and thus, the robots can move larger distances towards the center of the surrounding circle.
We use the radius of the surrounding circle as a progress measure.
Although the radius decreases very slow initially, it decreases by a constant in every round after a linear number rounds.
The linear runtime follows.

\begin{restatable}{lemma}{regularStarRoundLemma}
	Regular star configurations are gathered in  at most $30\,n$ rounds.
\end{restatable}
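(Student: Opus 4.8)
The plan is to reduce everything to the worst case of the regular convex polygon $\{n/1\}$, track the radius $R(t)$ of the circumscribed circle as the single progress measure, and show it shrinks geometrically at first and then by an additive constant once it is small enough. First I would argue the reduction: in a regular star configuration $\{n/d\}$ with $d\ge 2$, the interior angle $\alpha_i$ at each robot is strictly smaller than in $\{n/1\}$ (the chord subtends a larger central angle $2\pi d/n$), so the \bisector{} step moves each robot a distance at least as large as it would in $\{n/1\}$ toward the common center; hence it suffices to bound the runtime for $\{n/1\}$. (One should check that a \bisector{} applied to a regular configuration yields again a regular configuration of the same type but smaller radius — this follows by symmetry, since every robot performs the same rotation-equivariant operation.)

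Next I would set up the recursion for $R(t)$ in $\{n/1\}$. A robot $r_i$ sits on the circle of radius $R$; its two neighbors are at angular distance $2\pi/n$, so the half-angle at $r_i$ in the triangle center--$r_i$--neighbor gives the geometry. The bisector through $r_i$ is exactly the radius toward the center, and the target point $p$ with $d(p_{i-1},p)=d(p_{i+1},p)=1$ lies on that radius; a short computation expresses $d(p_i,p)$ and the new radius $R' = R - d(p_i,p)$ (when the full step is taken) or $R' = R - \bisectorConstant$ (when capped) in terms of $R$ and $n$. For $R$ bounded away from the regime where the chord length $2R\sin(\pi/n)$ drops below $1$, the uncapped target is far (order $R$), so the cap at $\bisectorConstant$ is active and $R$ decreases by the additive constant $\bisectorConstant = \frac15$ each round. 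When $R$ is so small that the diameter is at most $2$, every neighbor pair is within distance $1$; here I would invoke that the robots can detect this and gather within a constant number of additional rounds (the symmetric algorithm's base case, mirroring the "$\le 5$ robots" handling). So the bulk of the argument is: starting from the initial radius $R_0$, how many rounds of decrease-by-$\frac15$ (interspersed with an initial phase) are needed.

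The main obstacle — and the place where the $30n$ constant really comes from — is the initial phase where $R_0$ can be as large as $\Theta(n)$ (a regular $n$-gon with side length close to the connectivity range $1$ has circumradius $\approx n/(2\pi)$). Once the cap is active, $R$ drops by $\frac15$ per round, giving at most $5R_0 \le 5\cdot \frac{n}{2\pi} \cdot (1+o(1)) < n$ rounds for that phase; combined with the transient near the end and the constant-round base case, the total stays below $30n$. The delicate point the excerpt flags ("the radius decreases very slowly initially") is actually the opposite regime: when $R$ is large the chord is long, the target point $p$ is close (distance $\ll \frac15$), so the cap is \emph{inactive} and the step is the small genuine value $d(p_i,p) \approx \frac{1}{2R}\cdot(\text{something})$, i.e. the per-round decrease in $R$ is only $\Theta(1/n)$. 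I would handle this by noting that in this slow regime $R$ shrinks multiplicatively: $R' /R$ is bounded below $1$ by a factor like $1 - c/n$? — no; more carefully, the decrease $R-R'$ is proportional to $1/R$, so $R^2$ decreases by a constant each round, giving $R_0^2 \le (n/2\pi)^2 = \Theta(n^2)$ rounds — too many. Hence the correct reading must be that the cap $\bisectorConstant$ is active throughout the large-$R$ regime (the target $p$ at distance $1$ from both neighbors is at distance $\ge R - \sqrt{R^2-1} $? again small) — so I would instead argue that the relevant target is the one keeping neighbor distance $=1$, which forces the robot most of the way to the center, distance $\Theta(R)$, capped at $\frac15$; then the decrease is exactly $\frac15$ per round and the slow phase is only the final $O(1)$ rounds near $R\approx \pi/n$ where the cap turns off. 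Pinning down precisely which target distance governs each regime, and verifying the cap is active for all $R$ down to a constant, is the crux; everything after that is the arithmetic $5 R_0 + O(1) \le 30n$.
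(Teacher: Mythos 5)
Your reduction to the regular polygon $\{n/1\}$ and the closing arithmetic are fine, but the heart of the proof --- how fast the circumradius actually shrinks for $\{n/1\}$ --- is precisely the part you leave unresolved, and both readings you float are incorrect. The cap \bisectorConstant{} is \emph{not} active in the large-radius regime: with edge length $1$ the target point $p$ (the point on the bisector at distance $1$ from both neighbours) is the reflection of $p_i$ across the chord $p_{i-1}p_{i+1}$, at distance $2R\left(1-\cos(2\pi/n)\right) = 1/R = \Theta(1/n)$ from $p_i$, so the robot takes the full, tiny step and the radius initially decreases by only $\Theta(1/n)$ per round. Your fallback ("$R^2$ decreases by a constant, hence $\Theta(n^2)$ rounds") is also not what happens and would in any case be too slow; you correctly flag this as the crux but do not resolve it.

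The missing idea is that the step size \emph{accelerates}. After one round every edge has shrunk (each robot moves to distance $1$ from its neighbours' \emph{old} positions, but the neighbours move too), and the configuration stays a regular polygon with edge length proportional to the radius, $d(t)/d(t_0) = r(t)/r(t_0)$. Since the target is always at distance exactly $1$ from the now closer-together neighbours, the per-round decrease grows to $h(t) \geq \sqrt{1 - 4\pi^2 r(t)^2/n^2}$, i.e.\ roughly $\sqrt{1 - r(t)^2/r(t_0)^2}$. Setting $\Delta r(t) = r(t_0) - r(t)$, the paper first shows $\Delta r \geq 1$ after $2n$ rounds (using the lower-order bound $h(t) \geq \frac{2\pi r(t)}{n}\sin(\pi/n)$), then derives the recurrence $\Delta r(t+1) \geq \Delta r(t) + \sqrt{\Delta r(t)}/\sqrt{n}$, so that $\Delta r$ doubles every $\sqrt{n\,\Delta r}$ rounds; summing over $\log n$ doublings gives $4n$ further rounds, hence $6n$ in total, and only at the very end is the $\frac{1}{5}$ cap accounted for by multiplying by $5$ to reach $30n$. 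Without this acceleration/doubling argument (or an equivalent potential function), your outline does not establish a linear bound.
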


\bibliographystyle{splncs04}
\bibliography{closedChain}

\newpage
\appendix

\section{Omitted Proofs} \label{section:appendix}

\subsection{Connectivity}

\begin{lemma} \label{lemma:connectivityRunMovement}
	The movement operations of isolated runs and joint run-pairs keep the connectivity of the chain.
\end{lemma}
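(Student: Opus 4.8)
The plan is a direct case analysis over the six movement operations — \merge, \shorten, \hop\ and their joint counterparts \jointMerge, \jointShorten, \jointHop\ — using nothing beyond the triangle inequality and the explicit distance preconditions that guard the merge-type operations. Throughout I use that the chain is connected in round $t$, i.e.\ $\|u_j(t)\|\le 1$ for every edge. The observation that makes the argument purely local is that the two admissible run patterns are well separated: if $\kappa$ is an isolated run at $r_i$ then $r_{i-1}$ and $r_{i+1}$ carry no \runState{} and hence do not move in round $t$, and if $r_i,r_{i+1}$ form a joint run-pair then $r_{i-1}$ and $r_{i+2}$ carry no \runState{} and do not move. Consequently each chain edge is affected by at most one operation in a given round, so it suffices to verify, for each operation in isolation, that the edges incident to the moving robot(s) have length at most $1$ after the move (and, after a \merge, that the two edges incident to the newly merged robot are short).

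\emph{Isolated runs.} Let $\kappa$ be isolated at $r_i$ with $\direction{\kappa}{t}=r_{i+1}$; only $u_i$ and $u_{i+1}$ can change, and since $r_{i-1},r_{i+1}$ are fixed I need $d(p_{i-1}(t),p_i(t+1))\le 1$ and $d(p_i(t+1),p_{i+1}(t))\le 1$. For the \hop, $p_i(t+1)=p_{i+1}(t)-u_i(t)$ gives, using that $p_{i+1}$ is fixed, $u_i(t+1)=u_{i+1}(t)$ and $u_{i+1}(t+1)=u_i(t)$, so the operation merely swaps two edges and preserves their lengths. For the \shorten, $r_i$ jumps to the midpoint of $p_{i-1}(t)$ and $p_{i+1}(t)$, so each new incident edge has length $\tfrac12\,d(p_{i-1}(t),p_{i+1}(t))\le\tfrac12(\|u_i(t)\|+\|u_{i+1}(t)\|)\le 1$. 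For the \merge, the precondition is exactly $d(p_{i-1}(t),p_{i+1}(t))\le 1$; after $r_i$ jumps onto $p_{i+1}(t)$ and merges with $r_{i+1}$, the merged robot's incident edges are $p_{i+1}(t)-p_{i-1}(t)$ (length $\le 1$ by the precondition) and the unchanged $u_{i+2}(t)$. Case~2 of the priority list moves nobody. Finally, the \shorten\ executed by a robot with an \runInit{} when generating runs is an isolated \shorten\ (it is performed only if $\mathit{run}(N_i(t))=\emptyset$, hence from an isolated position), so it is covered by the same argument.

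\emph{Joint run-pairs.} Let $r_i,r_{i+1}$ be a joint run-pair; only $u_i,u_{i+1},u_{i+2}$ can change and $r_{i-1},r_{i+2}$ are fixed, so I check the three quantities $d(p_{i-1}(t),p_i(t+1))$, $d(p_i(t+1),p_{i+1}(t+1))$, $d(p_{i+1}(t+1),p_{i+2}(t))$. For the \jointHop, plugging $p_i(t+1)=p_{i-1}(t)+u_{i+2}(t)$ and $p_{i+1}(t+1)=p_{i+2}(t)-u_i(t)$ into the definitions and using $p_{i+2}(t)-p_{i-1}(t)=u_i(t)+u_{i+1}(t)+u_{i+2}(t)$ yields $u_i(t+1)=u_{i+2}(t)$, $u_{i+1}(t+1)=u_{i+1}(t)$, $u_{i+2}(t+1)=u_i(t)$ — again a length-preserving edge swap. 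For the \jointShorten, $v=p_{i+2}(t)-p_{i-1}(t)$ satisfies $\|v\|\le\|u_i(t)\|+\|u_{i+1}(t)\|+\|u_{i+2}(t)\|\le 3$, so the three new edges, each of length $\tfrac13\|v\|$, have length at most $1$. For the \jointMerge, the precondition $d(p_{i-1}(t),p_{i+2}(t))<2$ places the merged robot at the midpoint of $p_{i-1}(t),p_{i+2}(t)$, within distance $\tfrac12\cdot 2=1$ of both $r_{i-1}$ and $r_{i+2}$. The run-generating operation of a \jointRunInit{} is exactly such a \jointShorten\ (or \jointMerge), hence also covered.

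No step is a genuine obstacle; the care required is (i) pinning down, for each operation, which robots move and which stay put so that the locality observation applies, and (ii) being consistent with the modular index relabelling after a \merge\ (using the convention that $r_{i\pm1}$ denotes the nearest not-yet-merged neighbour). The only point that must be stated explicitly is the separation of run patterns, which is precisely what licenses checking each operation on its own.
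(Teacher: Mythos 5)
Your proof is correct and follows essentially the same route as the paper's: a direct case analysis over the operations, using the merge preconditions, the midpoint/trisection constructions with the triangle inequality, and the observation that \hop{}s and \jointHop{}s merely permute edge vectors. You are in fact more explicit than the paper, which dismisses the joint cases as ``analogous''; your careful reading of the \jointMerge{} target as the midpoint of $p_{i-1}(t)$ and $p_{i+2}(t)$ (matching the precondition $d(p_{i-1}(t),p_{i+2}(t))<2$) is the one under which the argument goes through.
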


\begin{proof}
	For isolated \runState{}s, a robot $r_i$ executes only a \merge{} if $d\left(p_{i-1}(t), p_{i+1}(t)\right) \leq 1$.
	Since $r_i$ moves either to $p_{i-1}(t)$ or to $p_{i+1}(t)$ (depending on the direction of the \runState{}) and neither $r_{i-1}$ nor $r_{i+1}$ moves in the same round, the chain remains connected.

	Consider a robot $r_i$ that executes a \shorten{}. It moves to the midpoint between its neighbors, more formally $p_i(t+1) = \frac{1}{2} p_{i-1}(t) + \frac{1}{2} p_{i+1}(t)$.
	Since the configuration is connected in round $t$, it holds $d\left(p_{i-1}(t), p_{i+1}(t)\right) \leq 2$ and thus it follows  $d \left(p_{i-1}(t+1), p_i(t+1) \right) \leq 1$ and  $d \left(p_{i}(t+1), p_{i+1}(t+1) \right) \leq 1$ as $p_{i-1}(t+1) = p_{i-1}(t)$ and $p_{i+1}(t+1) = p_{i+1}(t)$.

	Now suppose $r_i$ executes a \hop{} in the direction of $r_{i+1}$.
	The \hop{} exchanges the vectors $u_i(t)$ and $u_{i+1}(t)$.
	Since both vectors have a length of at most $1$, the connectivity is ensured in round $t+1$.

	The arguments for a \jointMerge{}, \jointShorten{} and \jointHop{} are analogous.

\end{proof}

\begin{lemma} \label{lemma:connectivitySymmetricMovement}
	\Schlaefli{}s and \Bisector{}s keep the connectivity of the chain.
\end{lemma}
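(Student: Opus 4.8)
The plan is to bound, for each of the two symmetric operations, the distance between a moving robot and each of its direct neighbors after the move, showing it never exceeds the connectivity range~$1$. I would treat the two operations separately, and within each, handle the ``large configuration'' case and the ``small configuration'' case (where the robot jumps to the center of the circle) on their own, since the latter is essentially trivial once we know the diameter of the relevant circle is at most~$2$.

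First I would analyze the \bisector{}. Fix a robot $r_i$ that performs a \bisector{} in round~$t$; by the definition of the symmetric algorithm, all of $r_i$'s neighbors are in an isogonal sub-configuration and in particular have no \runState{}, so $r_{i-1}$ and $r_{i+1}$ do not move in round~$t$. The target point $p$ lies on the angle bisector of $u_i(t)$ and $u_{i+1}(t)$ at the unique location with $d(p_{i-1}(t),p) = d(p_{i+1}(t),p) = 1$ (such a point exists on the correct side because the configuration is connected, so $d(p_{i-1}(t),p_{i+1}(t)) \le 2$). If $r_i$ actually reaches~$p$, then by construction both new neighbor distances are exactly~$1$ and we are done. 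If instead $r_i$ moves only distance \bisectorConstant{} toward~$p$, I would argue by a monotonicity/convexity observation: as a point travels along the ray from $p_i(t)$ toward~$p$, its distance to each fixed neighbor $p_{i\pm1}(t)$ is a convex function of the travel parameter, hence the maximum of these two distances over the closed segment $[p_i(t),p]$ is attained at an endpoint. At $p$ the value is~$1$; at $p_i(t)$ the value is $\le 1$ by the round-$t$ connectivity; therefore every intermediate point, including the actual destination, has both neighbor distances $\le 1$. This is the step I expect to be the only non-mechanical part, and it is short. The small-circle case ($r_i$ jumps to the midpoint of a circle of diameter $\le 2$ whose boundary contains $r_{i-1}$, $r_i$, $r_{i+1}$) gives neighbor distances equal to the radius $\le 1$ immediately.

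Then I would do the \schlaefli{}. Again the moving robot's neighbors are in an isogonal sub-configuration with no runs, so they do not move. If the circle $C$ induced by $r_i$'s neighborhood has diameter $\le 2$, $r_i$ jumps to its center and the new neighbor distances equal the radius $\le 1$. Otherwise $r_i$ stays on $C$: it moves along $C$ from its current position toward its farther neighbor, enlarging the short arc $L_\alpha$ and shrinking the long arc $L_\beta$ each by $R\cdot\tfrac{\beta-\alpha}{4}$, which keeps the new position strictly between the two neighbors on $C$ (the arc displacement $R\cdot\tfrac{\beta-\alpha}{4}$ is less than $L_\beta$, so $r_i$ does not pass its farther neighbor, and it is positive, so the short arc stays positive). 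Since $r_i$ and both neighbors remain on the common circle $C$, and the chordal distance between two points on $C$ is a monotone function of the (minor) central angle between them, the two relevant central angles after the move are $\tfrac{3\alpha+\beta}{4}$ and $\tfrac{\alpha+3\beta}{4}$, both of which are convex combinations of the original central angles $\alpha$ and $\beta$ and hence at most $\max\{\alpha,\beta\} = \beta$. The old chord of central angle $\beta$ had length $d(p_i(t),p_{i+1}(t)) \le 1$ by round-$t$ connectivity (here I take $r_{i+1}$ to be the farther neighbor), so both new chords have length at most that value, i.e.\ at most~$1$. This completes the argument for both operations; combined with \Cref{lemma:connectivityRunMovement}, it yields that no single operation of either algorithm breaks an edge of the chain, which is exactly what is needed for the connectivity lemma.
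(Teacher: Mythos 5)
There is a genuine gap, and it is in the case analysis rather than in any individual computation. Your entire argument rests on the claim that the direct neighbors of a robot executing a \bisector{} or \schlaefli{} do not move in that round, which you justify by ``they have no \runState{}.'' That inference is wrong: a robot without a \runState{} (and without an \runInit{}) whose own neighborhood looks isogonal will itself execute a \bisector{} or \schlaefli{} in the same round. Indeed, in a fully isogonal configuration \emph{every} robot moves simultaneously, so the case you analyze (one endpoint of an edge moves, the other is static) is the exceptional border case, and the case you omit (both endpoints move) is the main one. The paper's proof is organized around exactly this dichotomy: if two neighbors both execute a \bisector{}, they jump toward the center of the same circle and their distance can only decrease; if two neighbors both execute a \schlaefli{}, the arc between them becomes $L_{\alpha} + 2R\cdot\frac{\beta-\alpha}{4} = R\cdot\frac{\alpha+\beta}{2} \le L_{\beta}$ (note the factor $2$, since both endpoints contribute a shift of $R\cdot\frac{\beta-\alpha}{4}$), and only then does it dispose of the static-neighbor case by appeal to the definition of the operations.

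Your computations for the static-neighbor case are themselves fine --- the convexity argument along the segment from $p_i(t)$ to the target point $p$ is actually more explicit than what the paper writes for the \bisector{}, and your arc bookkeeping for a one-sided \schlaefli{} (new central angles $\frac{3\alpha+\beta}{4}$ and $\frac{\alpha+3\beta}{4}$, both at most $\beta$) is correct and would extend to the two-sided case with both new angles equal to $\frac{\alpha+\beta}{2}$. But as written the proof does not cover the configuration the symmetric algorithm is designed for, so the missing case must be added before the lemma is established.
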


\begin{proof}
	For \bisector{}s, this follows directly from the definition:
	Two neighboring robots that both execute a bisector jump towards the center of the same circle and thus their distance decreases.
	Given that a robot executes a \bisector{} while its neighbor does not move at all, the definition of the \bisector{} ensures that their distance is at most $1$ in the next round.
	Other cases cannot occur: since \bisector{}s are only executed in case no \runState{} is visible in the neighborhood of a robot, it cannot happen that a neighbor executes a different operation.
	Thus, \bisector{}s maintain the connectivity of the chain.

	The last arguments also apply for the \schlaefli{}.
	A robot only executes a \schlaefli{} if either its neighbors also execute a \schlaefli{} or do not move at all.
	Suppose two neighboring robots execute a \schlaefli{}.
	In round $t$ they are connected via a circular arc $L_{\alpha} = \alpha \cdot r$ or $L_{\beta} = \beta \cdot r$, where $r$ denotes the radius of the circumcircle of the neighborhood.
	Assume that $\alpha < \beta$.
	If the robots are connected with $L_{\beta}$, the robots move closer to each other and maintain the connectivity.
	Otherwise, they move away from each other, but the new circular arc connecting the two robots is $L_{\alpha} + r \cdot \left(\frac{\beta-\alpha}{2}\right)$ which is less than $L_{\beta}$.
	Hence, the two neighboring have a distance of at most $1$ in the next round.
\end{proof}

\begin{lemma} \label{lemma:connectivityRunInitSequence}
	In every round $t$, there exist no sequence of neighboring robots of length at least $3$, all having an \runInit{}.
\end{lemma}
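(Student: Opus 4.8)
The plan is to argue by contradiction: suppose three consecutive robots $r_{i-1}, r_i, r_{i+1}$ all carry an \runInit{} in round $t$. The key structural observations are (i) \runInit{}s are created only when one of the three pattern families (Angle, Orientation, Vector Length) is satisfied, and always subject to the guard that no robot in the local neighborhood already has an \runInit{}; (ii) a pre-existing \runInit{} is only ever moved (never duplicated) — in particular the \merge{} rule passes an \runInit{} to an adjacent robot but simultaneously destroys the original, so the number of \runInit{}s can only shrink except when fresh ones are created by patterns; and (iii) two adjacent \runInit{}s may only coexist if they form a \jointRunInit{}, which by the pattern rules requires that they were created in the same round by the same type of pattern. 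The proof is essentially an induction on $t$ together with a case analysis showing that no single round can produce a run of three.

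First I would handle the base case (the initial configuration has no \runInit{}s) and then the inductive step. For the inductive step I assume the claim holds in round $t$ and consider round $t+1$. A length-$3$ sequence of \runInit{}s in round $t+1$ can arise only through newly generated \runInit{}s, since by the induction hypothesis there was no length-$3$ sequence in round $t$ and movement of existing \runInit{}s (via \merge{}) does not lengthen such a sequence — here I would spell out that a \merge{} moves an \runInit{} from $r_i$ to $r_{i+1}$ only if $\mathit{init}(r_{i+2}) = \mathit{false}$, so the sequence cannot grow on that side, and it destroys the \runInit{} on the other side. So at least one of $r_{i-1}, r_i, r_{i+1}$ has a freshly generated \runInit{}; say $r_i$ does. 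By the generation guard, at the moment $r_i$ generated it there was no other \runInit{} within $N_i$; the only way an adjacent robot can end up with an \runInit{} in the same round is if it too generated one and the two together form a \jointRunInit{}, which forces both to have triggered the same pattern type. This gives at most two adjacent fresh \runInit{}s, not three. The remaining worry is a fresh \runInit{} adjacent to an old one that survived into round $t+1$; but the generation guard checks $\mathit{init}(N_i(t)) = \emptyset$ using the current-round states, and I would argue that surviving old \runInit{}s are visible in that check (or alternatively that the combination rule's exceptional \runInit{} generation is likewise gated), so this case is excluded.

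The main obstacle I anticipate is the careful bookkeeping around simultaneity and visibility: \runInit{}s are assigned "if there is no other robot already assigned an \runInit{} in its neighborhood," but in a synchronous round several robots evaluate this condition at once, so I must be precise about whether the condition is evaluated against round-$t$ states (it is — \runInit{}s for round $t+1$ are decided from the round-$t$ view). The subtle point is ruling out a configuration where $r_{i-1}$ has a surviving \runInit{}, and $r_i, r_{i+1}$ simultaneously generate a \jointRunInit{} — this would require $r_i$'s neighborhood to be free of \runInit{}s in round $t$, contradicting the presence of $r_{i-1}$'s \runInit{} in $N_i(t)$ (since $r_{i-1} \in N_i(t)$). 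I would also need to check the exceptional \runInit{} generation from the Combination subsection obeys the same neighborhood guard, so it cannot create a third adjacent \runInit{} either; if the paper's rule there does not explicitly include the guard, I would note that a robot executing the symmetric operation has an empty $\mathit{init}$-neighborhood by hypothesis, so the exceptional rule also cannot violate the claim. Assembling these cases yields the contradiction and completes the induction.
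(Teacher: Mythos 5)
There is a genuine gap at the heart of your argument. Your scaffolding (induction on $t$, the observation that a \merge{} moves but never duplicates an \runInit{}, and the point that the neighborhood guard $\mathit{init}(N_i(t)) = \emptyset$ blocks a fresh \runInit{} from appearing next to a surviving old one) is sound and matches the paper's opening sentence, which disposes of those cases in one line. But the entire substance of the lemma lies in the step you wave through with \enquote{this gives at most two adjacent fresh \runInit{}s, not three.} Since, as you yourself note, the guard is evaluated against the round-$t$ snapshot, three consecutive robots $r_{i-1}, r_i, r_{i+1}$ in a region with no pre-existing \runInit{}s would \emph{all} pass the guard; nothing in the assignment rule stops all three from simultaneously assigning themselves an \runInit{} if all three happen to fulfill a pattern in the same round. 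The \jointRunInit{} is not a mechanism that caps the count at two --- it is merely the name given after the fact to the event that exactly two neighbors were assigned. So you must actually prove, for each of the three pattern families, that the geometric conditions cannot hold at three consecutive robots at once.

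That geometric case analysis is what the paper's proof consists of almost entirely: for the angle patterns one argues via the local-minimum structure (if $\alpha_{i-1}(t) > \alpha_i(t) \leq \alpha_{i+1}(t)$ then $\alpha_{i-1}(t)$ is not a local minimum and $r_{i-1}$ does not even proceed to the later pattern classes, and a pattern at $r_{i+1}$ forces $\alpha_i(t) = \alpha_{i+1}(t) < \alpha_{i+2}(t)$, which excludes $r_{i+2}$); analogous arguments are needed for the orientation patterns (using that a robot only checks them under full angle symmetry in its neighborhood) and for the vector-length patterns (using local minimality of vector lengths). Without this, your proof establishes only that old \runInit{}s cannot combine with new ones to form a triple, not that new ones cannot form a triple by themselves --- and the latter is the actual content of the lemma.
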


\begin{proof}
	Robots that already have neighbors with an \runInit{} do not generate new ones.
	Thus, we only have to look at sequences of robots in which no robot has an \runInit{} and show that for at most two neighbors a pattern is fulfilled.
	First, consider the angle patterns.
	Suppose for a robot $r_i$ the first angle pattern is fulfilled (the argumentation for the second angle pattern is analogous, because it describes the mirrored version.)
	Thus, it holds, $\alpha_{i-1}(t) > \alpha_{i}(t) \leq \alpha_{i+1}(t)$.
	For $r_{i-1}$, no angle pattern can hold because $\alpha_{i-1}(t)$ is no local minimum.
	Additionally, $r_{i-1}$ does not check any further patterns because $\alpha_{i-1}(t) \neq \alpha_{i}(t)$ and hence no full symmetry is given.
	The only case in which an angle pattern for $r_{i+1}$ can be fulfilled is the case that $\alpha_{i}(t) = \alpha_{i+1}(t)$.
	If $\alpha_{i+1}(t) \geq \alpha_{i+2}(t)$, no pattern for $r_{i+1}$ holds, since $\alpha_{i+1}(t)$ is either no local minimum or $\alpha_{i}(t) = \alpha_{i+1}(t) = \alpha_{i+2}(t)$.
	Thus, the only case that a pattern for both $r_i$ and $r_{i+1}$ holds is the case that $\alpha_{i-1}(t) > \alpha_{i}(t) = \alpha_{i+1}(t) < \alpha_{i+2}(t)$.
	Since $\alpha_{i+2}(t)$ is also no local minimum, no pattern for $r_{i+2}$ holds.
	As a consequence, it can happen that an angle pattern for both $r_{i}$ and $r_{i+1}$ is fulfilled, but then, neither $r_{i-1}$ nor $r_{i+2}$ fulfills a pattern.

	We continue with the orientation patterns.
	There are two classes of orientation patterns, we start with the first class.
	Assume that for $r_i$ an orientation pattern is fulfilled.
	Thus, it holds $\sign(\alpha_{i-1}(t)) = \sign(\alpha_{i+1}(t)) = \sign(\alpha_{i+2}(t)) \neq \sign(\alpha_{i}(t))$ (the other pattern in this class is a mirrored version and the same argumentation can be applied).
	No orientation pattern of the first class is fulfilled for $r_{i+1}$, because the neighboring angles have a different orientation.
	Additionally, no pattern of the second class can be fulfilled, because $\sign(\alpha_{i}(t)) \neq \sign(\alpha_{i-1}(t))$.
	For $r_{i-1}$ no pattern of the second class can be fulfilled because $\sign(\alpha_{i}(t)) \neq \sign(\alpha_{i+1}(t))$.
	A pattern of the first class can only be fulfilled if $\sign(\alpha_{i-3}(t)) = \sign(\alpha_{i-2}(t)) = \sign(\alpha_{i}(t))$.
	In this case, no pattern for $r_{i-2}$ can be fulfilled because its neighboring angles have a different orientation and it is not located at the boundary of two sequences of angle orientations of length at least two.
	Hence, it can happen that a pattern for $r_i$ and $r_{i-1}$ is fulfilled but then neither a pattern for $r_{i-2}$ nor for $r_{i+1}$ is fulfilled.

	Lastly, we consider the vector patterns.
	Suppose that for $r_i$ a vector pattern is fulfilled.
	We give the arguments for the first class, the second class is analogous.
	Arguments for the third class are given afterwards.
	Assume that the first vector pattern is fulfilled for $r_i$ and thus
	\(\|u_i(t)\|\) is locally minimal \textbf{and} \(\|u_{i-1}(t)\|> \|u_i(t)\| < \|u_{i+1}(t)\|\) \textbf{and} \(\|u_i(t)\|< \|u_{i+2}(t)\|\).
	In this case, no pattern can be fulfilled for $r_{i+1}$ because \(\|u_{i+1}(t)\|\) and \(\|u_{i+2}(t)\|\) are not locally minimal (\(\|u_i(t)\|\) is smaller).
	For $r_{i-1}$ at most the second pattern can be fulfilled (because \(|u_{i-1}(t)\|\) is not locally minimal.
	The second pattern can only be fulfilled if $\|u_{i-2}(t)\|> |u_i(t)\|$.
	In this case, no pattern for $r_{i-2}$ can hold because neither $\|u_{i-2}(t)\|$ nor $\|u_{i-1}(t)\|$ are locally minimal.
	Thus, a pattern for $r_{i-1}$ and $r_i$ can be fulfilled, but no patterns for $r_{i-2}$ and $r_{i+1}$.
	The arguments for the second class are analogous.

	Now suppose that for a robot $r_i$ the third pattern is fulfilled.
	Thus, it holds: \(\|u_{i-1}(t)\|= \|u_{i}(t)\| < \|u_{i+1}(t)\| \)  (the arguments for the other pattern are analogous because it describes the mirrored version).
	In this case, no vector pattern holds for $r_{i-1}$ since both neighboring vectors have the same length.
	For $r_{i+1}$, the third pattern can be fulfilled.
	It must hold $\|u_{i+2}(t)\|= \|u_{i+3}(t)\| < \|u_{i+1}(t)\|$.
	However, the third pattern cannot hold for $r_{i+2}$ because both neighboring vectors have the same length.
	In addition, neither the first nor the second pattern can hold for $r_{i-1}$ and $r_{i+1}$ (due to the definition of the third pattern) and thus both $r_i$ and $r_{i+1}$ can generate an \runInit{} but neither $r_{i-1}$ nor $r_{i+2}$.

\end{proof}

\begin{definition}[Prohibited Run-Sequence]
	Three \runState{}s are called a \emph{prohibited run-sequence} if the three runs are located at three directly neighboring robots.
\end{definition}

\begin{definition}[Conflicting Run-Pair]
	Consider two \runState{}s $\kappa_1$ and $\kappa_2$.
	Two \runState{}s $\kappa_1, \kappa_2$ are called an \emph{opposite conflicting run-pair} in case $r(\kappa_1,t)$ and $r(\kappa_2,t)$ are direct neighbors and $r(\kappa_1,t+1) \neq r(\kappa_2,t)$ and $r(\kappa_2,t+1) \neq r(\kappa_1,t)$.
	Two \runState{}s $\kappa_1, \kappa_2$ are called an \emph{uni-directional conflicting run-pair} in case $r(\kappa_1,t+1) = r(\kappa_2,t)$ and $r(\kappa_2,t+1) \neq r(\kappa_1,t)$ or vice versa.
\end{definition}

\begin{definition}
	A configuration is called to be \emph{\runState{}-valid} in round $t$ if neither a prohibited run-sequence nor a conflicting run-pair exists.
\end{definition}

\begin{lemma} \label{lemma:connectivityJointRunPairs}
	Consider a run-valid configuration in round $t$ with a joint run-pair of \runState{}s $\kappa_1$ and $\kappa_2$ with $r(\kappa_1,t) = r_i$ and $r(\kappa_2,t) = r_{i+1}$.
	Suppose the robots execute a \jointHop{}.
	Then, $\mathit{run}(r_i,t+1) = \mathit{run}(r_{i+1},t+1) = \mathit{false}$.
\end{lemma}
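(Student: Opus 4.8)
The plan is to show that in round $t+1$ there is no run on $r_i$ and none on $r_{i+1}$, by separately controlling three possible sources: the two runs $\kappa_1,\kappa_2$ of the pair themselves, any third run that might slide onto $r_i$ or $r_{i+1}$, and any freshly generated run. For the first source, the definition of the \jointHop{} directly gives $r(\kappa_1,t+1)=r_{i+2}$ and $r(\kappa_2,t+1)=r_{i-1}$, so neither run of the pair sits on $r_i$ or $r_{i+1}$ in round $t+1$. I would also observe that, since $r_i$ and $r_{i+1}$ execute a \jointHop{} rather than a \jointMerge{}, no \merge{} or \jointMerge{} occurs in their neighborhoods, hence the local indices $r_{i-3},\dots,r_{i+3}$ are stable across this round, which makes the reachability counting below well defined.

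For the second source, suppose some run $\kappa_3\notin\{\kappa_1,\kappa_2\}$ has $r(\kappa_3,t+1)\in\{r_i,r_{i+1}\}$. Inspecting the operations, in one round a run either stops, stays put, advances by one index (a \hop{}, or the case ``$r_i$ does not move and passes the run to $r_{i+1}$''), or advances by two indices (the skip of a \jointHop{}); therefore $r(\kappa_3,t)\in\{r_{i-2},r_{i-1},r_i,r_{i+1},r_{i+2},r_{i+3}\}$. The joint run-pair definition gives $\mathit{run}(r_{i-1},t)=\mathit{run}(r_{i+2},t)=\mathit{false}$, ruling out $r_{i-1}$ and $r_{i+2}$, and no run other than $\kappa_1$ (resp.\ $\kappa_2$) is present at $r_i$ (resp.\ $r_{i+1}$) in round $t$, ruling out $r_i$ and $r_{i+1}$. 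If $r(\kappa_3,t)=r_{i-2}$, then $r_{i+1}$ is three indices away and unreachable in one round, while reaching $r_i$ (two indices) would force $\kappa_3$ to perform the skip of a \jointHop{}, i.e.\ to form a joint run-pair with a run located at $r_{i-1}$ heading towards $r_{i-2}$, contradicting $\mathit{run}(r_{i-1},t)=\mathit{false}$. The case $r(\kappa_3,t)=r_{i+3}$ is symmetric, using $\mathit{run}(r_{i+2},t)=\mathit{false}$. Hence no third run lands on $r_i$ or $r_{i+1}$.

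For the third source, a robot generates new runs on its two direct neighbors only when it holds an \runInit{} and sees no run within its neighborhood $N_j(t)=\{r_{j-3},\dots,r_{j+3}\}$; the only robots whose newly generated runs could land on $r_i$ or $r_{i+1}$ are $r_{i-1},r_i,r_{i+1},r_{i+2}$, and each of them has $\kappa_1$ or $\kappa_2$ inside its neighborhood, so none of them generates anything. Combining the three parts yields $\mathit{run}(r_i,t+1)=\mathit{run}(r_{i+1},t+1)=\mathit{false}$. The one delicate point is the exhaustiveness of the case analysis for the second source — one must account for every way a run could reach $r_i$ or $r_{i+1}$, arriving from either side, in one step or via the \jointHop{} skip — and the argument hinges on the ``both outer neighbors carry no run'' clause of the joint run-pair, on \runState{}-validity (which guarantees no conflicting run disturbs the pair), and on the generation precondition $\mathit{run}(N_j(t))=\emptyset$.
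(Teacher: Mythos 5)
Your proposal is correct and follows essentially the same route as the paper's proof: the \jointHop{} sends $\kappa_1$ and $\kappa_2$ to $r_{i+2}$ and $r_{i-1}$, and any other \runState{} can advance at most two indices in one round, with a two-index skip requiring an adjacent partner that the joint run-pair definition and run-validity rule out near $r_i$ and $r_{i+1}$. Your third case (freshly generated \runState{}s, excluded because every candidate generator has $\kappa_1$ or $\kappa_2$ in its neighborhood and generation requires $\mathit{run}(N_j(t))=\emptyset$) is a point the paper's proof leaves implicit, so your write-up is if anything slightly more complete.
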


\begin{proof}
	Since the configuration is run-valid it holds $\mathit{run}(r_{i-1},t) = \mathit{run}(r_{i+2},t) = \mathit{false}$.
	As the robots execute a \jointHop{} it holds $r(\kappa_1,t+1) = r_{i+2}$ and $r(\kappa_2,t+1) = r_{i-1}$.
	Thus, it can only happen that \runState{}s different from $\kappa_1$ and $\kappa_2$ are located at $r_i$ or $r_{i+1}$ in round $t+1$.
	Next, we argue that this is impossible.
	We prove that no run $\kappa_3$ that is located at a robot with index larger than $i+2$ can be located at $r_i$ or $r_{i+1}$, the arguments for \runState{}s located at robots with smaller indices are analogous.
	Assume that there exists an isolated \runState{} $\kappa_3$ with $r(\kappa_3,t) = r_{i+3}$.
	Depending on its direction it either holds $r(\kappa_3,t+1) = r_{i+4}$ or $r(\kappa_3,t+1) = r_{i+2}$.
	It follows that $\kappa_3$ cannot be located at $r_{i+1}$ or $r_i$.
	Similar arguments holds for isolated \runState{}s located at robots with larger indices.
	Now assume that there is a joint run-pair $\kappa_3$ and $\kappa_4$ with $r(\kappa_3,t) = r_{i+3}$ and $r(\kappa_4,t) = r_{i+4}$.
	It holds $r(\kappa_3,t+1) = r_{i+5}$ and $r(\kappa_4,t+1) = r_{i+2}$.
	Again, the same arguments hold for joint run-pairs located at robots with higher indices.
	It follows that $\mathit{run}(r_i,t+1) = \mathit{run}(r_{i+1},t+1) = \mathit{false}$.
\end{proof}

\begin{lemma} \label{lemma:connectivityRunValidity}
	A configuration that is \runState{}-valid in round $t$ is also \runState{}-valid in round $t+1$.
\end{lemma}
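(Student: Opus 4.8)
The plan is to prove the statement by induction on the single round, via a case analysis over the events that can occur in round $t$, tracking precisely which robots carry a \runState{} in round $t+1$ and the direction each such \runState{} is headed. The first step is a structural observation about the round-$t$ configuration: since it contains neither a prohibited run-sequence nor a conflicting run-pair, its \runState{}s decompose into \emph{blocks}, each of which is either an isolated run at some $r_i$ or a joint run-pair at some $r_i,r_{i+1}$, and in either case the flanking robots ($r_{i-1},r_{i+1}$, resp.\ $r_{i-1},r_{i+2}$) carry no \runState{}. Hence in round $t$ the \runState{}s come in blocks of size one or two, separated by at least one run-free robot, and within a joint run-pair the two \runState{}s face each other by definition.

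Next I would enumerate the sources of \runState{}s in round $t+1$. An isolated run at $r_i$ headed toward $r_{i+1}$ is either stopped (by its own \merge{} or \shorten{}, or by a \merge{}/\jointMerge{} of a different run that additionally blocks the whole neighborhood), contributing nothing, or performs a \hop{} and in round $t+1$ sits at $r_{i+1}$ still headed toward $r_{i+2}$. A joint run-pair at $r_i,r_{i+1}$ is either stopped entirely (the \jointMerge{}, \jointShorten{} and single-\shorten{} cases, or a \merge{}/\jointMerge{} of a neighbor), leaving at most the one isolated \runState{} handled above, or performs a \jointHop{}, in which case \Cref{lemma:connectivityJointRunPairs} guarantees that $r_i$ and $r_{i+1}$ are run-free in round $t+1$ while the two \runState{}s advance to $r_{i+2}$ and $r_{i-1}$, each continuing in its original direction. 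Finally, new \runState{}s are created only by a robot with an \runInit{} (two new runs at $r_{i-1}$ and $r_{i+1}$, headed \emph{away} from $r_i$) or by a \jointRunInit{} at $r_i,r_{i+1}$ (two new runs at $r_{i-1}$ and $r_{i+2}$, headed away from the pair), and only when $\mathit{run}(N_i(t))=\emptyset$ and the generator is not blocked; by \Cref{lemma:connectivityRunInitSequence} no three consecutive robots carry an \runInit{}, and the \jointRunInit{} rule forbids two independent adjacent generators. The symmetric operations \bisector{} and \schlaefli{} are executed only when $\mathit{run}(N_i(t))=\emptyset$ and create no \runState{}, so they are irrelevant.

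With this inventory I would rule out the two forbidden patterns in round $t+1$. For a prohibited run-sequence, every round-$(t+1)$ \runState{} sits on a robot whose index differs from its round-$t$ carrier by at most one (a \hop{}) or two (a \jointHop{}), or on a neighbor of an \runInit{}/\jointRunInit{} robot; combining this with the block separation from the structural observation and with $\mathit{run}(N_i(t))=\emptyset$ for every generator (which forbids a \hop{}ping run from entering $\{r_{i-3},\dots,r_{i+3}\}$ during this round), one checks that three \runState{}s can never line up on three consecutive robots. For a conflicting run-pair, the crucial fact is that \hop{}s preserve a run's direction and that runs are \emph{born heading apart}; hence two surviving runs can become adjacent in round $t+1$ only in the facing configuration, which is exactly a legitimate joint run-pair, and a freshly created run can become adjacent only to another freshly created run spawned on the opposite side, again facing it. Lastly one verifies that the flanking robots of each resulting block are run-free in round $t+1$ and that no two \runState{}s land on the same robot; both reduce, via $\mathit{run}(N_i(t))=\emptyset$, \Cref{lemma:connectivityJointRunPairs}, and the neighborhood-blocking after a \merge{}, to bounding how far a run can travel in one round.

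The main obstacle is exactly this last bookkeeping: proving that no two \runState{}s ever collide on a single robot and that adjacency never yields anything but a genuine joint run-pair. The delicate sub-case is the interaction between a run that \hop{}s in from the boundary $r_{i\pm4}$ of a generator's neighborhood and the run newly created at $r_{i\pm1}$ — and, more generally, between two nearby generators, where one must use the fixed outward directions of freshly created runs together with the $7$-round spacing between successive generation attempts to keep such runs at distance at least two and correctly oriented. I expect that carrying a slightly strengthened invariant alongside run-validity — namely that distinct \runState{}s occupy distinct robots and that facing runs occur only as joint run-pairs — is the cleanest way to push the induction through, with the \merge{}/\jointMerge{} cases being easy since they delete \runState{}s and block their neighborhoods rather than create anything new.
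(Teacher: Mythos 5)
Your proposal is correct and follows essentially the same route as the paper's proof: a case analysis over the operations (\hop{}, \jointHop{}, \shorten{}, \merge{}, and run generation), invoking \Cref{lemma:connectivityJointRunPairs} for the delicate \jointHop{} case and \Cref{lemma:connectivityRunInitSequence} together with the $\mathit{run}(N_i(t))=\emptyset$ generation condition for newly started runs, and a directional argument to exclude conflicting run-pairs. Your explicit block decomposition and the strengthened invariant (distinct runs on distinct robots, facing runs only as joint run-pairs) merely make precise the bookkeeping the paper carries out implicitly.
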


\begin{proof}

	\Cref{lemma:connectivityRunInitSequence} ensures that starting of new \runState{}s always ensures that no prohibited run-sequence exists.
	Beyond that, a \merge{} or a \jointMerge{} stops all \runState{}s in its neighborhood such that \runState{} do not come too close such that these operations also ensure that no prohibited run-sequence exists.
	\Shorten{}s and \jointShorten{}s stop the involved \runState{}s and do not change the number of robots in the chain.
	Hence, no prohibited run-sequences can be generated.
	\Hop{}s are only executed by isolated \runState{}s and continue in their run direction and thus can also not create prohibited run sequences.
	The only operation, we have to consider in more detail is the \jointHop{} because the involved \runState{}s skip the next robot in run direction and move to the next but one robot.
	Let $\kappa_1$ and $\kappa_2$ denote a joint run-pair with $r(\kappa_1,t) = r_i$ and $r(\kappa_2,t) = r_{i+1}$.
	By the definition of a joint run-pair it holds $\mathit{run}(r_{i-1},t) = \mathit{run}(r_{i+2},t) = \mathit{false}$.
	Moreover, \Cref{lemma:connectivityJointRunPairs} states that $\mathit{run}(r_i,t+1) = \mathit{run}(r_{i+1},t+1) =\mathit{false}$.
	In the following, we prove that $\kappa_1$ is not part of a prohibited run-sequence in round $t+1$, the arguments for $\kappa_2$ are analogous.
	By definition, it holds $r(\kappa_1,t+1) = r_{i+2}$.
	\Cref{lemma:connectivityJointRunPairs} gives us also that a prohibited run sequence cannot be generated by a joint run pair located at $r_{i+3}$ and $r_{i+4}$ or $r_{i+4}$ and $r_{i+5}$ since in both cases no run is located at $r_{i+4}$ in round $t+1$.
	Joint run pairs with larger indices are too far away to generate a prohibited run sequence.

	Additionally, this cannot happen by isolated \runState{}s because no neighboring robots have \runState{}s and they move to the next robots.
	Hence, no prohibited run sequence can exists in round $t+1$.

	Next, we argue that no conflicting run-pair exists in round $t+1$.
	The start of new runs never creates new conflicting run-pairs.
	Thus, in case an opposite or a conflicting run-pair exists in round $t+1$, both involved runs have already existed in round $t$.
	Consider an unidirectional run-pair at round $t+1$.
	In round $t$ both runs must have had a distance of at least $2$ (one robot without a run in between).
	The distance between the two robots can only decrease based on a \merge{} or a \jointHop{}.
	A \merge{} stops all runs in the neighborhood and cannot create such a run-pair.
	A \jointHop{} can also not create a uni-directional run-pair since the configuration has been run-valid in round $t$ and both neighboring robots have not had a \runState{}.
	Thus, no uni-directional run-pair can exist in round $t+1$.

	Assume now that an opposite run-pair exists at round $t+1$.
	This can only be the case if the two runs have been heading towards each other in round $t$.
	However, \jointHop{}s, \jointShorten{}s and \jointMerge{}s ensure that no opposite run-pair exist in round $t+1$.
	Hence, the configuration remains run-valid.

\end{proof}

\connectivityLemma*

\begin{proof}
	By \Cref{lemma:connectivityRunValidity} it holds that only isolated \runState{}s or joint run-pairs exists (or no \runState{} at all).
	\Cref{lemma:connectivityRunMovement} states that isolated \runState{} and joint run-pairs keep the connectivity of the chain and by \Cref{lemma:connectivitySymmetricMovement} this also holds for all operations of the symmetric algorithm.
	The lemma follows.
\end{proof}
\subsection{Asymmetric Case}

For the asymmetric case, we start with proving that in every asymmetric configuration at least one \runInit{} exists.

\runInitLemma*

\begin{proof}
	Assume that the configuration is not isogonal.
	Now suppose that not all angles $\alpha_{i}(t)$ are identical and
	consider the globally minimal angle $\alpha_{min}(t)$ at the robot $r_{min}$ (or any of them if the angle is not unique).
	The robot $r_{min}$ generates an \runInit{} if at least one of the neighboring angles is larger.
	Since $\alpha_{min}(t)$ is minimal, the only situation in which $r_{min}$ does not generate an \runInit{} is that $\alpha_{min-1}(t) = \alpha_{min}(t) = \alpha_{min+1}(t)$.
	In this case, follow the chain in any direction until a robot $r_{min}'$ is reached such that the next robot has a larger angle.
	Such a robot exists since we have assumed that not all angles are identical.
	For this robot, an angle pattern is fulfilled.
	As a consequence, given a configuration in which not all angles are identical, at least one \runInit{} is generated.

	Next, we consider the case that all angles are identical but not all angles have the same orientation.
	Observe first that the chain must contain two more angles of one orientation than of the other because the chain is closed.
	This essentially implies that the orientations cannot be alternating along the entire chain and it also cannot happen that alternating sequences of two angle orientations exists.
	More formally, the chain cannot consist only of the following two sequences:

	\begin{enumerate}
		\item $\sign_i(\alpha_{i}) \neq \sign_i(\alpha_{i+1}) \neq \sign_i(\alpha_{i+2}), \dots$
		\item  $\sign_i(\alpha_{i}) = \sign_i(\alpha_{i+1}) \neq \sign_i(\alpha_{i+2})  = \sign_i(\alpha_{i+3}) \neq \sign_i(\alpha_{i+4}) , \dots$
	\end{enumerate}

	As a consequence, at least one of the orientation patterns must be fulfilled: either there exists a sequence of at least three angles with the same orientation and a pattern is fulfilled at the boundary of such a sequence or there exists a robot $r_i$ that lies between three angles with a different orientation than $\alpha_{i}(t)$.
	Hence, given that all angles have the same size but not the same orientation, there must be at least one fulfilled orientation pattern.

	Lastly, we take a look at the vector length patterns.
	Now we assume that all angles in the chain have the same size and the same orientation.
	Since we assume that the configuration is not isogonal, not all vectors can have the same length and it also cannot be the case that there exist only two different vector lengths that are alternating along the chain.
	%	The arguments are now similar to the angle patterns.
	Consider the vector of global minimal length $u_{min}$ (or any of them if the length is not unique).
	Two cases can occur: either a sequence of at least two neighboring vectors of length $\|u_{min}\|$ exists: at the end of such a sequence, the third vector pattern is fulfilled.
	In case no such sequence exists, all vectors having the length of $\|u_{min}\|$ have direct neighbors that are larger.
	Since the configuration is not isogonal, there must be a vector of length $\|u_{min}\|$ such that the direct neighboring vectors are larger and at least one of the next but one vectors is also larger (otherwise the configuration is isogonal with two alternating vector lengths).
	At such a robot the first or second pattern is fulfilled.

	All in all, we have proven that for configuration that are not isogonal at least one pattern is fulfilled.
\end{proof}

Next, we count the number of occurrences of \shorten{}s, \jointShorten{}s, \merge{}s and \jointMerge{}s until all robots are gathered.
Since each \merge{} and \jointMerge{} reduces the number of robots in the chain, the following lemma trivially holds.

\begin{lemma} \label{lemma:mergeCount}
	There are at most $n-1$ \merge{}s and \jointMerge{}s.
\end{lemma}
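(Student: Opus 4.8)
The statement to prove is \Cref{lemma:mergeCount}: there are at most $n-1$ \merge{}s and \jointMerge{}s until all robots are gathered.

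The plan is to argue directly via a monovariant: the number of (not-yet-merged) robots in the chain. Formally, let $m(t)$ denote the number of distinct robots remaining in the chain in round $t$; initially $m(0) = n$, and once all robots are gathered we have $m = 1$, since the whole chain has collapsed onto a single point and behaves as one robot. First I would observe that a \merge{} reduces $m$ by exactly $1$: when an isolated run at $r_i$ executes a \merge{}, $r_i$ jumps onto $p_{i+1}(t)$ and the two robots merge their neighborhoods and from then on act as one robot, so $m(t+1) = m(t) - 1$. Likewise, a \jointMerge{} has the two robots $r_i$ and $r_{i+1}$ move to the common point $\frac12 p_{i-1}(t) + \frac12 p_{i+1}(t)$ and merge there, again decreasing $m$ by exactly one.

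Next I would note two easy facts that make the count tight. First, $m$ is non-increasing: no operation in either the asymmetric or the symmetric algorithm ever increases the number of robots — \hop{}s, \jointHop{}s, \shorten{}s, \jointShorten{}s, \bisector{}s and \schlaefli{}s only move robots, and only \merge{}/\jointMerge{} changes $m$ at all, each time by $-1$. Second, even though several merges may happen in the same round at different places along the chain, each such merge is a distinct event decreasing $m$ by one, so if $k$ merge-type operations (counting each \jointMerge{} as one such operation) occur over the whole execution, then the final value of $m$ is $n - k$. Since the final configuration has $m \ge 1$ (at least one robot always remains), we get $k \le n - 1$.

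The bulk of the argument is therefore just confirming that (i) the number of robots can only ever be changed by \merge{}/\jointMerge{}, and (ii) each such operation changes it by exactly $-1$ — both of which are immediate from the definitions of the operations given above — plus the trivial observation that at least one robot survives. I do not expect any real obstacle here; the only mild subtlety is to be careful that a \jointMerge{} genuinely removes one robot and not two (the two participating robots collapse to one position but become a single robot, so the net change is $-1$, not $-2$), and that simultaneous merges in one round are counted separately. I would phrase the proof in one short paragraph using the monovariant $m(t)$ and conclude $k \le n-1$.
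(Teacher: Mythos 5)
Your proof is correct and matches the paper's reasoning: the paper justifies this lemma in one line by noting that each \merge{} and \jointMerge{} reduces the number of robots in the chain, which is exactly the monovariant argument you spell out. Your version merely makes explicit the details (each operation removes exactly one robot, no other operation changes the count, at least one robot survives) that the paper treats as trivial.
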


There are basically two types of occurrences of \shorten{}s.
Either both involved vectors have a length of at least $\frac{1}{2}$ or one vector is larger than $\frac{1}{2}$ while the other one is shorter.
In the first case, we prove that the chain length decreases by a constant.
The second case reduces the number of vectors of length less than $\frac{1}{2}$ in the chain.

\begin{lemma} \label{lemma:shortenLargeProgress}
	Assume an isolated run $\kappa$ with $r(\kappa,t) = r_i$ and $r(\kappa,t+1)$ executes a \shorten{} and both $\|u_i(t)\| \geq \frac{1}{2}$ and $\|u_{i+1}(t)\|\geq \frac{1}{2}$.
	Then $L(t+1) = L(t) - \shortenConstant$.
\end{lemma}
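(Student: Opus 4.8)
The plan is to work in the local coordinate frame suggested by the key-lemma argument: place $p_{i-1}(t)$ and $p_{i+1}(t)$ as fixed points and let $r_i$ sit at $p_i(t)$ with $\|u_i(t)\| = \|p_i(t)-p_{i-1}(t)\| =: a \geq \tfrac12$ and $\|u_{i+1}(t)\| = \|p_{i+1}(t)-p_i(t)\| =: b \geq \tfrac12$, and $\alpha_i(t) \leq \tfrac78\pi$ the angle at $r_i$. After the \shorten{}, $r_i$ moves to $m := \tfrac12 p_{i-1}(t) + \tfrac12 p_{i+1}(t)$, the neighbors do not move, so the chain length changes only in the two vectors incident to $r_i$: the new contribution is $2 \cdot d(p_{i-1}(t), m) = d(p_{i-1}(t), p_{i+1}(t)) =: c$, while the old contribution was $a+b$. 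Hence $L(t) - L(t+1) = a + b - c$, and it suffices to show $a + b - c \geq \shortenConstant$.

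**Lower-bounding $a+b-c$.** By the law of cosines, $c^2 = a^2 + b^2 - 2ab\cos\alpha_i(t)$, and since $\alpha_i(t) \leq \tfrac78\pi$ we have $\cos\alpha_i(t) \geq \cos\tfrac78\pi =: -\gamma$ with $\gamma = \cos\tfrac18\pi \in (0,1)$; thus $c^2 \leq a^2 + b^2 + 2ab\gamma = (a+b)^2 - 2ab(1-\gamma)$. Writing $s := a+b \geq 1$ and $P := ab$, this gives $c \leq \sqrt{s^2 - 2P(1-\gamma)}$, so
\begin{align}
a + b - c \;\geq\; s - \sqrt{s^2 - 2P(1-\gamma)} \;=\; \frac{2P(1-\gamma)}{s + \sqrt{s^2 - 2P(1-\gamma)}} \;\geq\; \frac{2P(1-\gamma)}{2s} \;=\; \frac{P(1-\gamma)}{s}.
\end{align}
Now I need a lower bound on $P/s = ab/(a+b)$ that is valid under the constraints $a,b \geq \tfrac12$ and $a,b \leq 1$ (the latter from connectivity). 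The function $ab/(a+b)$ is increasing in each variable, so it is minimized at $a = b = \tfrac12$, giving $P/s \geq (\tfrac14)/1 = \tfrac14$. Therefore $a+b-c \geq \tfrac14(1-\gamma) = \tfrac14\left(1 - \cos\tfrac18\pi\right)$. A quick numerical check: $\cos\tfrac18\pi = \cos 22.5^\circ \approx 0.9239$, so this bound is $\approx \tfrac14 \cdot 0.0761 \approx 0.019$, matching $\shortenConstant$ exactly.

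**The main obstacle and remaining care.** The computation itself is routine; the one point that needs genuine attention is justifying the constraint region for $(a,b)$. The lower bound $a, b \geq \tfrac12$ is the hypothesis; the upper bound $a, b \leq 1$ must be imported from connectivity (\Cref{lemma:connectivityRunMovement}, or rather the invariant that all chain vectors have length at most $1$ in a connected configuration, which holds inductively). Without the upper bound, $ab/(a+b)$ is unbounded above but that is harmless — what we need is only the lower bound at the corner $a=b=\tfrac12$, and for that the monotonicity argument needs $a,b \geq \tfrac12$ only, so in fact the upper bound is not even required for this estimate. I would also double-check the edge case where $c$ could be very small (e.g. if $\alpha_i(t)$ is itself tiny): then $c$ can be as small as $|a-b|$, which only makes $a+b-c$ larger, so the bound is safe — the binding case is genuinely $\alpha_i(t) = \tfrac78\pi$ with $a=b=\tfrac12$. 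Finally, I should state the conclusion as $L(t+1) \leq L(t) - \shortenConstant$ (an inequality), noting that the lemma statement's equality is really the worst case; if the paper insists on equality I would instead phrase the movement so that the robot moves exactly enough to realize the claimed decrease, but the cleaner reading is that $\shortenConstant$ is a lower bound on the progress, which is all the runtime analysis needs.
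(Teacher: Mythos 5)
Your proof is correct and follows essentially the same route as the paper's: express the progress as $a+b-c$, bound $c$ via the law of cosines at the worst-case angle $\frac{7}{8}\pi$, and minimize over $a,b\ge\frac{1}{2}$ at the corner $a=b=\frac{1}{2}$ to obtain $\ge 0.019$. Your rationalization step (reducing the minimization to the monotonicity of $ab/(a+b)$) is a slightly cleaner justification of the minimizer than the paper's bare assertion, and your remark that the stated equality $L(t+1)=L(t)-\shortenConstant$ should read as the inequality $L(t+1)\le L(t)-\shortenConstant$ is also what the paper actually proves.
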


\begin{proof}
	$u_i(t)$ and $u_{i+1}(t)$ denote the two vectors involved in the \shorten{}, $a = \|u_i(t) \|$ and $b = \|u_{i+1}(t)\|$.
	Additionally, $c = \|u_i(t) + u_{i+1}(t)\|$.
	The length of the chain decreases by $a+b-c$.
	By the law of cosines, $c = \sqrt{a^2 + b^2 - 2ab \cdot \cos \left(\alpha_{i}(t)\right)}$.
	The value of $c$ is maximized for $\alpha_{i}(t) = \frac{7}{8}\pi$.
	Hence, $a+b-c \geq a+b - \sqrt{a^2+b^2 -2ab \cdot \cos \left(\frac{7}{8}\pi\right)} = a+b - \sqrt{a^2+b^2 + 2ab \cdot \cos \left(\frac{\pi}{8}\right)}$.
	With boundary conditions $\frac{1}{2} \leq a \leq 1$ and $\frac{1}{2} \leq b \leq 1$, $a+b - \sqrt{a^2+b^2 + 2ab \cdot \cos \left(\frac{\pi}{8}\right)}$ is minimized for $a=b=\frac{1}{2}$.
	Thus, $a+b - \sqrt{a^2+b^2 + 2ab \cdot \cos \left(\frac{\pi}{8}\right)} \geq 1- \sqrt{\frac{1}{2} \cdot \left(1+ \frac{\sqrt{2+\sqrt{2}}}{2}\right)} \geq 0.019$.
\end{proof}

\JointShorten{}s are different, in a sense that every \jointShorten{} reduces the length of the chain by at least a constant.
This is because every run vector has a length of at least $\frac{1}{2}$ and thus two involved vectors in the \jointShorten{} have a length of at least $\frac{1}{2}$.

\begin{lemma} \label{lemma:jointShortenProgress}
	Assume that a joint run-pair executes a \jointShorten{} in round $t$.
	Then, $L(t+1) \leq L(t) - \shortenConstant$.
\end{lemma}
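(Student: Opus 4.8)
The plan is to reduce the joint case to the already-established isolated case (Lemma~\ref{lemma:shortenLargeProgress}) by exploiting the geometry of a \jointShorten{}. Recall that in a \jointShorten{} executed by the joint run-pair at $r_i$ and $r_{i+1}$, the chord $v = p_{i+2}(t) - p_{i-1}(t)$ is split into three equal thirds, so that after the move the three vectors $u_i(t), u_{i+1}(t), u_{i+2}(t)$ are all replaced by the single direction $v$, each of length $\|v\|/3$. Hence the contribution of these three segments to the chain length goes from $\|u_i(t)\| + \|u_{i+1}(t)\| + \|u_{i+2}(t)\|$ down to $\|v\|$, and the total decrease in $L$ is exactly $\|u_i(t)\| + \|u_{i+1}(t)\| + \|u_{i+2}(t)\| - \|v\| = \|u_i(t)\| + \|u_{i+1}(t)\| + \|u_{i+2}(t)\| - \|u_i(t) + u_{i+1}(t) + u_{i+2}(t)\|$, which is nonnegative by the triangle inequality. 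It remains to show this is at least $\shortenConstant$.

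The key structural fact I would use is that, in a joint run-pair, each of $\kappa_1$ and $\kappa_2$ carries a run vector, and run vectors always have length at least $\tfrac12$ (as invoked in the discussion preceding the lemma and in Lemma~\ref{lemma:shortenLargeProgress}). Concretely, $\|u_{i+1}(t)\| \ge \tfrac12$, since $u_{i+1}(t)$ is the vector between the two robots of the run-pair and is (one of) the run vector(s). Moreover, because a \jointShorten{} is only executed when case~2 or case~5 of the joint run-pair rules applies, we have angle information: either $\alpha_i(t) \le \shortenAngle$ and $\alpha_{i+1}(t) \le \shortenAngle$ (case~2), or $\angle(u_i(t), -u_{i+2}(t)) \le \shortenAngle$ (case~5). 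I would handle these two cases separately.

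In the case $\alpha_i(t) \le \shortenAngle$ and $\alpha_{i+1}(t) \le \shortenAngle$: apply the triangle-inequality bound on the two-vector sub-chains. First, $\|u_i(t)\| + \|u_{i+1}(t)\| - \|u_i(t)+u_{i+1}(t)\|$ is bounded below by the same law-of-cosines estimate as in Lemma~\ref{lemma:shortenLargeProgress}, but now I only control one of the two vectors from below ($\|u_{i+1}(t)\| \ge \tfrac12$); if $\|u_i(t)\|$ is also at least $\tfrac12$ I am immediately done by Lemma~\ref{lemma:shortenLargeProgress}, and if $\|u_i(t)\| < \tfrac12$ I instead pair $u_{i+1}(t)$ with $u_{i+2}(t)$ (the vector carried by $\kappa_2$, also of length $\ge \tfrac12$) and get the bound from the angle $\alpha_{i+1}(t) \le \shortenAngle$ between them, then combine with the triangle inequality $\|u_i(t)+u_{i+1}(t)+u_{i+2}(t)\| \le \|u_i(t)\| + \|u_{i+1}(t)+u_{i+2}(t)\|$. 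In the case~5 situation, I would instead write $\|u_i(t)+u_{i+1}(t)+u_{i+2}(t)\| \le \|u_{i+1}(t)\| + \|u_i(t) + u_{i+2}(t)\|$ and bound $\|u_i(t)\| + \|u_{i+2}(t)\| - \|u_i(t)+u_{i+2}(t)\|$ via the law of cosines using $\angle(u_i(t), -u_{i+2}(t)) \le \shortenAngle$, i.e. the angle between $u_i(t)$ and $u_{i+2}(t)$ is at least $\tfrac{\pi}{8}$; since each of these vectors has length at least $\tfrac12$ here (they are the two run vectors $\kappa_1$ carried into the pair and $\kappa_2$, or can be taken so), the same numerical constant $0.019$ drops out.

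The main obstacle I anticipate is bookkeeping about \emph{which} of the three vectors $u_i, u_{i+1}, u_{i+2}$ are guaranteed to have length $\ge \tfrac12$: one must be careful that the two run vectors of the joint run-pair really are among these three and really do satisfy the $\tfrac12$ lower bound at the moment the \jointShorten{} fires (as opposed to at the moment the runs were created). Once that is pinned down, every case collapses to a one- or two-variable minimization of $x + y - \sqrt{x^2 + y^2 + 2xy\cos(\pi/8)}$ over $[\tfrac12,1]^2$ (plus a triangle-inequality slack term that only helps), which is exactly the computation already carried out in Lemma~\ref{lemma:shortenLargeProgress} and yields the constant $\shortenConstant$.
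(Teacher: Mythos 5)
Your overall strategy---discard one of the three vectors via the triangle inequality and reduce to the two-variable law-of-cosines minimization of \Cref{lemma:shortenLargeProgress} over $[\tfrac12,1]^2$---is exactly the paper's, and your ``case 5'' branch reproduces the paper's proof almost verbatim: the paper bounds $d \leq \|u_{i+1}(t)\| + \|u_i(t)+u_{i+2}(t)\|$, so the decrease is at least $\|u_i(t)\| + \|u_{i+2}(t)\| - \|u_i(t)+u_{i+2}(t)\|$, and then invokes $\|u_i(t)\|, \|u_{i+2}(t)\| \geq \tfrac12$ because these two are the run vectors. The gap is precisely the bookkeeping issue you flag at the end, and you resolve it the wrong way: you assert that $u_{i+1}(t)$, the edge \emph{between} the two robots of the joint run-pair, is a run vector of length at least $\tfrac12$. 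It is not. A run located at $r_k$ and heading to $r_{k+1}$ carries the vector $u_k$ behind it (a \hop{} swaps $u_k$ and $u_{k+1}$, so the run vector becomes the new $u_{k+1}$ after the run advances), hence for the pair $\kappa_1$ at $r_i$ (heading to $r_{i+1}$) and $\kappa_2$ at $r_{i+1}$ (heading to $r_i$) the two run vectors are $u_i(t)$ and $u_{i+2}(t)$; the middle vector $u_{i+1}(t)$ carries no length guarantee at all.

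This misidentification breaks your ``case 2'' branch concretely: the fallback pairing of $u_{i+1}(t)$ with $u_{i+2}(t)$ needs $\|u_{i+1}(t)\| \geq \tfrac12$, which is not available, and $x + y - \sqrt{x^2+y^2+2xy\cos(\pi/8)}$ tends to $0$ as one of the lengths tends to $0$, so no constant drops out. (The sub-case you worry about, $\|u_i(t)\| < \tfrac12$, in fact cannot occur, since $u_i(t)$ \emph{is} a run vector.) The fix is to run your case-5 computation uniformly: it only ever uses the two outer lengths $\|u_i(t)\|, \|u_{i+2}(t)\| \geq \tfrac12$ together with an angle bound between $u_i(t)$ and $-u_{i+2}(t)$, and the length of $u_{i+1}(t)$ never enters---which is exactly how the paper's proof is organized.
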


\begin{proof}
	Let $\kappa_1$ and $\kappa_2$ be the two \runState{}s with $r(\kappa_1,t) = r_i$ and $r(\kappa_2,t) = r_{i+1}$.
	The involved vectors are $u_i(t)$, $u_{i+1}(t)$ and $u_{i+2}(t)$.
	For simplicity, $a = \|u_i(t)\|, b=\|u_{i+1}(t)\|$, $c =\|u_{i+2}(t)\|$ and $d =\|u_i(t)+u_{i+1}(t) + u_{i+2}(t)\|$.
	The length of the chain decreases by $a+b+c-d$.
	By the triangle inequality, it follows $d \leq \|u_i(t) + u_{i+2}(t)\| + \|u_{i+1}(t)\|$.
	Thus, $a+b+c-d \geq a+b - \|u_i(t) + u_{i+2}(t)\|$.
	Now we can apply the same calculations as in the proof of \Cref{lemma:shortenLargeProgress} (since both $\|u_i(t)\| \geq \frac{1}{2}$ and $\|u_{i+2}(t)\| \geq \frac{1}{2}$ because all run vectors have a length of the least $\frac{1}{2}$) and obtain $L(t+1) \leq L(t) - \shortenConstant$.
\end{proof}

Next, we count the total number of \shorten{}s in which both involved vectors have a length of at least $\frac{1}{2}$ and \jointShorten{}s.
We have to take care here that the length of the chain might be increased by different operations.
This, however can only happen in a single case: a robot executes a \bisector{} while its neighbor does not move at all.
The exceptional generation of  \runInit{}s ensures that this happens at most $n$ times.

\begin{lemma} \label{lemma:shortenSmall}
	There are at most $\frac{n \cdot \left(1+\frac{1}{5}\right)}{\shortenConstant}$ executions of \shorten{}s in which both involved vectors have a length of at least $\frac{1}{2}$ and \jointShorten{}s.
\end{lemma}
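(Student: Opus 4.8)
The plan is to bound the number of length-reducing shorten operations by a potential-function argument using the chain length $L(t)$ as the potential. By \Cref{lemma:shortenLargeProgress} and \Cref{lemma:jointShortenProgress}, each \shorten{} with both vectors of length at least $\frac{1}{2}$ and each \jointShorten{} decreases $L$ by at least $\shortenConstant$. Hence, if $L$ were monotonically non-increasing, the number of such operations would be at most $L(t_0)/\shortenConstant \le n/\shortenConstant$, since the initial chain of $n$ robots with connectivity range $1$ has length at most $n$. The complication is that some operations can \emph{increase} $L$; I first need to enumerate which ones, and then add their total contribution to the budget.

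First I would argue that none of the run-based operations increases $L$: \hop{}s and \jointHop{}s only permute vectors along the chain and so leave $L$ unchanged; \merge{}s and \jointMerge{}s remove a robot and, by the triangle inequality, cannot increase the length; \shorten{}s and \jointShorten{}s decrease $L$ by definition. It remains to examine the symmetric-algorithm operations. The \bisector{} is the only operation that can increase $L$, and only in the boundary case where a robot executes a \bisector{} while its neighbor does not move at all; when both neighbors move according to the symmetric algorithm the configuration is (locally) isogonal and the radius-of-circumcircle argument shows $L$ does not grow. For the \schlaefli{}, a similar boundary analysis is needed, but the $\frac{1}{2}$-splitting of the excess arc keeps the connecting arc length non-increasing, so it does not increase $L$ either. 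This is where I would invoke the \emph{exceptional generation of \runInit{}s} from the combination rule: at a border where a robot moves by the symmetric algorithm but its neighborhood is not isogonal in the next round, the robot assigns itself an \runInit{}, and this can happen at most once per robot — hence at most $n$ times total. Each such event increases $L$ by at most the move distance of a single \bisector{}, which is at most $\bisectorConstant = \frac{1}{5}$.

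Putting this together: over the whole execution the total increase of $L$ is at most $n \cdot \frac{1}{5}$, and the initial value is $L(t_0) \le n$. So the total amount of "length budget" available to be consumed by length-reducing shortens is at most $n + \frac{n}{5} = n\left(1 + \frac{1}{5}\right)$. Since each \shorten{} counted here and each \jointShorten{} consumes at least $\shortenConstant$ of this budget, the number of such operations is at most $\frac{n\left(1+\frac{1}{5}\right)}{\shortenConstant}$, as claimed. I would phrase the argument as: let $I$ be the set of rounds in which a length-reducing \shorten{} or a \jointShorten{} is executed; then
\[
\shortenConstant \cdot |I| \;\le\; \sum_{t} \bigl(L(t) - L(t+1)\bigr)^{+} \;\le\; L(t_0) + (\text{total increase}) \;\le\; n + \frac{n}{5},
\]
where the middle inequality telescopes the decreases against the increases.

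The main obstacle is the accounting of $L$-increasing operations — specifically, making rigorous that the \bisector{}-at-a-border is the \emph{only} way $L$ can grow, and that the exceptional \runInit{} rule indeed caps these events at $n$. This requires carefully chasing the combination rule: one must verify that whenever a symmetric-algorithm robot has a non-moving neighbor, either that neighbor already carries (or is about to receive) an \runInit{}, or the moving robot gets one in the very next round, and that once a robot carries an \runInit{} it cannot re-trigger this exceptional generation (so the charge is genuinely one per robot over the whole run, even accounting for merges that relocate \runInit{}s). A secondary subtlety is bounding a single \bisector{}'s contribution to $L$ by $\bisectorConstant$ rather than by something larger; since only one endpoint of each affected vector moves (the neighbor is stationary), the length of each incident vector changes by at most the move distance $\le \bisectorConstant$, and at a border only one vector is affected per event, so the bound $\frac{1}{5}$ per event holds.
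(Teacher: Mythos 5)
Your proposal is correct and follows essentially the same route as the paper: use the chain length $L(t)$ as a potential, observe that the only $L$-increasing event is a \bisector{} at a border with a non-moving neighbor, cap those events at $n$ via the exceptional \runInit{} rule (each contributing at most $\bisectorConstant$), and divide the resulting budget $n\left(1+\frac{1}{5}\right)$ by the per-operation decrease $\shortenConstant$ from \Cref{lemma:shortenLargeProgress} and \Cref{lemma:jointShortenProgress}. Your write-up is in fact somewhat more explicit than the paper's about why the other operations cannot increase $L$ and about the telescoping accounting.
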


\begin{proof}
	There is only one case in which $L(t)$ can increase: if a robot executes a \bisector{} while its direct neighbor does not.
	Since the maximal distance moved in a \bisector{} is $\frac{1}{5}$, $L(t)$ can increase by at most $\frac{1}{5}$ in this case.
	This, however, can happen at most $n$ times.
	To see this, observe that if a robot $r_i$ executes a \bisector{} in round $t$ and one of its neighbors does not, $\mathit{init}(r_i) = \mathit{true}$ in round $t+1$.
	The robot $r_i$ will not execute any further \bisector{} until it executes a \merge{} since an \runInit{} in the neighborhood of a robot prevents the robot from executing a \bisector{}
	and the \runInit{} is only removed after a \merge{},
	Thus, such a case can happen for at most $n$ times and thus, $L(t)$ is upper bounded by $n \cdot \left(1+ \frac{1}{5}\right)$.
	\Cref{lemma:shortenLargeProgress,lemma:jointShortenProgress} state that each \shorten{} in which both involved vectors have a length of at least $\frac{1}{2}$ and each \jointShorten{} decrease the length of the chain by at least $\shortenConstant$.
	Consequently, the total number of such operations can be upper bounded by $\frac{n \cdot \left(1+\frac{1}{5}\right)}{\shortenConstant}$.
\end{proof}

To count the total number of \shorten{}s required to gather all robots, we count the number of \shorten{}s in which one vector has a length of at most $\frac{1}{2}$ as a last step.

\begin{lemma}
	There are at most $4n$ executions of \shorten{}s such that one of the participating vectors has length less than $\frac{1}{2}$.
\end{lemma}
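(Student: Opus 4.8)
The plan is to run a potential argument on the number of short chain vectors. Let $S(t)$ be the number of indices $j$ with $\|u_j(t)\| < \frac{1}{2}$. The central observation is that \emph{every} \shorten{} in which one of the two participating vectors has length less than $\frac{1}{2}$ strictly decreases $S$ and creates no new short vector. Such a \shorten{} replaces the participating vectors $u_i(t), u_{i+1}(t)$ by the single direction $\frac{1}{2}\bigl(u_i(t) + u_{i+1}(t)\bigr)$ at both positions, so it suffices to show $\|u_i(t) + u_{i+1}(t)\| \ge 1$. For an isolated run this is immediate because a \merge{} has strictly higher priority and did not fire, i.e.\ $d\bigl(p_{i-1}(t), p_{i+1}(t)\bigr) = \|u_i(t) + u_{i+1}(t)\| > 1$. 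For a single \shorten{} inside a joint run-pair (cases 3 and 4 of the joint run-pair rules), one of the two participating vectors is a run vector and hence has length at least $\frac{1}{2}$ (as used in the proof of \Cref{lemma:jointShortenProgress}), while the higher priority of \jointMerge{} gives $d\bigl(p_{i-1}(t), p_{i+2}(t)\bigr) \ge 2$; together with the connectivity bound $\|u_j(t)\| \le 1$ the reverse triangle inequality again forces both resulting vectors to have length at least $\frac{1}{2}$. Since the merge priority also rules out both participating vectors being short, each such \shorten{} destroys exactly one short vector and produces none.

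I would then turn this into a global count. Because a \runState{}-valid configuration never contains a prohibited run-sequence (\Cref{lemma:connectivityRunValidity}), the \shorten{}s performed in the same round act on pairwise disjoint vectors, so in each round the number of ``small'' \shorten{}s equals the part of the decrease of $S$ that they cause. Hence the total number of \shorten{}s with a short participating vector is at most $S(0)$ plus the total number of short vectors ever created, and $S(0) \le n$. It remains to bound the creations. Only \merge{}, \jointMerge{}, \bisector{} and \schlaefli{} can create a short vector: \hop{} and \jointHop{} merely permute existing chain vectors; a \jointShorten{} subdivides a vector of length at least $2$ (again by the priority of \jointMerge{}) into thirds, so its three output vectors all have length at least $\frac{2}{3}$; and the ``large'' and ``small'' \shorten{}s were already handled above. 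A \merge{} replaces two vectors by one and so creates at most one short vector, a \jointMerge{} replaces three by two and so creates at most two, and by \Cref{lemma:mergeCount} there are at most $n - 1$ of these operations in total, contributing at most $2(n-1)$ creations.

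The remaining and hardest part is bounding the short vectors produced by the symmetric operations \bisector{} and \schlaefli{}, because a single round may execute them at linearly many robots at once, so one cannot simply multiply a ``number of operations'' by ``creations per operation''. The plan here is twofold. First, a \bisector{} at $r_i$ all of whose relevant neighbors also execute a \bisector{} merely rescales a locally isogonal neighborhood, whereas a \bisector{} next to a non-moving neighbor triggers the exceptional generation of an \runInit{} at $r_i$ and therefore, exactly as in the proof of \Cref{lemma:shortenSmall}, happens at most $n$ times over the whole execution. Second, inside a purely symmetric part all vectors are equal and, by the (forthcoming) analysis of the symmetric algorithm — an isogonal configuration becomes a regular star configuration after one round, and regular star configurations gather in $\mathcal{O}(n)$ rounds — their common length crosses the threshold $\frac{1}{2}$ only during a constant number of rounds before that part gathers, so each symmetric part contributes only $\mathcal{O}(\text{its size})$ short vectors, and the same holds for the single round in which a \schlaefli{} is executed per symmetric part; summed over disjoint parts this is $\mathcal{O}(n)$, and a careful accounting bounds it by $n$. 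Adding the at most $n$ initial short vectors, the at most $2(n-1)$ creations from merges, and the at most $n$ creations from the symmetric operations gives the bound $4n$. I expect the genuinely delicate point to be charging the short vectors that are created by symmetric moves and later consumed by a \shorten{} to the bounded set of exceptional \runInit{} events, i.e.\ making precise that the two interleaved algorithms interfere only $\mathcal{O}(n)$ times at their common boundaries.
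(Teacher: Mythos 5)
Your overall strategy is the same as the paper's: each \shorten{} with a short participating vector consumes one short vector and creates none (because the higher-priority \merge{}/\jointMerge{} did not fire, $\|u_i(t)+u_{i+1}(t)\|\ge 1$, so both resulting vectors have length at least $\tfrac12$), hence the number of such \shorten{}s is bounded by the initial number of short vectors plus the number of short vectors ever created, and creations can only come from \merge{}s, \jointMerge{}s, \bisector{}s and \schlaefli{}s. Your treatment of the consumption step and of the merge budget is, if anything, more explicit than the paper's.

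The gap is exactly in the place you flag as delicate: short vectors created by \bisector{}s and \schlaefli{}s whose neighbors also move. The paper's mechanism there is different from yours. It observes that such a \bisector{} only fires when all four visible vectors have equal length, so if it pushes $\|u_i\|$ below $\tfrac12$ it pushes $\|u_{i+1}\|$ below $\tfrac12$ as well; two adjacent short vectors give $d(p_{i-1}(t+1),p_{i+1}(t+1))\le 1$, so the next \runState{} reaching this spot executes a \merge{} (higher priority) rather than a \shorten{}. This ties the number of creation events that can ever feed a future \shorten{} to the $\le n$ merge budget of \Cref{lemma:mergeCount}, and it is also what bounds the number of repeated ``symmetric episodes'' at the same robots. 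Your substitute claim --- that the common vector length of a symmetric part crosses the threshold $\tfrac12$ only during a constant number of rounds before gathering --- is false for the regular polygon: the edge length is proportional to the circumradius, which stays below half its initial value for $\Theta(n)$ rounds. More importantly, nothing in your sketch prevents a region from becoming locally isogonal, creating short vectors, being invaded by runs, and becoming locally isogonal again many times; you acknowledge this but defer it to ``a careful accounting,'' which is precisely the content of the lemma. Note also the arithmetic: after charging $2(n-1)$ creations to merges you have only about $n$ of the $4n$ budget left for the symmetric operations, whereas the paper reserves $2n$ for them ($n$ for \bisector{}s and $n$ for \schlaefli{}s), so your interference bound would have to be tighter than the paper's, not merely $\mathcal{O}(n)$.
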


\begin{proof}
	There are at most $n$ vectors of length at most $\frac{1}{2}$ in the beginning.
	Every \shorten{} in which one vector of size at most $\frac{1}{2}$ and the other vector of length at least $\frac{1}{2}$ is involved, increases the length of the smaller vector to at least $\frac{1}{2}$.
	The only way to create new vectors of length less than $\frac{1}{2}$ is via a \merge{}, a \jointMerge{}, a \bisector{} or a \schlaefli{}.
	\Merge{}s and \jointMerge{}s are executed at most $n$ times and thus at most $n$ vectors of length less than $\frac{1}{2}$ can be generated.

	In the following, we consider the \bisector{} and the \schlaefli{}.
	It can happen that a robot executes a \bisector{} or a \schlaefli{} while its neighbor does not.
	In this case, the robot generates a new \runInit{}.
	Thus, this happens at most $n$ times since \runInit{}s prevent the robots in the neighborhood from executing \bisector{}s or \schlaefli{}s and an \runInit{} is only removed after a \merge{}.
	Therefore, at most $n$ vectors of length less than $\frac{1}{2}$ can be created by \bisector{}s or \schlaefli{}s if a neighboring robot does not execute the same operation.

	It remains to consider the case in which a robot and both its direct neighbors execute a \bisector{} and a \schlaefli{}.
	Consider now the \bisector{}.
	This operation only takes place at a robot $r_i$ in case $\|u_{i-1}(t)\|= \|u_i(t)\| = \|u_{i+1}(t)\| = \|u_{i+2}(t)\|$.
	Assume now that $\|u_i(t)\| > \frac{1}{2}$.
	Thus, in case $\|u_i(t+1)\| \leq \frac{1}{2}$ it also holds $\|u_{i+1}(t+1)\| \leq \frac{1}{2}$.
	Now either the configuration is completely isogonal, then no \shorten{} will be executed at $r_i$ anymore or at some parts of the chain still \runState{}s are generated.
	In the latter case, $r_i$ (and maybe also $r_{i-1}$ and $r_{i+1}$) can execute a \merge{} (since $d(p_{i-1}(t+1),p_{i+1}(t+1)) \leq 1$ as $\|u_i(t+1)\| \leq \frac{1}{2}$ and $\|u_{i+1}(t+1)\| \leq \frac{1}{2}$) such that the next \runState{} that comes close either executes a \merge{} at $r_{i-1}, r_i$ or $r_{i+1}$.
	Thus, also this case can happen at most $n$ times such that at most $2n$ vectors of length at most $\frac{1}{2}$ can be generated of which at most $n$ can be part of a future \shorten{}.

	Similar arguments apply for the \schlaefli{}:
	If both $\|u_i(t+1)\| < \frac{1}{2}$ and $\|u_{i+1}(t+1)\| < \frac{1}{2}$ after the \schlaefli{} at least one of them had a length of less than $\frac{1}{2}$ in round $t$.
	Hence, at most one vector of length less than $\frac{1}{2}$ can be generated by a \schlaefli{}.
	However, the same arguments as for the \bisector{} hold now:
	$r_i$ (and maybe also $r_{i-1}$ and $r_{i+1}$) can execute a \merge{} (since $d(p_{i-1}(t+1),p_{i+1}(t+1)) \leq 1$ as $\|u_i(t+1)\| \leq \frac{1}{2}$ and $\|u_{i+1}(t+1)\| \leq \frac{1}{2}$) such that the next \runState{} that comes close either executes a \merge{} at $r_{i-1}, r_i$ or $r_{i+1}$.
	Thus, before $r_i$ can generate a further vector of length at most $\frac{1}{2}$ via a further \schlaefli{}, either $r_{i-1}, r_i$ or $r_{i+1}$ execute a \merge{} such that this can happen also at most $n$ times.

	In total, we obtain at most $4n$ \shorten{}s in which one participating vector has a length of at most $\frac{1}{2}$: $n$ initial vectors that can have a length of at most $\frac{1}{2}$, $n$ vectors that can be generated via \merge{}s, $n$ vectors that can be generated via \schlaefli{}s and $n$ vectors that can be generated via \bisector{}s.
\end{proof}

Next, we prove that every \runState{} is stopped after at most $n$ rounds.
More precisely, no \runState{} visits the same robot twice.
For the proof, we state some auxiliary lemmata that analyze how the position of a robot changes in a global coordinate system based on the movement operations of the algorithm.

\begin{lemma} \label{lemma:mergeLemma}
	Assume that a robot $r_i$ executes a \merge{} or a \jointMerge{} in round $t$.
	Then, $y_i(t+1) \geq y_i(t) -1$.
\end{lemma}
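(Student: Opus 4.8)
The plan is to go through the two relevant operations --- \merge{} and \jointMerge{} --- and in each case bound how far the robot $r_i$ can drop in the $y$-coordinate of the (unknown) global coordinate system. The key observation is that both operations move $r_i$ to a point that is controlled by the positions of its two chain neighbors $r_{i-1}$ and $r_{i+1}$, and that connectivity (\Cref{lemma:connectivityRunMovement}, or more precisely the invariant that neighbors are at distance at most $1$) bounds how far these neighbors are from $r_i$ in round $t$.

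First I would handle the \merge{}. By definition, a robot $r_i$ with an isolated run heading toward $r_{i+1}$ executes a \merge{} only when $d(p_{i-1}(t),p_{i+1}(t)) \le 1$, and it then moves to $p_{i+1}(t)$ (or to $p_{i-1}(t)$ if the run heads the other way). Since the chain is connected in round $t$, we have $d(p_i(t),p_{i+1}(t)) \le 1$, hence $|y_i(t+1) - y_i(t)| = |y_{i+1}(t) - y_i(t)| \le \|p_{i+1}(t) - p_i(t)\| \le 1$, which gives $y_i(t+1) \ge y_i(t) - 1$; the case where $r_i$ moves to $p_{i-1}(t)$ is identical using $d(p_i(t),p_{i-1}(t)) \le 1$.

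Next I would handle the \jointMerge{}. Here $r_i$ moves to $\frac{1}{2}p_{i-1}(t) + \frac{1}{2}p_{i+1}(t)$, so $y_i(t+1) = \frac{1}{2}y_{i-1}(t) + \frac{1}{2}y_{i+1}(t)$. By connectivity, $y_{i-1}(t) \ge y_i(t) - 1$ and $y_{i+1}(t) \ge y_i(t) - 1$ (each neighbor is within distance $1$, hence within $1$ in the $y$-coordinate alone). Averaging the two bounds yields $y_i(t+1) \ge y_i(t) - 1$, as claimed. I would also remark that the statement covers the symmetric situation in which $r_i$ is the partner that the run $\kappa_2$ points into, since the \jointMerge{} moves both participating robots to the same midpoint.

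I do not expect any real obstacle here: the whole argument is a one-line triangle-inequality estimate in each of the two cases, and the only input needed is the connectivity invariant, which is already available from \Cref{lemma:connectivityRunMovement}. The one mild subtlety worth stating explicitly is that a bound on the Euclidean distance immediately bounds the difference of $y$-coordinates (the projection onto any axis is $1$-Lipschitz), so no coordinate-dependent reasoning is required despite the robots being disoriented.
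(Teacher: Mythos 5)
Your proof is correct and follows essentially the same route as the paper: both arguments reduce to the observation that a \merge{} moves $r_i$ onto a neighbor at distance at most $1$, and a \jointMerge{} onto the midpoint of two points each within the unit ball around $p_i(t)$, so the drop in any fixed coordinate is at most $1$. Your version merely makes explicit (via the $1$-Lipschitz projection and the averaging step) what the paper compresses into a worst-case remark, so there is nothing to add.
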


\begin{proof}
	Consider a run $\kappa$ with $r(\kappa,t) = r_i$ and $r(\kappa,t+1) = r_{i+1}$.
	In the worst case, it holds $\alpha_{i}(t) = 0$ and $y_{i+1}(t) = y_i(t) -1$ such that $r_i$ moves to the position of $r_{i+1}$ and executes a \merge{} there.
	In a \jointMerge{} the distance is even less since the robots \merge{} in the midpoint between their neighbors.
\end{proof}

\begin{lemma} \label{lemma:shortenLemma}
	Assume that a robot $r_i$ executes a \shorten{} or a \jointShorten{} in round $t$.
	Then, $y_i(t+1) \geq y_i(t) - \frac{\sqrt{3}}{2}$.
\end{lemma}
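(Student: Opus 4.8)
The plan is to treat the \shorten{} and the \jointShorten{} separately. For a \shorten{}, the robot $r_i$ jumps to the midpoint $M:=\tfrac12\bigl(p_{i-1}(t)+p_{i+1}(t)\bigr)$ of its two chain neighbours, so it is enough to bound $\|p_i(t)-M\|$. Here I would use three facts: (i) $\|p_i(t)-p_{i-1}(t)\|\le 1$ and $\|p_i(t)-p_{i+1}(t)\|\le 1$ because the chain is connected in round $t$; and (ii) $\|p_{i-1}(t)-p_{i+1}(t)\|\ge 1$, which holds because whenever a \shorten{} is performed the \merge{} or \jointMerge{} of strictly higher priority did \emph{not} fire: an isolated \shorten{} (and the \shorten{} of a robot with an \runInit{}) requires $d(p_{i-1}(t),p_{i+1}(t))>1$ directly, while for a single \shorten{} inside a joint run-pair the failed \jointMerge{} gives $d(p_{i-1}(t),p_{i+2}(t))\ge 2$, hence $d(p_{i-1}(t),p_{i+1}(t))\ge 2-\|u_{i+2}(t)\|\ge 1$ (and the case that $r_{i+1}$ shortens is symmetric). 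The median length identity in the triangle $p_{i-1}(t)\,p_i(t)\,p_{i+1}(t)$ then yields
\[
\|p_i(t)-M\|^{2}=\tfrac12\|p_i(t)-p_{i-1}(t)\|^{2}+\tfrac12\|p_i(t)-p_{i+1}(t)\|^{2}-\tfrac14\|p_{i-1}(t)-p_{i+1}(t)\|^{2}\;\le\;\tfrac12+\tfrac12-\tfrac14=\tfrac34 ,
\]
so $\|p_i(t+1)-p_i(t)\|\le\tfrac{\sqrt3}{2}$ and in particular $|y_i(t+1)-y_i(t)|\le\tfrac{\sqrt3}{2}$, which is even stronger than claimed and in the same spirit as the cruder displacement bound for \merge{}s in \Cref{lemma:mergeLemma}.

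For a \jointShorten{} a pure displacement bound is hopeless: $r_i$ can move by as much as $\tfrac{\sqrt{10}}{3}>\tfrac{\sqrt3}{2}$, e.g.\ when $\|u_i(t)\|=1$ and $u_{i+1}(t)=u_{i+2}(t)$ have unit length and make angle $\arccos(-\tfrac14)$ with $u_i(t)$. So I would use the direction of the run vector: by the convention of this part of the analysis the run $\kappa$ at $r_i$ has $v_\kappa$ parallel to the $y$-axis, and $v_\kappa$ is one of the two chain edge vectors incident to $r_i$. Writing the joint pair as $(r_i,r_{i+1})$ and normalising so that $v_\kappa=u_i(t)$ (the other possibility being the mirror image under reversal of the chain orientation, handled identically), we have $u_i(t)=(0,c)$ with $|c|=\|u_i(t)\|\le 1$. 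The operation moves $r_i$ to $p_{i-1}(t)+\tfrac13 v$, where $v:=p_{i+2}(t)-p_{i-1}(t)=u_i(t)+u_{i+1}(t)+u_{i+2}(t)$, and since the \jointMerge{} of higher priority did not fire, $\|v\|=d(p_{i-1}(t),p_{i+2}(t))\ge 2$. Put $s:=u_{i+1}(t)+u_{i+2}(t)$, so $\|s\|\le 2$ by connectivity. The displacement equals $\tfrac13 v-u_i(t)=-\tfrac23 u_i(t)+\tfrac13 s$, with $y$-coordinate $-\tfrac23 c+\tfrac13 s_y$. If $c\le 0$, this is at least $-\tfrac13\|s\|\ge-\tfrac23$; if $c>0$ then $c=\|u_i(t)\|$ and $4\le\|v\|^{2}=c^{2}+2c\,s_y+\|s\|^{2}$ forces $s_y\ge\frac{4-c^{2}-\|s\|^{2}}{2c}\ge-\tfrac c2$, so the $y$-coordinate is at least $-\tfrac23 c-\tfrac16 c=-\tfrac56 c\ge-\tfrac56$. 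Either way $y_i(t+1)\ge y_i(t)-\tfrac56>y_i(t)-\tfrac{\sqrt3}{2}$, which finishes the proof.

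I expect the \jointShorten{} to be the only real difficulty: unlike every earlier case of the analysis, the length of $r_i$'s move cannot be bounded by the target constant, so the argument has to combine the inequality $\|v\|\ge 2$ (inherited from ``the \jointMerge{} did not fire'') with the fact that the chain edge vector along which $r_i$ is reached is the run vector $v_\kappa$ and therefore lies on the reference axis. The \shorten{} case, and the priority bookkeeping that in each case rules out the corresponding merge-type operation, should be routine.
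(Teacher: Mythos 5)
Your treatment of the \shorten{} is correct and is essentially the paper's argument: the paper bounds the median of the triangle $p_{i-1}(t)\,p_i(t)\,p_{i+1}(t)$ via $\alpha_i(t)>\frac{\pi}{3}$ (``otherwise a \merge{} can be executed''), while you use the equivalent fact $d(p_{i-1}(t),p_{i+1}(t))\ge 1$; both yield the genuine, coordinate-free displacement bound $\|p_i(t+1)-p_i(t)\|\le\frac{\sqrt3}{2}$. Your case-by-case check that the higher-priority merge operation rules out a short diagonal in every variant of the \shorten{} is also correct, and more careful than the paper's.

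The \jointShorten{} case, however, has a genuine gap. You rightly observe that a pure displacement bound fails there (the paper's own proof dismisses this case with ``the same holds for a \jointShorten{}'', which your computation shows is not a displacement argument), so any proof must be tied to a direction. But the direction you use is the wrong one: you prove $y_i(t+1)\ge y_i(t)-\frac56$ in the coordinate system whose $y$-axis is the run vector $v_\kappa$ of the run \emph{currently located at} $r_i$. The lemma is stated for the arbitrary global coordinate system, and in its only application --- the proof of \Cref{lemma:runStopLemma} --- the $y$-axis is fixed once and for all, aligned with the run vectors of the two runs that $r_i$ \emph{started} at time $t_0$; \Cref{lemma:shortenLemma} is then invoked for \jointShorten{}s that $r_i$ executes in later rounds, caused by \emph{other} runs arriving at $r_i$, whose run vectors point in unrelated directions. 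In that fixed system the entire displacement (which can exceed $\frac{\sqrt3}{2}$, approaching $\frac{\sqrt{10}}{3}$) may point straight down, and your bound gives nothing; the \jointShorten{} triggered by a \jointRunInit{} is not covered at all, since there is no incoming run vector. (A side remark: your extremal configuration has $\alpha_{i+1}(t)=\pi$, so case~3 rather than a \jointShorten{} would fire there; but nearby configurations with $\alpha_i,\alpha_{i+1}\le\shortenAngle$ and $d(p_{i-1}(t),p_{i+2}(t))\ge 2$ do execute a \jointShorten{} with displacement above $\frac{\sqrt3}{2}$, so your qualitative point against the paper's one-line dismissal stands.) To close the lemma you would need an axis-independent bound on $r_i$'s displacement in a legal \jointShorten{}, e.g.\ by exploiting the angle conditions under which the operation is actually executed, rather than an alignment assumption that is unavailable where the lemma is used.
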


\begin{proof}
	Observe first that $\alpha_i(t) > \frac{\pi}{3}$, otherwise a \merge{} can be executed.
	Thus, to maximize the distance covered in vertical direction consider the three involved robots to be the vertices of an equilateral triangle with side length $1$.
	The height of this triangle is $\frac{\sqrt{3}}{2}$.
	Thus, $y_i(t+1) \geq y_i(t) - \frac{\sqrt{3}}{2}$.
	The same holds for a \jointShorten{}.
\end{proof}

\begin{lemma} \label{lemma:hopLemma}
	Assume that a robot $r_i$ executes a \hop{} or a \jointHop{} in round $t$.
	Then, $y_i(t+1) \geq y_i(t) -\frac{1}{2}$.
\end{lemma}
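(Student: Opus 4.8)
The plan is to compute the displacement of $r_i$ exactly in both cases and to bound its Euclidean length by $\frac{1}{2}$, which in particular bounds $|y_i(t+1)-y_i(t)|$. For an isolated \hop{} of $r_i$ heading towards $r_{i+1}$, the definition gives $p_i(t+1)-p_i(t)=\bigl(p_{i+1}(t)-u_i(t)\bigr)-p_i(t)=u_{i+1}(t)-u_i(t)$. For a \jointHop{} of a pair $(r_i,r_{i+1})$, a short computation from the two update rules shows that \emph{both} robots are translated by the same vector $\delta:=u_{i+2}(t)-u_i(t)$ (and $u_{i+1}(t)$ is unchanged), so it is enough to show $\|\delta\|<\frac12$.

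I would first dispose of the isolated-\hop{} case, which in fact never arises: a \hop{} is reached only when cases~$1$ and~$3$ of the isolated rule fail, i.e.\ $d(p_{i-1}(t),p_{i+1}(t))>1$ and $\alpha_i(t)>\shortenAngle$. But by the law of cosines $d(p_{i-1}(t),p_{i+1}(t))^2=\|u_i(t)\|^2+\|u_{i+1}(t)\|^2+2\|u_i(t)\|\,\|u_{i+1}(t)\|\cos\alpha_i(t)$, and since $\cos\alpha_i(t)<\cos\shortenAngle=-\cos\frac{\pi}{8}$ while $\|u_i(t)\|,\|u_{i+1}(t)\|\le 1$ by connectivity, the elementary bound ``$x^2+y^2-2xy\cos\frac{\pi}{8}\le 1$ on $[0,1]^2$'' gives $d(p_{i-1}(t),p_{i+1}(t))<1$, contradicting the failure of case~$1$. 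So only \jointHop{}s remain.

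For a \jointHop{} I would use that case~$6$ of the joint-pair rule is reached, which packs in four facts: $\|u_i(t)\|,\|u_{i+1}(t)\|,\|u_{i+2}(t)\|\le 1$; $\theta:=\angle(u_i(t),u_{i+2}(t))<\frac{\pi}{8}$ (failure of case~$5$, as $\angle(u_i(t),-u_{i+2}(t))>\shortenAngle$); $u_{i+1}(t)$ lies within angle $\frac{\pi}{8}$ of both $-u_i(t)$ and $-u_{i+2}(t)$ (failure of cases~$3$ and~$4$, i.e.\ $\alpha_i(t),\alpha_{i+1}(t)>\shortenAngle$); and $\|u_i(t)+u_{i+1}(t)+u_{i+2}(t)\|=d(p_{i-1}(t),p_{i+2}(t))\ge 2$ (failure of case~$1$). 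Let $e$ be the unit vector along $u_i(t)$ and split all vectors into their $e$-component and orthogonal component. The orthogonal component of $s:=u_i(t)+u_{i+1}(t)+u_{i+2}(t)$ has absolute value at most $(\|u_{i+1}(t)\|+\|u_{i+2}(t)\|)\sin\frac{\pi}{8}\le 2\sin\frac{\pi}{8}$, so $\|s\|\ge 2$ forces its $e$-component $s_\parallel$ to satisfy $|s_\parallel|\ge 2\cos\frac{\pi}{8}$. As the $e$-component of $u_{i+1}(t)$ is non-positive and at least $-1$, we have $s_\parallel\ge -1>-2\cos\frac{\pi}{8}$, hence $s_\parallel\ge 2\cos\frac{\pi}{8}$ and therefore $\|u_i(t)\|+\|u_{i+2}(t)\|\cos\theta\ge s_\parallel\ge 2\cos\frac{\pi}{8}$. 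Since also $\|u_i(t)\|\le 1$ and $\|u_{i+2}(t)\|\cos\theta\le 1$, both $\|u_i(t)\|$ and $\|u_{i+2}(t)\|\cos\theta$ lie in the interval $[\,2\cos\frac{\pi}{8}-1,\;1\,]$, which has length $2-2\cos\frac{\pi}{8}$. Finally $\|\delta\|^2=\bigl(\|u_{i+2}(t)\|\cos\theta-\|u_i(t)\|\bigr)^2+\bigl(\|u_{i+2}(t)\|\sin\theta\bigr)^2\le\bigl(2-2\cos\frac{\pi}{8}\bigr)^2+\sin^2\frac{\pi}{8}$, and a short numerical check shows this is $<\frac14$; thus $\|\delta\|<\frac12$ and $y_i(t+1)\ge y_i(t)-\frac12$.

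The step I expect to be the real obstacle is this last estimate. Using only that $u_i(t)$ and $u_{i+2}(t)$ are nearly parallel is not enough -- their lengths could still differ by almost $\frac12$, making $\|\delta\|$ too large -- so one must genuinely invoke the fact that the triple could not be collapsed by a \jointMerge{}, i.e.\ $\|u_i(t)+u_{i+1}(t)+u_{i+2}(t)\|\ge 2$, to conclude that $u_i(t)$ and $u_{i+2}(t)$ are in addition nearly equal (and nearly maximal) in length. Keeping straight which of the several failed-case conditions provides which inequality, and the signs in the orthogonal decomposition, is the only other point needing care.
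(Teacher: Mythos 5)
Your reading of $\alpha_i(t)$ is the opposite of the one the paper actually works with, and both halves of your argument depend on it. Although the model section literally says ``the angle between $u_i(t)$ and $u_{i+1}(t)$'', the paper uses $\alpha_i(t)$ as the \emph{interior} angle of the chain at $p_i$, i.e.\ the angle between $-u_i(t)$ and $u_{i+1}(t)$: this is forced by the law-of-cosines formula $c=\sqrt{a^2+b^2-2ab\cos(\alpha_i(t))}$ for $c=\|u_i(t)+u_{i+1}(t)\|$ in \Cref{lemma:shortenLargeProgress}, and by the claim that $y_{i+1}(t)>y_i(t)$ \emph{because} $\alpha_i(t)\ge\frac{7}{8}\pi$ in the discussion preceding \Cref{lemma:runStopLemma}. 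Under that convention, $\alpha_i(t)>\shortenAngle$ means the chain is locally \emph{almost straight}, not doubled back. Hence your first step fails: with $\|u_i(t)\|=\|u_{i+1}(t)\|=1$ and interior angle near $\pi$ one has $d(p_{i-1}(t),p_{i+1}(t))$ near $2$, case~1 does not fire, and the isolated \hop{} is in fact the generic operation that transports a run along a straight stretch of chain -- it cannot be discarded. The same sign error breaks the \jointHop{} half: the failure of cases~3 and~4 places $u_{i+1}(t)$ within $\frac{\pi}{8}$ of $+u_i(t)$ and of $+u_{i+2}(t)$, not of $-u_i(t)$, so its $e$-component is non-negative; then $\|u_i(t)+u_{i+1}(t)+u_{i+2}(t)\|\ge 2$ is already satisfied with $\|u_i(t)\|=\frac12$ and $\|u_{i+1}(t)\|=\|u_{i+2}(t)\|=1$, and your conclusion that $\|u_i(t)\|$ and $\|u_{i+2}(t)\|\cos\theta$ are confined to an interval of length $2-2\cos\frac{\pi}{8}$ collapses.

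The ingredient the paper actually relies on, and which your proof never invokes, is that the run vector $u_i(t)=v_\kappa$ has length at least $\frac12$ (otherwise the initiating robot would have executed a \merge{}); the displacement $u_{i+1}(t)-u_i(t)$ (resp.\ $u_{i+2}(t)-u_i(t)$) is then a difference of two nearly parallel, \emph{equally oriented} vectors with lengths in $[\frac12,1]$, which is where the constant $\frac12$ comes from. Note also that your strengthened target $\|\delta\|\le\frac12$ is unattainable even in the correct picture: with $\|u_i(t)\|=\frac12$, $\|u_{i+1}(t)\|=1$ and an angle of $\frac{\pi}{8}$ between them, $\|u_{i+1}(t)-u_i(t)\|=\sqrt{5/4-\cos\frac{\pi}{8}}\approx 0.571$, so any proof must exploit the one-sidedness of the $y$-bound (and even then the $\frac{\pi}{8}$ slack pushes the worst case slightly below $-\frac12$, a point the paper's own proof glosses over by treating only the parallel case).
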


\begin{proof}
	Observe first that every run-vector has a length of at least $\frac{1}{2}$, otherwise the robot that initiated the \runState{} would have immediately executed a \merge{}.
	Assume now that $r_i$ has a run $\kappa$ with $r(\kappa,t+1) = r_{i+1}$ and $r_i$ executes a \hop{}.
	The largest distance to cover in vertical direction for $r_i$ is $\frac{1}{2}$, in case $\|v_{\kappa}\| = \frac{1}{2}$and $y_{i+1}(t)  = y_i(t) -1$.
	Larger vectors $v_{\kappa}$ lead to a smaller distance moved in vertical direction, in case $\|v_{\kappa}\| > \|u_{i+1}(t)\|$, $r_i$ moves even upwards.
	Smaller vectors $u_i(t+1)$ have the same effect.
	Thus, $y_i(t+1) \geq y_i(t) - \frac{1}{2}$.
	The same holds for a \jointHop{}.
\end{proof}

\runStopLemma*

\begin{proof}
	Assume that the robot $r_i$ starts two runs $\kappa_1$ and $\kappa_2$ in round $t_0$.
	In round $t_0+1$, $r_i$ is located on the midpoint between $r_{i-1}$ and $r_{i+1}$.
	W.l.o.g.\ assume $r_i$ to be located in the origin of a global coordinate system with $r_{i+1}$ to be located on the $y$-axis above $r_i$ and $r_{i-1}$ to be located on the $y$-axis below $r_i$.
	The two run vectors are denoted as $v_{\kappa_1} = u_{i+1}(t_0+1)$ and $v_{\kappa_2} = - u_i(t_0+1)$ with $v_{\kappa_1} = - v_{\kappa_2}$.
	We prove exemplary for $\kappa_2$ that it stops before reaching $r_i$ again.
	The arguments for $\kappa_1$ are analogous.

	Suppose $\kappa_2$ does not stop due to a \shorten{}, \jointShorten{}, a \merge{} or a \jointMerge{}.
	This implies that $\kappa_2$ can only execute \hop{}s or \jointHop{}s in every round.
	Consider a round $t'$ with $r(\kappa_2,t') = r_j$ and $r(\kappa_2,t'+1) = r_{j-1}$.
	If $r_j$  executes a \hop{} or \jointHop{}, it holds $\alpha_{j}(t) > \frac{7}{8} \pi$.
	Thus, $y_{j-1}(t') < y_{j}(t')$.
	Since this holds for every \hop{} and \jointHop{} executed by $\kappa_2$ it also holds $y_{j-1}(t') < y_i(t) = 0$.
	%	It must hold $y_b(t') < 0 = y_i(t_0)$.
	%	In the following we want to compare $y_i(t)$ and the $y_b(t)$ and show that $y_b(t)$ decreases much faster than $y_i(t)$ such that the run vector $v_b$ cannot reach $r_i$ again and must stop via a \shorten{} or \merge{}.
	%	We consider now how $y_i(t)$ changes over time.
	%	$y_i(t)$ only changes if $r_i$ performs a \merge{}, a \shorten{} or a \hop{}.
	%	Suppose $r_i$ performs a \merge{} in round $t$.
	%	Then $y_i(t+1) \geq y_i(t) -1$ since in the worst-case $\alpha_{i}(t) = 0$ and the merge partner lies in distance $1$ below of $r_i$ (a larger distance is not possible since the chain is connected).
	%	The same holds for a \jointMerge{},
	%	Suppose $r_i$ executes a \hop{}.
	%	Then $y_i(t+1) \geq y_i(t) - \frac{1}{2}$.
	%	To see this observe first that every run-vector has a length of at least $\frac{1}{2}$ (otherwise the run would not have been started and a \merge{} would have been executed immediately).
	%	Thus the largest distance to cover in vertical direction for $r_i$ is $\frac{1}{2}$, in case the run vector has length exactly $\frac{1}{2}$ and the next robot is located at $y_i(t) -1$.
	%	The same holds for a \jointHop{}.
	%	Next, we consider a \shorten{}.
	%	Observe first $\alpha_i(t) > \frac{\pi}{3}$, otherwise a \merge{} can be executed.
	%	Thus, to maximize the distance covered in vertical direction consider the three involved robots to be the vertices of an equilateral triangle with side length $1$.
	%	The height of this triangle is $\frac{\sqrt{3}}{2}$.
	%	Thus, $y_i(t+1) \geq y_i(t) - \frac{\sqrt{3}}{2}$.
	We now bound $y_{\kappa_2}(t'+2)$:
	Since only hops are performed it must hold that the length of the next two vectors together is at least $1$ (otherwise the run stops and a \merge{} is executed).
	It follows that $y_{\kappa_2}(t'+2) \leq y_{\kappa_2}(t') - \cos \left(\frac{\pi}{8}\right) = y_{\kappa_2}(t') - \frac{\sqrt{2+\sqrt{2}}}{2}$.
	Thus, in every second round $y_{\kappa_2}(t)$ decreases by at least $\frac{\sqrt{2+ \sqrt{2}}}{2}$.

	We now compare the movements of $r_i$:
	At the same time $r_i$ can at most execute every second round an operation based on a run (since runs have a distance of $2$).
	In case of \shorten{}s or \jointShorten{}s, $y_i(t) \geq y_i(t) - \frac{\sqrt{3}}{2} > y_i(t) - \frac{\sqrt{2 + \sqrt{2}}}{2}$ (\Cref{lemma:shortenLemma}).
	In case of \hop{}s or \jointHop{}s, $y_i(t) \geq y_i(t) - \frac{1}{2}  > y_i(t) - \frac{\sqrt{2 + \sqrt{2}}}{2}$ (\Cref{lemma:hopLemma}).
	Thus, in case $r_i$ executes \shorten{}s, \jointShorten{}s, \hop{}s or \jointHop{}s, $\|y_i(t) - y_{\kappa_2}(t)\|$ decreases every second round by a constant.
	The same holds for \bisector{}s and \schlaefli{}s:
	By its definition, $y_i(t+1) \geq y_i(t) - \frac{1}{5}$ in case of a \bisector{}.
	For a \schlaefli{} it holds $y_i(t+1) \geq y_i(t) - \frac{1}{2}$ (since a robot jumps to the midpoint of two circular arcs).
	However, no two consecutive \schlaefli{}s can be executed since either the configurations is a regular star afterwards or the robot detects the asymmetry and does not execute a further \schlaefli{} (and generates an \runInit{} instead).
	As a consequence, also in case of \bisector{}s and \schlaefli{}s, $\|y_i(t) - y_{\kappa_2}(t)\|$ decreases by a constant.

	It remains to argue about \merge{}s and \jointMerge{}s.
	It can happen that $y_i(t)$ decreases by $1$ (\Cref{lemma:mergeLemma}).
	However, since all runs in distance $4$ are stopped and no further run in the neighborhood of $r_i$ is started within the next $4$ rounds, this can at most happen every fourth round.

	In the same time, however, $y_{\kappa_2}(t)$ decreases by at least $2 \cdot \frac{\sqrt{2+\sqrt{2}}}{2} = \sqrt{2+\sqrt{2}} > 1.8$.

	Thus, the only time in which $y_{\kappa_2}(t) > y_i(t)$ can hold is within the first $4$ rounds after the runs $\kappa_1$ and $\kappa_2$ have been started, more precisely only in the rounds $t_0+3$ and $t_0+4$ since $r_i$ can execute its first \merge{} or \jointMerge{} earliest in round $t_0+2$.
	For every round $t' > t_0+4$ it always holds $y(\kappa_2,t') < y_i(t')$.
	Hence, $\kappa_2$ cannot reach $r_i$ in a round $t' > t_0 + 4$ since $\kappa_2$ can only execute \hop{}s or \jointHop{}s and to execute a further hop when located at $r_{i+1}$ it must hold $y_i(t') < y_{\kappa_2}(t')$ which is a contradiction.
	In case $\kappa_2$ reaches $r_i$ again in round $t_0+3$ or $t_0+4$, the chain only has $5$ robots left such that the robots move towards the center of the smallest enclosing circle of all robots and all runs are stopped.

\end{proof}

\totalCountRunLemma*

\begin{proof}
	At most $n$ merges or \jointMerge{}s can happen (\Cref{lemma:mergeCount}).
	Additionally, there can be at most $\frac{n \cdot \left(1+ \frac{1}{5}\right)}{0.019}$ runs that stop with a \shorten{} or \jointShorten{} and two vectors of length at least $\frac{1}{2}$ (\Cref{lemma:shortenLargeProgress}).
	The number of \shorten{}s in which one vector has a smaller length is bounded by $4n$ (\Cref{lemma:shortenSmall}).

	It can happen that a \runState{} is stopped via the progress of a different run or two runs together have progress (in \jointShorten{}s or \jointMerge{}s).
	Every \merge{} or \jointMerge{} can stop at most $4$ other runs.
	A \jointShorten{} can stop one other run (in case the two runs form a joint run-pair and only one of them executes the \shorten{}).
	Thus, every \shorten{} stops at most one additional run.
	Every \jointShorten{} and \jointMerge{} also stops one additional run because two runs together have progress.
	Lastly, in case of \merge{} and \jointMerge{}s, at most $4$ other runs are stopped.
	Thus we count at most $8n$ runs for \merge{}s and \jointMerge{}s, $2 \cdot \frac{n \cdot \left(1+ \frac{1}{5}\right)}{0.019}$ for \shorten{}s and \jointShorten{}s with vectors of length at least $\frac{1}{2}$, and $8n$ \runState{}s for  \shorten{}s and \jointShorten{}s in which one vector has a length of at most $\frac{1}{2}$.
	The total number of \runState{}s is hence upper bounded by $16n + 2 \cdot \frac{\left(1+\frac{1}{5}\right)}{\shortenConstant} \leq 143n$.
\end{proof}

\numberOfRoundsLemma*

\begin{proof}
	We make use of a witness argument here.
	Consider an arbitrary robot $r_i$ with $\mathit{init}(r_i) = \mathit{true}$.
	$r_i$ tries every $7$ rounds to start two new \runState{}s.
	In case it does not start new \runState{}s, it either sees an other \runState{} or it is blocked by a \merge{} of an other \runState{} within the last $4$ rounds.
	Since every \runState{}s moves to the next robot in every round, $|N_i(t)| = 7$ and no run visits the same robot twice (\Cref{lemma:runStopLemma}), $r_i$ can at most twice be prevented from starting new \runState{}s by the same \runState{}.
	Thus, every $7$ rounds, $r_i$ either starts two new \runState{}s or is hindered by a \runState{}, however it can only be hindered by the same run twice.
	We say that in $7k$ rounds, for an integer $k$, $r_i$ is a witness of $k$ \runState{}s.

	It can however happen that $r_i$ executes a \merge{} in some round $t$.
	W.l.o.g., we assume that $r_i$ merges with $r_{i+1}$.
	Now, we have to consider three cases.
	Either $\mathit{init}(r_{i+1},t+1) = \mathit{true}$, $\mathit{init}(r_{i+2},t+1) = \mathit{true}$ or $\mathit{init}(r_{i+1},t+1) = \mathit{init}(r_{i+2},t+1) = \mathit{false}$.
	Assume that $\mathit{init}(r_{i+1},t+1) = \mathit{true}$.
	In the next iteration when $r_{i+1}$ tries to generate new \runState{}s it can happen that it is hindered by \runState{} that we have already seen at $r_i$.
	However, $7$ rounds later, this cannot be the case anymore because $r_i$ and $r_{i+1}$ have been direct neighbors.
	The same argument holds if $\mathit{init}(r_{i+2},t+1) = \mathit{true}$.

	What remains is a single case where neither $\mathit{init}(r_{i+1},t+1) = \mathit{true}$ nor $\mathit{init}(r_{i+2},t+1) = \mathit{true}$.
	This can only happen if $r_{i+2}$ has an \runInit{} in round $t$ and executes a \merge{} in the same round.
	Then, either $r_{i+3}$ or $r_{i+4}$ has an \runInit{} or none of them if $r_{i+4}$ has an \runInit{} in round $t$ and executes a \merge{}.
	However, at the end of such a sequence there either must be a robot that has an \runInit{} or all \runState{}s along the chain execute a \merge{} in round $t$.
	We either continue counting at the robot that has an \runInit{} at the end of such a sequence or (if all \runState{}s along the chain have executed a \merge{}) we continue counting at an arbitrary new \runInit{} that will be generated in the next round.
	In both cases, we do not count any \runState{} that we have already counted at $r_i$ again.

	All in all, at most every $14$ rounds either a new run is generated or we count a run while counting the same run at most twice.
	Thus, after at most $2 \cdot 14 \cdot n \cdot \left(16+2 \cdot \frac{\left(1+\frac{1}{5}\right)}{\shortenConstant}\right) \leq 4013n$ rounds we have counted enough runs that are necessary to gather all robots in a single point (\Cref{lemma:totalCountRuns}).
	After at most $n-1$ additional rounds, every run has ended and the configuration is gathered.
	As soon as only $5$ robots are remaining in the chain, all robots move a distance of $1$ towards the center of the smallest enclosing circle of all robots.
	After $5$ more rounds, the robots have gathered.
	The lemma follows.
\end{proof}
\subsection{Symmetric Case}

\isogonalToRegularStarLemma*

\begin{proof}
	Given that the entire configuration is an isogonal configuration, every robot executes a \schlaefli{}.
	All robots lie on the same circle in round $t$ and the \schlaefli{} ensures that all robots continue to stay on the same circle because only target points on the boundary of the circle are computed.
	In case the diameter of the circle has a size of at most $2$, the robots gather in round $t+1$ in the midpoint of the circle.
	Now assume that the diameter has a size of at least $2$.
	This implies that no pair of robots can be connected via a vector that describes the diameter of the circle.
	Thus, the circular arcs $L_{\alpha}$ and $L_{\beta}$ are unique.
	Consider now a robot $r_i$ and its neighbors $r_{i-1}$ and $r_{i+1}$.
	W.l.o.g.\ assume that $\|u_i(t)\| < \|u_{i+1}(t)\|$ and thus $L_{\alpha}$ connects $r_i$ and $r_{i-1}$ and $L_{\beta}$ connects $r_i$ and $r_{i+1}$.
	In the \schlaefli{}, the robots $r_i$ and $r_{i+1}$ move towards each other and the robots $r_i$ and $r_{i-1}$ move away from each other.
	Moreover, $r_i$ enlarges $L_{\alpha}$ by $r \cdot \left(\frac{\beta-\alpha}{4}\right)$ and reduces $L_{\beta}$ by the same value.
	Since $r_{i-1}$ enlarges $L_{\alpha}$ by the same distance and $r_{i+1}$ reduces $L_{\beta}$ by the same value, the new circular arcs can be computed as follows:

	\begin{enumerate}
		\item $L_{\alpha}(t+1) = L_{\alpha} + 2 \cdot r \cdot \left(\frac{\beta-\alpha}{4}\right) = r \cdot \alpha + r \cdot \left(\frac{\beta-\alpha}{2}\right) = r \cdot \left(\frac{\alpha + \beta}{2}\right)$
		\item $L_{\beta}(t+1) = L_{\beta} - 2 \cdot r \cdot \left(\frac{\beta-\alpha}{4}\right) = r \cdot \beta - r \cdot \left(\frac{\beta-\alpha}{2}\right) = r \cdot \left(\frac{\alpha + \beta}{2}\right)$
	\end{enumerate}

	Thus, $L_{\alpha}(t+1) = L_{\beta}(t+1)$. Since this holds for every robot, the configuration is a regular star configuration at time $t+1$.
\end{proof}

The next step is to prove that \bisector{}s executed in regular star configurations lead to a linear gathering time.
For this, we analyze the regular star configuration represented by the regular polygon $\{n/1\}$ with edge length $1$.
In $\{n/1\}$ and edge length $1$, the inner angles are of maximal size and the distances robots are allowed to move towards the center of the surrounding circle are minimal.
Thus, it is enough to prove a linear gathering time for $\{n/1\}$ with edge length $1$.
Afterwards, we can conclude a linear gathering time for any regular star configuration.
In the following, we remove for simplicity the assumption that a robot moves at most a distance of $\frac{1}{5}$ in a \bisector{}.
We multiply the resulting number of rounds with $5$ and get the same result.
For the proof, we introduce some additional notation.
Observe first that in a regular star configuration all angles $\alpha_{i}(t)$ have the same size, thus we simplify the notation to $\alpha(t)$ in this context.
Let $h(t) := \|p_i(t+1)-p_i(t)\|$ for any index $i$ be the distance of a robot to its target point.
This is again the same distance for every robot $r_i$.
In addition, define $d(t) := \|p_i(t)-p_{i+1}(t)\|$ for any index $i$.
The positions $p_i(t), p_{i+1}(t)$ and $p_i(t+1)$ form a triangle.
The angle between $h(t)$ and $d(t)$ has a size of $\frac{\alpha(t)}{2}$.
Let $y(t)$ denote the angle between $d(t)$ and the line segment connecting $p_{i+1}(t)$ and $p_i(t+1)$ and $\beta(t)$ be the angle between $h(t)$ and the line segment connecting $p_{i+1}(t)$ and $p_i(t+1)$.
See \Cref{figure:regularPolygonNotation} for a visualization of these definitions.

\begin{figure}[H]
	\centering
	\includegraphics[page=11, width=0.9\textwidth]{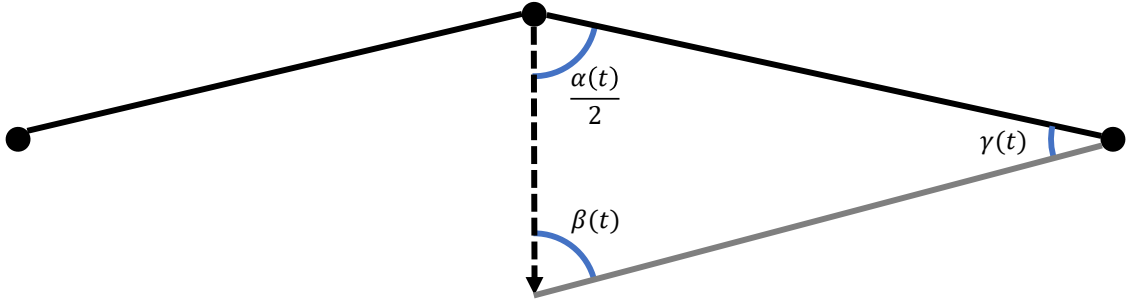}
	\caption{The notation for the analysis of regular star configurations.}
	\label{figure:regularPolygonNotation}
\end{figure}

Next, we state lemmas that analyze how the radius of the surrounding circle decreases.
Let $r(t)$ be the radius of the surrounding circle in round $t$ and $t_0$ the round in which the execution of the algorithm starts.

\begin{lemma} \label{lemma:regularPolygonRadius}
	$r(t+1) \leq r(t) - \sqrt{1- \frac{4 \pi^2 \cdot  r(t)^2}{n^2}}$.
	%- \frac{\pi}{n}$.
\end{lemma}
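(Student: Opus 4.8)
The plan is to reduce the whole dynamics to a one‑dimensional recursion on the circumradius and then bound it with two elementary trigonometric inequalities. Fix a round $t$ and place the common circumcenter of the regular polygon $\{n/1\}$ at the origin, so that $p_i(t)=r(t)\cdot(\cos i\theta,\sin i\theta)$ with $\theta:=\tfrac{2\pi}{n}$ the exterior angle of the polygon (equivalently the central angle subtended by one edge). In $\{n/1\}$ the interior angle at $r_i$ is $\pi-\theta$, and its bisector is exactly the ray from $p_i(t)$ through the origin. Since all robots execute a \bisector{} simultaneously and the configuration is fully symmetric, each robot stays on its own radial line, so $p_i(t+1)=r(t+1)\cdot(\cos i\theta,\sin i\theta)$ for some new circumradius $r(t+1)\le r(t)$; the analysis therefore only has to control the scalar sequence $r(t)$. (One may equivalently argue inside the triangle $p_i(t),p_{i+1}(t),p_i(t+1)$ of \Cref{figure:regularPolygonNotation} via the law of sines, but the circumcenter picture is more direct.)

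The key input is the defining property of the (uncapped) \bisector{}: the target point lies at distance exactly $1$ from both neighbors, so $\|p_i(t+1)-p_{i-1}(t)\|=1$. The vectors $p_i(t+1)$ and $p_{i-1}(t)$ emanate from the origin with lengths $r(t+1)$ and $r(t)$ and enclose the angle $\theta$, so the law of cosines gives $1=r(t+1)^2+r(t)^2-2\,r(t+1)\,r(t)\cos\theta$, a quadratic in $r(t+1)$ with roots $r(t)\cos\theta\pm\sqrt{1-r(t)^2\sin^2\theta}$. The \bisector{} moves the robot toward the interior, so the relevant root is the smaller one, $r(t+1)=r(t)\cos\theta-\sqrt{1-r(t)^2\sin^2\theta}$; the larger root is the point on the far side of the minimum‑distance point and is discarded (in the initial polygon of edge length $1$ it is exactly the undisplaced position $p_i(t)$, which the operation must leave). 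Finally, $\cos\theta\le 1$ gives $r(t)\cos\theta\le r(t)$, and $\sin\theta\le\theta=\tfrac{2\pi}{n}$ gives $\sqrt{1-r(t)^2\sin^2\theta}\ge\sqrt{1-r(t)^2\theta^2}=\sqrt{1-\tfrac{4\pi^2 r(t)^2}{n^2}}$, whence $r(t+1)\le r(t)-\sqrt{1-\tfrac{4\pi^2 r(t)^2}{n^2}}$.

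The step I expect to be the main obstacle is not any computation but the bookkeeping around root selection and the placement of $p_i(t+1)$ on the radial ray: one must check that the \bisector{} target really is the inner of the two unit‑distance points (so that $r(t+1)=r(t)-h(t)\ge 0$), which calls for a short case distinction on the size of $r(t)$ relative to the current edge length $d(t)=2r(t)\sin(\pi/n)$, and it is here — and only here — that removing the $\tfrac15$‑cap matters, since the cap only rescales the step and never reverses its direction. Everything else (one application of the law of cosines plus the inequalities $\cos x\le 1$ and $\sin x\le x$) is routine. Note also that the bound is only meaningful once $r(t)\le \tfrac{n}{2\pi}$, where the right‑hand side is real; the slower initial phase is handled separately in the subsequent runtime lemma.
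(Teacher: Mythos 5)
Your proof is correct and reaches the paper's bound, but by a different decomposition. The paper works in the edge triangle $p_i(t),p_{i+1}(t),p_i(t+1)$: it applies the law of sines with the interior half-angle $\alpha(t)/2$, expresses the auxiliary angle as $\arcsin\bigl(d(t)\sin(\alpha(t)/2)\bigr)$, relates $d(t)$ to $r(t)$ via the intercept theorem and the substitution $r(t_0)=\tfrac{n}{2\pi}$, and then simplifies with addition formulas before discarding the positive term $\tfrac{2\pi r(t)}{n}\sin(\tfrac{\pi}{n})$ and the factor $\cos^2(\tfrac{\pi}{n})$. You instead work at the circumcenter: symmetry pins each robot to its radial line, the defining property $\|p_i(t+1)-p_{i-1}(t)\|=1$ of the uncapped \bisector{} plus the law of cosines with central angle $\theta=\tfrac{2\pi}{n}$ yields the exact recursion $r(t+1)=r(t)\cos\theta-\sqrt{1-r(t)^2\sin^2\theta}$, and the lemma follows from $\cos\theta\le 1$ and $\sin\theta\le\theta$. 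Your route buys an exact closed form for $r(t+1)$ and avoids the paper's mid-derivation replacement of $r(t_0)$ by $\tfrac{n}{2\pi}$ (written there with equality signs), at the price of having to justify the root selection; your identification of the larger root with the undisplaced position when $d(t)=1$ settles that correctly. Both arguments share the same unresolved edge cases — the bound is vacuous while $r(t)>\tfrac{n}{2\pi}$ and the target can overshoot the center once $r(t)<1$ — so flagging rather than fully resolving them matches the paper's own level of detail.
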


\begin{proof}
	Since we are considering a regular polygon, $\alpha(t) = \frac{(n-2)\pi}{n}$ in every round $t$.
	By the law of sines, $h(t) = \frac{\sin (\gamma(t))}{\sin (\frac{\alpha(t)}{2})}$.
	Also, $\beta(t) = \arcsin (d(t) \cdot \sin (\frac{\alpha(t)}{2}) )$ and $\gamma(t) = \pi - \frac{\alpha(t)}{2} - \beta(t)$.
	Thus, $\sin(\gamma(t)) = \sin (\pi - \frac{\alpha(t)}{2} - \beta(t)) = \sin (\frac{\alpha(t)}{2} + \beta(t))$.
	Together with 	$\frac{r(t)}{r(t_0)} = \frac{d(t)}{d(t_0)}$ (intercept theorem), we obtain the following formula for $h(t)$:

	\begin{align*}
		h(t) = \frac{\sin \left(\gamma(t)\right)}{\sin \left(\frac{\alpha(t)}{2}\right)} & =  \frac{\sin \left(\frac{\alpha(t)}{2} + \beta(t) \right)}{\sin \left(\frac{\alpha(t)}{2}\right)}                                                                               \\
		& =  \frac{\sin \left(\frac{\alpha(t)}{2} + \arcsin \left(d(t) \cdot \sin \left( \frac{\alpha(t)}{2}\right)\right) \right)}{\sin \left(\frac{\alpha(t)}{2}\right)}                 \\
		& =  \frac{\sin \left(\frac{\alpha(t)}{2} + \arcsin \left( \frac{r(t)}{r(t_0)} \cdot \sin \left( \frac{\alpha(t)}{2}\right)\right) \right)}{\sin \left(\frac{\alpha(t)}{2}\right)} \\
		& =  \frac{\sin \left(\frac{(n-2)\pi}{2n} + \arcsin \left( \frac{r(t)}{r(t_0)} \cdot \sin \left( \frac{(n-2)\pi}{2n}\right)\right) \right)}{\sin \left(\frac{(n-2)\pi}{2n}\right)} \\
		& =  \frac{\sin \left(\frac{(n-2)\pi}{2n} + \arcsin \left( \frac{r(t)}{r(t_0)} \cdot \cos \left( \frac{\pi}{n}\right)\right) \right)}{\cos \left(\frac{\pi}{n}\right)}             \\
		& =  \frac{\sin \left(\frac{(n-2)\pi}{2n} + \arcsin \left( \frac{2\pi r(t)}{n} \cdot \cos \left( \frac{\pi}{n}\right)\right) \right)}{\cos \left(\frac{\pi}{n}\right)}             \\
		& =  \frac{\sin \left(\frac{\pi}{2} - \frac{\pi}{n} + \arcsin \left( \frac{2\pi r(t)}{n} \cdot \cos \left( \frac{\pi}{n}\right)\right) \right)}{\cos \left(\frac{\pi}{n}\right)}   \\
		& = \frac{\cos \left(\frac{\pi}{n} - \arcsin \left( \frac{2\pi r(t)}{n} \cdot \cos \left( \frac{\pi}{n}\right)\right) \right)}{\cos \left(\frac{\pi}{n}\right)}                    \\
		%	&= \sqrt{1- \frac{4\pi^2 \cdot r(t)^2}{n^2} \cdot \cos^2 \left(\frac{\pi}{n}\right)} - \frac{2\pi \cdot r(t)}{n} \cdot \sin \left(\frac{\pi}{n}\right)
	\end{align*}
	Next, we apply the trigonometric identities $\cos \left(x-y\right) = \sin \left(x\right) \cdot \sin \left(y\right) + \cos \left(x\right) \cdot \cos \left(y\right)$ and $\cos \left(\arcsin\left(x\right)\right) = \sqrt{1-x^2}$ and obtain:

	\begin{align*}
		h(t) & = \frac{\cos \left(\frac{\pi}{n} - \arcsin \left( \frac{2\pi r(t)}{n} \cdot \cos \left( \frac{\pi}{n}\right)\right) \right)}{\cos \left(\frac{\pi}{n}\right)} \\
		& = \sqrt{1- \frac{4\pi^2 \cdot r(t)^2}{n^2} \cdot \cos^2 \left(\frac{\pi}{n}\right)} + \frac{2\pi \cdot r(t)}{n} \cdot \sin \left(\frac{\pi}{n}\right)         \\
		& \geq \sqrt{1- \frac{4\pi^2 \cdot r(t)^2}{n^2} \cdot \cos^2 \left(\frac{\pi}{n}\right)}                                                                        \\
		& \geq \sqrt{1- \frac{4\pi^2 \cdot r(t)^2}{n^2}}                                                                                                                \\
	\end{align*}

	Finally, we can prove the lemma:

	\begin{align*}
		r(t+1)  =  r(t) - h(t)   \leq r(t) -  \sqrt{1- \frac{4\pi^2\cdot r(t)^2}{n^2}}
		% - \frac{\pi}{n}
	\end{align*}

\end{proof}

Now, define by $\Delta r(t) = r(t_0)-r(t)$.
For $\Delta r(t+1)$ we can derive the formula stated by the following lemma.

\begin{lemma} \label{lemma:xt}
	For $\Delta r(t) \geq 1$:
	$\Delta r(t+1) \geq \Delta r(t) +  \frac{\sqrt{\Delta r(t)}}{\sqrt{n}}$
\end{lemma}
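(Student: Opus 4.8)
The plan is to start from the recurrence of \Cref{lemma:regularPolygonRadius}, namely $r(t+1) \le r(t) - \sqrt{1 - \tfrac{4\pi^2 r(t)^2}{n^2}}$, and rewrite it in terms of $\Delta r(t) = r(t_0) - r(t)$. Subtracting both sides from $r(t_0)$ turns the inequality into
\[
\Delta r(t+1) \ge \Delta r(t) + \sqrt{1 - \frac{4\pi^2 r(t)^2}{n^2}},
\]
so the whole task reduces to showing that the square-root term is at least $\sqrt{\Delta r(t)}/\sqrt{n}$ whenever $\Delta r(t) \ge 1$. First I would record the initial value of the radius: for the regular polygon $\{n/1\}$ with edge length $1$ we have $d(t_0) = 1$ and $r(t_0) = \tfrac{1}{2\sin(\pi/n)}$, and using $\sin(\pi/n) \ge \tfrac{2}{n}$ (or a similar clean bound, valid for $n \ge 3$) this gives $r(t_0) \le \tfrac{n}{4} < \tfrac{n}{2\pi}$. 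In particular $\tfrac{2\pi r(t_0)}{n} < 1$, so the quantity under the root is genuinely positive, and since $r(t) = r(t_0) - \Delta r(t)$ is decreasing the root only grows as $t$ increases.

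Next I would substitute $r(t) = r(t_0) - \Delta r(t)$ and expand:
\[
1 - \frac{4\pi^2 r(t)^2}{n^2} = 1 - \frac{4\pi^2}{n^2}\bigl(r(t_0) - \Delta r(t)\bigr)^2 = \Bigl(1 - \tfrac{2\pi (r(t_0)-\Delta r(t))}{n}\Bigr)\Bigl(1 + \tfrac{2\pi (r(t_0)-\Delta r(t))}{n}\Bigr).
\]
The second factor is at least $1$ (since $r(t_0) - \Delta r(t) = r(t) \ge 0$), so it suffices to bound the first factor from below by $\Delta r(t)/n$. Writing $a := \tfrac{2\pi r(t_0)}{n} \in (0,1)$, the first factor equals $1 - a + \tfrac{2\pi}{n}\Delta r(t)$, and since $\tfrac{2\pi}{n}\Delta r(t) \ge \tfrac{1}{n}\Delta r(t)$ it is at least $\max\{\,1-a,\ \tfrac{1}{n}\Delta r(t)\,\}$; actually more simply, $1 - a \ge 0$ gives first factor $\ge \tfrac{2\pi}{n}\Delta r(t) \ge \tfrac{1}{n}\Delta r(t)$. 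Therefore
\[
1 - \frac{4\pi^2 r(t)^2}{n^2} \;\ge\; \frac{\Delta r(t)}{n},
\]
and taking square roots yields $\sqrt{1 - \tfrac{4\pi^2 r(t)^2}{n^2}} \ge \sqrt{\Delta r(t)}/\sqrt{n}$, which is exactly what is needed; the hypothesis $\Delta r(t) \ge 1$ is not even strictly required for this chain, though it is the regime of interest for the subsequent runtime argument.

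The main obstacle I anticipate is not the algebra above — which is routine once the factorization $1 - x^2 = (1-x)(1+x)$ is used — but rather pinning down the constant relating $r(t_0)$ and $n$ cleanly, i.e.\ establishing $\tfrac{2\pi r(t_0)}{n} < 1$ (equivalently $r(t_0) < \tfrac{n}{2\pi}$) with an explicit elementary bound on $\sin(\pi/n)$ valid for all $n \ge 3$, and making sure the monotonicity of $r(t)$ is invoked correctly so that the bound, once true at $t_0$, propagates for all $t$. One should also double-check the degenerate boundary: once $r(t)$ becomes small the configuration has essentially gathered (the diameter drops below $2$ and a single further step finishes), so the recurrence only needs to be iterated while $r(t)$ is bounded away from $0$, and the $\Delta r(t) \ge 1$ guard handles the very first steps where $\Delta r$ could be tiny. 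With \Cref{lemma:xt} in hand, a standard telescoping/comparison-with-an-ODE argument (comparing to $\Delta r' = \sqrt{\Delta r / n}$, whose solution grows quadratically in $t$) then shows $\Delta r$ reaches order $n$ — and hence the radius is exhausted — within $O(n)$ rounds, which is presumably how \Cref{lemma:regularStarRoundLemma} is derived afterwards.
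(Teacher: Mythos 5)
Your overall route matches the paper's: substitute $r(t) = r(t_0) - \Delta r(t)$ into the recurrence of \Cref{lemma:regularPolygonRadius} and reduce the claim to lower-bounding $1 - \frac{4\pi^2 r(t)^2}{n^2}$ by $\frac{\Delta r(t)}{n}$. The paper does this by taking $r(t_0) = \frac{n}{2\pi}$ (a normalization already baked into the proof of \Cref{lemma:regularPolygonRadius}, where $\frac{r(t)}{r(t_0)}$ is replaced by $\frac{2\pi r(t)}{n}$), expanding the square so that the constant terms cancel exactly, leaving $\frac{4\pi\Delta r(t)}{n}\left(1 - \frac{\pi\Delta r(t)}{n}\right)$, and bounding the last factor below by $\frac{1}{2}$ via $\Delta r(t) \le r(t_0)$. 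Your factorization $1-x^2 = (1-x)(1+x)$ is an equally serviceable decomposition and yields the same constant.

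However, the step that makes your first factor nonnegative is wrong. You claim $r(t_0) \le \frac{n}{4} < \frac{n}{2\pi}$; since $2\pi > 4$, the second inequality is backwards. Worse, the exact circumradius of $\{n/1\}$ with unit edges is $r(t_0) = \frac{1}{2\sin(\pi/n)} \ge \frac{n}{2\pi}$ (because $\sin x \le x$), so $a = \frac{2\pi r(t_0)}{n} \ge 1$ and $1 - a \le 0$: the inequality you need points the wrong way and cannot be repaired by a sharper elementary bound on $\sin(\pi/n)$. There are two ways out. Either adopt the paper's idealization $r(t_0) = \frac{n}{2\pi}$, in which case $1 - a = 0$ and your chain goes through verbatim; or keep the exact radius and absorb the deficit $1 - a = -\Theta(1/n^2)$ into the term $\frac{2\pi}{n}\Delta r(t)$ --- which works precisely because $\Delta r(t) \ge 1$. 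This also contradicts your parenthetical remark that the hypothesis $\Delta r(t) \ge 1$ is not strictly required: with the exact radius it is exactly what rescues the bound in the early rounds.
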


\begin{proof}
	First of all, observe $\Delta r(t+1)  = \Delta r(t) + h(t)$.

	\begin{align*}
		\Delta r(t+1) & \geq \Delta r(t) + \sqrt{1- \frac{4\pi^2 \cdot \left(r(t_0) - \Delta r(t) \right)^2}{n^2}}                                     \\
		& = \Delta r(t)  + \sqrt{1- \frac{4\pi^2 \cdot \left(\frac{n}{2\pi} - \Delta r(t) \right)^2}{n^2}}                               \\
		%    & = \Delta r(t) +  \sqrt{1- \frac{4\pi^2 \cdot \left(\frac{n^2}{4\pi^2} - \frac{n}{\pi}\Delta r(t) + \Delta r(t)^2\right)}{n^2}} \\
		%      & = \Delta r(t) +  \sqrt{1- \frac{\left(n^2 - 4\pi n \cdot \Delta r(t) + 4\pi^2 \cdot \Delta r(t)^2\right)}{n^2}}                \\
		%          & = \Delta r(t) +  \sqrt{1-1 + \frac{4\pi \cdot \Delta r(t)}{n} - \frac{4\pi^2 \cdot \Delta r(t)^2}{n^2}}                        \\
		& = \Delta r(t) +  \sqrt{\frac{4\pi \cdot \Delta r(t)}{n} - \frac{4\pi^2 \cdot \Delta r(t)^2}{n^2}}                              \\
		& = \Delta r(t) +  \frac{2 \sqrt{\pi} \sqrt{\Delta r(t)}}{\sqrt{n}}\sqrt{1 - \frac{\pi \cdot \Delta r(t)}{n}}                    \\
		& \geq  \Delta r(t) +  \frac{2 \sqrt{\pi} \sqrt{\Delta r(t)}}{\sqrt{2}\sqrt{n}}                                                  \\
		& =  \Delta r(t) +  \frac{\sqrt{2} \sqrt{\pi} \sqrt{\Delta r(t)}}{\sqrt{n}}                                                      \\
		%	&= \Delta r(t) +  \frac{\sqrt{2} \sqrt{\pi} \sqrt{\Delta r(t)}}{\sqrt{n}}- \frac{n-
		%	\Delta r(t)}{n} \cdot \sin \left(\frac{\pi}{n}\right) \\
		%	&=\Delta r(t) +  \frac{\sqrt{2} \sqrt{\pi} \sqrt{\Delta r(t)}}{\sqrt{n}}- \sin \left(\frac{\pi}{n}\right) + \frac{\Delta r(t)}{n} \cdot \sin \left(\frac{\pi}{n}\right)\\
		%	&\geq \Delta r(t) +  \frac{\sqrt{\pi} \sqrt{\Delta r(t)}}{ \sqrt{2}\sqrt{n}} \\
		& \geq   \Delta r(t) +  \frac{\sqrt{\Delta r(t)}}{\sqrt{n}}                                                                      \\
	\end{align*}

\end{proof}

\begin{lemma} \label{lemma:initialXt}
	After $2n$ rounds, $\Delta r(t) \geq 1$.
\end{lemma}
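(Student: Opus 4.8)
The plan is to show that, as long as $\Delta r(t) < 1$, the circumradius decreases by at least a fixed amount of order $1/n$ per round, so that $\Delta r$ is forced past $1$ within a linear number of rounds.

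First I would record two elementary observations. By \Cref{lemma:regularPolygonRadius} -- more precisely, by the closed form for $h(t)$ obtained in its proof -- we have
\[
	h(t) \;=\; r(t) - r(t+1) \;=\; \sqrt{1 - \tfrac{4\pi^2 r(t)^2}{n^2}\cos^2\!\big(\tfrac{\pi}{n}\big)} \;+\; \tfrac{2\pi r(t)}{n}\sin\!\big(\tfrac{\pi}{n}\big) \;\geq\; \tfrac{2\pi r(t)}{n}\sin\!\big(\tfrac{\pi}{n}\big) \;\geq\; 0 ,
\]
so $\Delta r(t+1) = \Delta r(t) + h(t)$ is non-decreasing in $t$; hence it suffices to prove $\Delta r(t_0 + 2n) \geq 1$. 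Second, recall the convention $r(t_0) = \tfrac{n}{2\pi}$ used in the analysis, so that $r(t) = \tfrac{n}{2\pi} - \Delta r(t)$.

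Now I would argue by contradiction: suppose $\Delta r(t) < 1$ for all $t \in \{t_0, \dots, t_0 + 2n\}$. Then $r(t) > \tfrac{n}{2\pi} - 1$ throughout, i.e.\ $\tfrac{2\pi r(t)}{n} > 1 - \tfrac{2\pi}{n}$. Plugging this into the bound above and using Jordan's inequality $\sin x \geq \tfrac{2}{\pi}x$ on $[0,\tfrac{\pi}{2}]$ (with $x = \tfrac{\pi}{n}$, which gives $\sin\tfrac{\pi}{n} \geq \tfrac{2}{n}$), I get
\[
	h(t) \;>\; \Big(1 - \tfrac{2\pi}{n}\Big)\sin\!\big(\tfrac{\pi}{n}\big) \;\geq\; \frac{2}{n}\Big(1 - \tfrac{2\pi}{n}\Big)
\]
for every $t$ in the window. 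Summing the $2n$ increments then yields
\[
	\Delta r(t_0 + 2n) \;=\; \sum_{\tau = t_0}^{t_0 + 2n - 1} h(\tau) \;>\; 2n \cdot \frac{2}{n}\Big(1 - \tfrac{2\pi}{n}\Big) \;=\; 4\Big(1 - \tfrac{2\pi}{n}\Big) ,
\]
which is at least $1$ as soon as $n \geq \tfrac{8\pi}{3}$, i.e.\ $n \geq 9$ -- contradicting $\Delta r(t_0+2n) < 1$. So the claim holds for all $n \geq 9$; the remaining cases $n \in \{6,7,8\}$ are checked directly (there $r(t_0)$ is a small constant, the polygon degenerates -- diameter at most $2$ -- within one or two rounds, and the robots gather at the center, so $\Delta r$ jumps to $r(t_0) \geq 1$ long before round $2n$).

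The step I expect to be the main nuisance is not the inequality itself but justifying that \Cref{lemma:regularPolygonRadius} may be applied for all $2n$ rounds, i.e.\ that the configuration remains a non-degenerate regular polygon so that $r(t+1)$ really is its circumradius and the formula for $h(t)$ is valid. Under the contradiction hypothesis this is automatic, since $r(t) \leq 1$ would already force $\Delta r(t) \geq \tfrac{n}{2\pi} - 1 \geq 1$ for $n \geq 9$; the only genuinely delicate spot is the small-$n$ bookkeeping, where the coarse convention $r(t_0) = \tfrac{n}{2\pi}$ slightly underestimates the true circumradius $\tfrac{1}{2\sin(\pi/n)}$ and one has to work with the exact value.
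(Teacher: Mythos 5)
Your proof is correct and follows essentially the same route as the paper: both extract the lower bound $h(t) \geq \frac{2\pi r(t)}{n}\sin\left(\frac{\pi}{n}\right)$ from the proof of \Cref{lemma:regularPolygonRadius}, observe that while $\Delta r(t) < 1$ the radius remains close to $r(t_0) = \frac{n}{2\pi}$, apply Jordan's inequality to get a per-round decrease of order $\frac{1}{n}$, and sum over $2n$ rounds. Your version is somewhat more careful than the paper's (explicit monotonicity of $\Delta r$, an explicit threshold $n \geq 9$ in place of the paper's ``sufficiently large $n$'', and separate treatment of small $n$), but the underlying argument is the same.
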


\begin{proof}
	In the proof of  \Cref{lemma:regularPolygonRadius}, we identified $h(t) \geq \frac{2\pi \cdot r(t)}{n} \cdot \sin \left(\frac{\pi}{n}\right)$.
	Now assume that $r(t) \geq \frac{r(t_0)}{2} = \frac{n}{4\pi}$ (which holds for sufficiently large $n$ since $\Delta r(t)  < 1$).
	Then, $h(t) \geq \frac{2\pi \cdot n}{4 \pi} \cdot \sin \left(\frac{\pi}{n}\right) = \frac{1}{2} \cdot \sin \left(\frac{\pi}{n}\right)$.
	For $n \geq 2$ it now holds $h(t) \geq \frac{\pi}{4n}$.
	Thus, after $2n$ rounds it holds $\Delta r(t) \geq 1$.
\end{proof}

\begin{lemma} \label{lemma:regularStarPolygonsGatheringTime}
	After at most $6n$ rounds, $\Delta r(t) \geq \frac{n}{2 \pi}$ and thus $r(t) = 0$.
\end{lemma}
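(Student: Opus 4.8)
The plan is to use the two preceding lemmas as a two-phase argument on the progress measure $\Delta r(t)$, and then convert the bound on $\Delta r$ into a bound on the radius. First I would invoke \Cref{lemma:initialXt} to pass the ``warm-up'' phase: after $t_0 + 2n$ rounds we are guaranteed $\Delta r(t) \geq 1$. From that point on, \Cref{lemma:xt} applies in every round and gives the recursion $\Delta r(t+1) \geq \Delta r(t) + \sqrt{\Delta r(t)}/\sqrt{n}$. The goal is to show this recursion reaches $\Delta r(t) \geq r(t_0) = \frac{n}{2\pi}$ within at most $4n$ additional rounds, so that the total is at most $6n$; and since $r(t) = r(t_0) - \Delta r(t)$ and the radius cannot go negative, $\Delta r(t) \geq \frac{n}{2\pi}$ forces $r(t) = 0$, i.e.\ all robots have gathered at the center.

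The core estimate is to bound how fast $\Delta r$ grows under $x \mapsto x + \sqrt{x}/\sqrt{n}$ starting from $x = 1$. I would argue this by a standard ``bucketing'' or continuous-comparison argument: compare the discrete recursion to the ODE $\dot{x} = \sqrt{x}/\sqrt{n}$, whose solution through $x(0)=1$ is $x(s) = \bigl(1 + \tfrac{s}{2\sqrt{n}}\bigr)^2$, so $x$ reaches $\frac{n}{2\pi}$ at time $s$ of order $\Theta(n)$; concretely, solving $\bigl(1+\tfrac{s}{2\sqrt n}\bigr)^2 \geq \tfrac{n}{2\pi}$ gives $s \leq 2\sqrt{n}\bigl(\sqrt{n/(2\pi)}-1\bigr) \leq \sqrt{2/\pi}\,n < 4n$. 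To make this rigorous at the discrete level I would either (a) note $\Delta r$ is nondecreasing and in the regime $\Delta r \leq \frac{n}{2\pi}$ each step increases $\sqrt{\Delta r(t)}$ by at least a fixed fraction of $1/\sqrt{n}$ — more precisely, $\sqrt{\Delta r(t+1)} \geq \sqrt{\Delta r(t) + \sqrt{\Delta r(t)}/\sqrt n} \geq \sqrt{\Delta r(t)} + c/\sqrt{n}$ for an absolute constant $c$ (using $\sqrt{a+b} \geq \sqrt a + b/(2\sqrt{a+b})$ and $\sqrt{a+b}\leq \sqrt{2}\sqrt{n/(2\pi)}$ in this range) — so $\sqrt{\Delta r}$ grows by $\Omega(1/\sqrt n)$ per round and hence reaches $\sqrt{n/(2\pi)}$ within $O(n)$ rounds; or (b) partition the rounds into phases where $\Delta r$ lies in $[2^k, 2^{k+1})$ and observe each such phase lasts at most $2^{k+1} / (2^{k/2}/\sqrt n) = 2^{(k/2)+1}\sqrt n$ rounds, a geometric sum that is $O(\sqrt{n}\cdot\sqrt{n/(2\pi)}) = O(n)$. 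Either way one extracts the explicit constant to land at $2n + 4n = 6n$.

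The main obstacle I anticipate is purely bookkeeping of constants: the two lemmas are stated with inequalities that lose factors (the $\sqrt 2\sqrt\pi \geq 1$ slack in \Cref{lemma:xt}, and the ``$n$ large enough'' caveat in \Cref{lemma:initialXt}), so I would need to verify that the $4n$ budget for the second phase is actually enough with the weakened bound $\Delta r(t+1) \geq \Delta r(t) + \sqrt{\Delta r(t)}/\sqrt n$, and handle the remaining small-$n$ cases separately (for bounded $n$, gathering in $O(n)$ rounds is trivial since each \bisector{} moves a fixed fraction of the way). A secondary subtlety is the factor-$5$ rescaling from dropping the $\tfrac{1}{5}$-cap on the \bisector{} step length: this was already absorbed in the statement's constant (the true analysis gives $6n$ capped moves, hence $30n$ in \Cref{regularStarRoundLemma}), so within this lemma I would keep working with the uncapped move and simply state that the cap multiplies the round count by $5$ as noted before the lemma. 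Once the $6n$ bound on reaching $\Delta r(t)\geq \frac{n}{2\pi}$ is established, the conclusion $r(t)=0$ is immediate and the lemma follows.
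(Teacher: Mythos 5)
Your proposal is correct and follows essentially the same route as the paper: a $2n$-round warm-up phase via \Cref{lemma:initialXt} until $\Delta r(t)\geq 1$, followed by the doubling argument on $\Delta r$ driven by \Cref{lemma:xt} (your option (b) is exactly the paper's geometric-sum-over-doubling-phases computation, and your option (a) is an equally valid alternative), summing to $6n$. The paper likewise does not spell out the small-$n$ caveat or the precise constant bookkeeping, so no substantive gap separates the two.
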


\begin{proof}
	We fix the first time step $t'$ such that $\Delta r(t') \geq 1$ (note that $\Delta r(t') \leq 2$ in this case since no robot moves more than distance $1$ per round).
	By \Cref{lemma:initialXt} this holds after at most $2n$ rounds.
	Furthermore, $\Delta r(t)$ doubles every $\sqrt{n} \cdot \sqrt{\Delta r(t)}$ rounds (\Cref{lemma:xt}).
	After at most $\log n$ doublings, it holds $\Delta r(t) \geq n$ and the robots are gathered.
	The first doubling requires $\sqrt{n} \cdot \sqrt{\Delta r(t')}$ rounds, the next doublings  $\sqrt{n} \cdot \sqrt{2\cdot \Delta r(t')}$, $\sqrt{n} \cdot \sqrt{4 \cdot \Delta r(t')}$, $\dots$ rounds.
	Thus, the number of rounds for $\log n$ doublings can be counted as follows:

	\begin{align*}
		\sqrt{\Delta r(t')} \cdot \sqrt{n}\sum_{k=1}^{\log n} \sqrt{2^k} = \sqrt{\Delta r(t')} \cdot \sqrt{n} \cdot \sqrt{2} \left(1+ \sqrt{2}\right) \left(\sqrt{n} -1\right) \leq 4n
	\end{align*}

	The total number of rounds can therefore be upper bounded by $4n + 2n = 6n$.

\end{proof}

\regularStarRoundLemma*

\begin{proof}
	In general, this bisectors of regular star configurations intersect in the center of $C$.
	Thus, with every \bisector{}, the distance of a robot to the center of $C$ decreases until finally all robots gather at the center.
	We prove the runtime for regular star configurations with Schl\"afli symbol $\{n/1\}$, which are also denoted as regular polygons.
	These polygons have inner angles $\alpha_{i}(t)$ of maximal size, for higher values of $d$ (referring to the Schl\"afli symbol $\{n/d\}$), the inner angles become smaller.
	Thus, for the regular polygon $\{n/1\}$, the distance a robot is allowed to move within a \bisector{} is minimal among all regular star polygons.
	\Cref{lemma:regularStarPolygonsGatheringTime} states that the regular star configuration $\{n/1\}$ with side length $1$ is gathered in $6n$ rounds.
	Since we did not consider the assumption that a robot moves a distance of at most $\frac{1}{5}$ per round, we multiply the result with $5$ and get an upper bound of $30n$ on the number of required rounds.
	Hence, all other regular star configurations can be gathered in linear time as the robot are allowed to move larger distances per round.

\end{proof}

%%%%%%%%%%%%%%%%%%%%%%%%%%%%%%%%%%%%%%%%%%%%%%%%%%%%%%%%%%%%%%%%%%%%%%%%%%%%%%%%%%%%%%%%%%%%%%%%%%%
%%%%%%%%%%%%%%%%%%%%%%%%%%%%%%%%%%%%%%%%%%%%%%%%%%%%%%%%%%%%%%%%%%%%%%%%%%%%%%%%%%%%%%%%%%%%%%%%%%%

%%
%% Bibliography
%%

%% Either use bibtex (recommended),

%\bibliographystyle{plain}
%\bibliography{threeDim}
%
%%% .. or use the thebibliography environment explicitly
%
%%%%%%%%%%%%%%%%%%%%%%%%%%%%%%%%%%%%%%%%%%%%%%%%%%%%%%%%%%%%%%%%%%%%%%%%%%%%%%%%%%%%%%%%%%%%%%%%%%%%
%%%%%%%%%%%%%%%%%%%%%%%%%%%%%%%%%%%%%%%%%%%%%%%%%%%%%%%%%%%%%%%%%%%%%%%%%%%%%%%%%%%%%%%%%%%%%%%%%%%%
%\newpage
%\appendix

\end{document}